\newcommand{\gossip}{{\small\textsc{GOSSIP}}\xspace}
\newcommand{\pull}{{\small\textsc{PULL}}\xspace}
\newcommand{\push}{{\small\textsc{PUSH}}\xspace}
\newcommand{\bigO}{O}
\newcommand{\local}{{\small\textsc{LOCAL}}\xspace}
\newcommand{\congest}{{\small\textsc{CONGEST}}\xspace}
\newcommand{\bias}{b}
\newcommand{\general}{general\xspace}
\newcommand{\msg}{ {\texttt{msg}}}
\newcommand{\brotau}{{\tau}^{\mathrm{B}}}
\newcommand{\bw}{\mathbf{w}}
\newcommand{\bx}{\mathbf{x}}
\newcommand{\by}{\mathbf{y}}
\newcommand{\bX}{\mathbf{X}}
\newcommand{\bY}{\mathbf{Y}}
\newcommand{\sP}{\mathcal{P}}
\newcommand{\sS}{\mathcal{S}}
\newcommand{\sM}{\mathcal{M}}
\renewcommand{\ell}{k}
\newcommand{\vinput}[1]{\bx_{#1}}
\def\tagform@#1{\maketag@@@{(\ignorespaces#1\unskip\@@italiccorr)}}
\newcommand{\twopart}[3]{$(#1,#2,#3)$-Two-Party Protocol\xspace}
\newcommand{\Prob}[1]{\mathbf{P}\left( #1 \right)}
\newcommand{\Probi}[1]{\mathbf{P}( #1 )}
\newcommand{\Ex}[1]{\mathbf{E}\left[ #1 \right]}
\def\E{{\mathbb E}}
\newcommand{\EEx}[2]{\mathbf{E}_{#1} \left[ #2 \right]}
\newcommand{\Var}[1]{\mathbf{Var}\left( #1 \right)}
\newcommand{\cond}{\, \vert \, }
\newcommand{\M}{\mathcal T}
\newcommand{\Confg}{\bf C}
\newcommand{\confg}{\bf c}
\newcommand{\conf}[1]{\bf c_{#1}}
\renewcommand{\le}{\leqslant}
\renewcommand{\leq}{\leqslant}
\renewcommand{\ge}{\geqslant}
\renewcommand{\geq}{\geqslant}
\renewcommand{\epsilon}{\varepsilon}
\newtheorem{theorem}{Theorem}[section]
\newtheorem{definition}[theorem]{Definition}
\newtheorem{lemma}[theorem]{Lemma}
\newtheorem{fact}[theorem]{Fact}
\newtheorem{corollary}[theorem]{Corollary}
\newtheorem{remark}[theorem]{Remark}
\title{Consensus Needs Broadcast in Noiseless Models but can be Exponentially Easier in the Presence of Noise}
\author{Andrea Clementi \\ 
    {\footnotesize{}Università di Roma Tor Vergata}\\
    {\footnotesize{} Rome, Italy}\\
    {\footnotesize{}\texttt{clementi@mat.uniroma2.it}} 
    \and Luciano Gual\`a \\
    {\footnotesize{}Università di Roma Tor Vergata}\\
    {\footnotesize{} Rome, Italy}\\
    {\footnotesize{}\texttt{guala@mat.uniroma2.it}} 
    \and Emanuele Natale\\
    {\footnotesize{}Max Planck Institute for Informatics}\\
    {\footnotesize{}Saarbr\"ucken, Germany }\\
    {\footnotesize{}\texttt{emanuele.natale@mpi-inf.mpg.de}} 
    \and Francesco Pasquale \\
    {\footnotesize{}Università di Roma Tor Vergata}\\
    {\footnotesize{} Rome, Italy}\\
    {\footnotesize{}\texttt{pasquale@mat.uniroma2.it}} 
    \and Giacomo Scornavacca \\
    {\footnotesize{} Università degli Studi dell'Aquila}\\
    {\footnotesize{} L'aquila, Italy}\\
    {\footnotesize{}\texttt{giacomo.scornavacca@graduate.univaq.it}} 
    \and Luca Trevisan \\
    {\footnotesize{}U.C. Berkeley}\\
    {\footnotesize{} Berkeley, CA, United States}\\
    {\footnotesize{}\texttt{luca@berkeley.edu}} 
}
\date{}
\begin{document}
\maketitle

\begin{abstract}
Consensus and Broadcast are two fundamental problems in distributed computing, whose solutions have several applications. Intuitively, Consensus should be no harder than Broadcast, and this can be rigorously established in several models.
Can Consensus be {\em easier} than Broadcast?

In models that allow noiseless communication, we prove a reduction of (a suitable variant of) Broadcast to binary Consensus, that preserves the communication model and all complexity parameters such as randomness, number of rounds, communication per round, etc., while there is a loss in the success probability of the protocol. Using this reduction, we get, among other applications, the first logarithmic lower bound on the number of rounds needed to achieve Consensus in the uniform \gossip  model on the complete graph. The lower bound is tight and, in this model, Consensus and Broadcast are equivalent. 

We then turn to distributed models with noisy communication channels that have been studied in the context of  some bio-inspired   systems. In such models, only one noisy bit is exchanged when a communication channel is established between two nodes, and so one cannot easily simulate a noiseless protocol by using error-correcting codes. An $\Omega(\epsilon^{-2} n)$ lower bound on the number of rounds needed for Broadcast is proved by Boczkowski et al. [PLOS Comp. Bio. 2018] in one such model (noisy uniform \pull, where $\epsilon$ is a parameter that measures the amount of noise).  
In such model, we prove a new $\Theta(\epsilon^{-2} n \log n)$ bound for Broadcast and a $\Theta(\epsilon^{-2} \log n)$ bound for binary Consensus, thus establishing an exponential gap between the number of rounds necessary for Consensus versus Broadcast.
\end{abstract}

\medskip
\noindent
\textbf{Keywords:} Distributed Consensus Algorithms,
Broadcast,  Gossip   Models, Noisy Communication.
\thispagestyle{empty}
\clearpage
\setcounter{page}{1} 

\section{Introduction}
\label{sec:OLD_INTRO}

In this paper we investigate  the relation between   Consensus and   Broadcast, which are two of the most fundamental  algorithmic problems in distributed computing 
\cite{Dijkstra74,Dolev00,PSL80,R83}, and we study how the presence or absence of communication noise affects their complexity. 
 
In the (Single-Source) {\em Broadcast} problem, one node in a  network has an initial message $\msg$ and the goal is for all the nodes in the network to receive a copy of $\msg$.

In the  \emph{Consensus} problem, each 
of the $n$ nodes of a network starts with an input value  (which we will also call an \emph{opinion}), and the goal is for all the nodes to converge to a configuration in which they all have the same opinion (this is the {\em agreement} requirement) and this shared opinion is one held by at least one node at the beginning (this is the {\em validity} requirement). In the {\em Binary} Consensus problem, there are only two possible opinions, which we denote by 0 and 1. 

In the (binary) {\em Majority Consensus} problem \cite{AAE07,DGMSS11,PVV09} we are given the promise that one of the two possible opinions is initially held by at least $n/2 + b(n)$ nodes, where $b(n)$ is a parameter of the problem, and the goal is for the nodes to converge to a configuration in which they all have the opinion that, at the beginning, was held by the majority of nodes. Note that Consensus and Majority Consensus are incomparable problems: a protocol may solve one problem without solving the other.\footnote{A Consensus protocol is allowed to converge to an agreement to an opinion that was initially in the minority (provided that it was held by at least one node), while a Majority Consensus protocol must converge to the initial majority whenever the minority opinion is held by fewer than $n/2 -b$ nodes. On the other hand, a Majority Consensus problem is allowed to converge to a configuration with no agreement if the initial opinion vector does not satisfy the promise, while a Consensus protocol must converge to an agreement regardless of the initial opinion vector.}

Motivations for studying the Broadcast problem are self-evident. Consensus
and Majority Consensus are simplified models for the way inconsistencies and disagreements are resolved in social networks, biological models and peer-to-peer systems \cite{Doty14,HouseHunt,MosselNT14}.\footnote{The Consensus problem is   often studied in models in which nodes are subject to malicious faults, and, in that case, one has  motivations  from network security. In this paper we concentrate on models in which all nodes honestly follow the prescribed protocol and the only possibly faulty devices are the communication channels.}

In distributed model that severely restrict the way in which nodes communicate (to model constraints that arise in peer-to-peer systems or in social or biological networks), upper and lower bounds for the Broadcast problem give insights on the effect of the communication constraints on the way in which information can spread in the network. The analysis of algorithms for Consensus often give insights on how to break symmetry in distributed networks, when looking at how the protocol handles an initial opinion vector in which exactly half the nodes have one opinion and half have the other. The analysis of algorithms for Majority Consensus usually hinge on studying the rate at which the number of nodes holding the minority opinion shrinks. 

If the nodes are labeled by $\{ 1,\ldots, n\}$, and each node knows its label, then there is an easy reduction of binary Consensus to Broadcast: node $1$ broadcasts its initial opinion to all other nodes, and then all nodes agree on that opinion as the consensus opinion. Even if the nodes do not have known identities, they can first run a {\em leader election} protocol, and then proceed as above with the leader broadcasting its initial opinion.  Even in models where leader election is not trivial,  the best known Consensus protocol has, in all the cases that we are aware of,  at most the ``complexity'' (whether it's measured in memory per node, communication per round, number of rounds, etc.) of the best known broadcast protocol.

The question that we address in this paper is whether the converse hold, that is, are there  ways of obtaining a Broadcast protocol from a Consensus problem or are there gaps, in certain models, between the complexity of the two problems?

Roughly speaking, we will show that, in the presence of noiseless communication channels, every Consensus protocol can be used to realize a weak form of Broadcast. Since, in many cases, known lower bounds for Broadcast apply also to such weak form, we get  new lower bounds for Consensus. In a previously studied, and well motivated, distributed model with noisy communication, however, we establish an exponential gap between Consensus and Broadcast.

%Although some of our results hold in greater generality, we will state %them in reference to the uniform \gossip\ model in the complete graph, %as defined below.

\subsection{Communication and computational models} \label{ssec:Models}

In order to state and discuss our results we first introduce some distributed models and their associated complexity measures.

We study protocols defined on a communication network, described by an undirected graph $G=(V,E)$ where $V$ is the set of nodes, each one running an instance of the distributed algorithm, and $E$ is the set of pairs of nodes between which there is a communication link that allows them to exchange data. When not specified, $G$ is assumed to be the complete graph.

In {\em synchronous parallel} models, there is a global clock and, at each time step, nodes are allowed to communicate using their links.

In the \local\ model, there is no restriction on how many neighbors a node can talk to at each step, and no restriction on the number of bits transmitted at each step. There is also no restriction on the amount of memory and computational ability of each node. The only complexity measures is the number of rounds of communication. For example, it is easy to see that the complexity of Broadcast is the diameter of the graph $G$. The \congest\ model is like the \local\ model but the amount of data that each node can send at each time step is limited, usually to $O(\log n)$ bits. 

In the (general) \gossip\  model \cite{DGHILSSST87,H15}, at each time step, each node $v$
chooses one of its neighbors $c_v$
and {\em activates} the communication link $(v,c_v)$, over which communication becomes possible during that time step, allowing $v$ to send a message to $c_v$ and, simultaneously, $c_v$ to send a message to $v$. We will call $v$ the {\em caller of} $c_v$. In the \push\ variant, each node $v$ sends a message to its chosen neighbor $c_v$; in the \pull\ variant, each node $v$ sends a message to its callers. Note that, although  each node chooses only one neighbor, some nodes may be chosen by several others, and so they may receive several messages in the \push\ setting, or send a message to several recipients in the \pull\ setting. In our algorithmic results for the \gossip\ model, we will assume that each message exchanged in each time step is only one bit, and our negative results for the noiseless setting will apply to the case of messages of unbounded length. In the  {\em uniform} \gossip\ (respectively \push or \pull) model, the choice of $c_v$ is done uniformly at random among the neighbors of $v$. This means that uniform models make sense even in anonymous networks, in which nodes are not aware of their identities nor of the identities of their neighbors.\footnote{In the general \gossip\ model in which a node can choose which incident edge to activate, a node must, at least, know its degree and have a way to distinguish between its incident edges.}

In this work, we are mainly interested in   models like  \gossip\ that severely restrict  communication~\cite{AAE07,AD,DGMSS11,Doty14,MosselNT14,PVV09}, both for efficiency consideration and because such
models capture aspects of the way consensus is reached in  biological population systems, and other domains of interest in network
science \cite{AAD+06,AFJ06,Dolev00,cardelli2012cell,Doty14,FHK17,HouseHunt}.
Communication  capabilities in such scenarios are  typically 
constrained and non-deterministic:  both features     are well-captured by uniform models.

Asynchronous variants of the \gossip model (such as \emph{Population Protocols}  \cite{AAE07,AAD+06})  have also been extensively studied  \cite{BGPS06,GNW16,PVV09}. In this variant,  no global clock is available to nodes. Instead, 
nodes are idle until 
a single node   is activated by a (possibly random) scheduler, either in discrete time or in continuous time. 
When a node wakes up, it activates one of its incident edges and wakes up the corresponding neighbor. Communication happens only between those two vertices, which subsequently go idle again until the next time they wake up.

Previous  studies show that, in both \push\ and \pull\ variants of uniform \gossip, 
  (binary) Consensus, Majority Consensus and Broadcast can be solved within logarithmic time (and work per node) in the complete graph, via elementary protocols\footnote{In the case of Majority Consensus, the initial additive bias must have size $\Omega(\sqrt{n \log n})$.}, with high probability (for short \emph{w.h.p.}\footnote{In this paper,  we say that an event $\mathcal{E}_n$ holds
\emph{w.h.p.} if $\Prob{\mathcal{E}_n}\ge 1 - n^{-\alpha}$, for some $\alpha >1$.})
  \cite{AAE07,BCNPST14,BGPS06,DGMSS11,GNW16,KSSV00} (see also
  Section \ref{ssec:othrelatedw}).
  Moreover, efficient protocols have been proposed for   Broadcast and Majority Consensus for some restricted families of graphs such as regular expanders and random graphs \cite{AD12,CLP09,chierichetti_almost_2010,CER14,GSa,MNRS14}.
 
  However, while for Broadcast  $\Omega(\log n)$ time and work are necessary in the complete graph \cite{BGPS06,GNW16,
KSSV00}, prior to this work, it was  still unknown whether a more efficient protocol existed for Consensus and Majority Consensus.

\subsection{Our contribution I:  Broadcast is ``no harder'' than Consensus over noiseless communication} \label{ssec:introlucio}

Our first result is a reduction of a weak form of Broadcast to Consensus (Theorem \ref{thm:mainlowerI}) which establishes, among other lower bounds, tight logarithmic lower bounds for Consensus and Majority Consensus both in the uniform \gossip (and hence uniform \pull and \push as well) model and in the general \push\ model.

To describe our result, it is useful to introduce the notion of nodes \emph{infected} by a source node in a distributed protocol: if $s$ is a designated source node in the network, then we say that at time $0$ the node $s$ is the only infected node and, at time $t$, a node is infected if and only if either it was infected at time $t-1$ or it received a  communication from an infected node at time $t$.

This notion is helpful in thinking about upper and lower bounds for Broadcast: any successful broadcast protocol from $s$ needs to infect all nodes from source $s$, and any protocol that is able to infect all nodes from source $s$ can be used to broadcast from $s$ by appending $\msg$ to each message originating from an infected node. Thus any lower bound for infection is also a lower bound for Broadcast, and any protocol for infection can be converted, perhaps with a small overhead in communication, to a protocol for Broadcast.
For example, in the \push model (either uniform or \general\footnote{See Section \ref{sssec:distsyst} for a formal definition of the two variants.}), the number of infected nodes can at most double at each step, because each infected node can send a message to only one other node, and this is the standard argument that proves  an $\Omega(\log n)$ lower bound for Broadcast.

In Theorem \ref{thm:mainlowerI} we show that lower bounds for infection {\em also give lower bounds for Consensus}. More precisely we prove that if we have a Consensus protocol that, for every initial  opinion vector, succeeds in achieving consensus with probability $1-o(1/n)$, then there is an initial opinion vector and a source such that the protocol infects all nodes from that source with probability at least $(1-o(1))/n$. Equivalently, if we are in a model in which there is no source for which we can have probability, say, $\geq 1/(2n)$ of infecting all nodes with certain resources (such as time, memory, communication per node, etc.), then, in the same model, and with the same resources, every Consensus protocol has probability $\Omega(1/n)$ of failing. For example, by the above argument, we have an $\Omega(\log n)$ lower bound for Consensus in the \push model (because, in fewer than $\log_2 n$ rounds, the probability of infecting all nodes is zero).

The proof uses a hybrid argument to show that there are two initial opinion vectors $\bx$ and $\by$, which are identical except for the initial opinion of a node $s$, such that there is at least a $(1-o(1))/n$ difference between the probability of converging to the all-zero configuration   starting from $\bx$ or from $\by$. Then, we argue that this difference must come entirely from runs of the protocol that fail to achieve consensus (which happens only with $o(1/n)$ probability) or from runs of the protocol in which $s$ infects all other nodes. Thus the probability that $s$ infects all nodes from the initial vector $\bx$ has to be $\geq (1-o(1))/n$.

As for   Majority Consensus,  we have a similar reduction, but from a variant of the infection problem in which there is an initial set of $b$ infected nodes.\footnote{Recall that $b$ is the value such that we are promised that
the majority opinion is held, initially, by at least $n/2 + b$ nodes.}

Lower bounds for infection are known in several models in which there where
no previous negative results for Consensus. We have not attempted to survey 
all possible applications of our reductions, but here we enumerate some of them:

\begin{itemize}
    \item In the uniform \gossip model (also known as uniform \push-\pull model), and  in  the \general\ \push\ model,   tight analysis (see \cite{H15,KSSV00} and Subsection \ref{ssec:appllowless})
    show that any protocol $\sP$ for the complete graph w.h.p.\ does not complete Broadcast  within less than $\beta \log n$ rounds, where $\beta$ is a sufficiently small constant. Combining this lower bound with our reduction result above, we get an  $\Omega(\log n)$ lower bound for Consensus. This is the first known lower bound for Consensus  showing  a full   equivalence between the complexity of Broadcast and Consensus in such models. Regarding Majority Consensus, we also obtain an $\Omega(\log n)$ lower bound for any initial bias $b=O(n^{\alpha})$, with $\alpha <1$.  
    
    \item In a similar way, we are able to prove a lower bound of
     $\Omega(n\log n)$ number of steps (and hence $\Omega(\log n)$ parallel time) or 
     $\Omega(\log n)$ number of messages per node for Consensus  on  an asynchronous variant of the  \gossip model, named \emph{Population Protocols} with uniform/probabilistic scheduler, as defined in \cite{AAE07}.
        
        \item 
         The last application we mention here concerns the synchronous  \emph{Radio Network} model \cite{ABLP91,BGI92,CDGLNN08,SW89}.
        Several optimal bounds have been obtained on the Broadcast time \cite{BGI92,CMS01,KP02,KLNOR10,KM98} while only few results are known for Consensus time   \cite{CDGLNN08,SW89}. In particular, we are not aware of better lower bounds other than the trivial  $\Omega(D)$ (where $D$ denotes the diameter of the network).
        Then, by combining  a previous lower bound in   \cite{ABLP91} on Broadcast with our reduction result, we get a new lower bound for Consensus in this model (see Subsection \ref{ssec:appllowless}).
         
\end{itemize}

We also mention that our reduction allows us to prove that some of  the above lower bounds hold also for a weaker notion of Consensus, namely $\delta$-Almost Consensus (where $\delta n$ nodes are allowed to not agree with the rest of the nodes), and even if the nodes have unbounded memory and can send/receive messages of unbounded size. We will expand on these comments in the technical sections.

\subsection{Our contribution II: Consensus over   noisy communication} \label{ssec:noisy}

We then turn to the study of distributed systems in which the communication links between nodes are noisy. We will consider a basic model of high-noise communication:
the binary symmetric channel \cite{MacK03} in which each exchanged bit is flipped independently at random with probability $1/2 - \epsilon$, where $0 \leq \epsilon < 1/2$, and we refer to $\epsilon$ as the \emph{noise} parameter of the model.

In models such as \local\ and \congest, the ability to send messages of logarithmic length (or longer) implies that, with a small overhead, one can encode the messages using error-correcting codes and simulate  protocols that assume errorless communication.

In the uniform \gossip\ model with one-bit messages, however, error-correcting codes cannot be used and, indeed, whenever the number of rounds is sublinear in $n$, most of the pairs of nodes that ever communicate only exchange a single bit.

The study of fundamental distributed tasks,
such as Broadcast and Majority Consensus, 
has been undertaken in the uniform \gossip\ model with one-bit messages and noisy links 
\cite{BFKN18,FHK17} as a way of modeling the restricted and faulty communication that takes place in biological systems, and as a way to understand how information can travel in such systems, and how they can repair inconsistencies. Such investigation falls
under the agenda of \emph{natural algorithms}, that is, the investigation of biological phenomena from an algorithmic perspective \cite{C12,navlakha2015distributed}.

In  \cite{FHK17}, the authors prove
that (binary) Broadcast and (binary) Majority Consensus can be solved in time $O(\epsilon^{-2} \log n)$, where $\epsilon$ is the noise parameter, in the uniform \push\ model with one-bit messages. They also prove a matching lower bound assuming that the protocol satisfies a certain symmetry condition, which is true for the protocol of their upper bound. This has
been later generalized to non-binary opinions
in \cite{fraigniaud_noisy_2018}. 

In the noisy uniform \pull\ model however, 
\cite{BFKN18} proves an  $\Omega(\epsilon^{-2}n)$ time lower bound\footnote{They actually proved a more general result including non-binary noisy channels.}. 
 This lower bound is proved even under   assumptions that strengthen the negative
  result, such as unique node IDs, full synchronization, and shared randomness (see
 Section 2.4 of \cite{BFKN18} for more details on this point). 
 
 Such a gap between noisy uniform \push\ and \pull\ comes from the fact that, in the \push\ model, a node is allowed to decline to send a message, and so one can arrange a protocol in which nodes do not start communicating until they have some confidence of the value of the broadcast value. In the \pull\ model, instead, a called node must send a message, and so the communication becomes polluted with noise from the messages of the non-informed nodes.
 
 What about Consensus and Majority Consensus in the noisy \pull\ model? Our reduction in Theorem \ref{thm:mainlowerI} suggests that there could be   $\Omega(\epsilon^{-2}n)$  lower bounds for Consensus and Majority Consensus, but recall that the reduction is to the infection problem, and infection is equivalent to Broadcast only when we have errorless channels.

   \subsubsection{Upper bounds in noisy uniform \texorpdfstring{\pull}{PULL}}

   We devise a simple and natural protocol for Consensus for the noisy uniform \pull model having convergence time   $O(\epsilon^{-2} \log n )$, w.h.p., thus exhibiting an exponential gap between Consensus and Broadcast in the noisy uniform \pull\ model.
   
   The protocol runs in two phases. In the first phase, each node repeatedly collects a batch of $O(1/\epsilon^2)$ pulled opinions and then updates its opinion to the majority opinion in the batch. This is done $O(\log n)$ times so that the first phase takes $O(\epsilon^{-2} \log n)$ steps. In the second phase, each node collects a batch of $O(\epsilon^{-2} \log n)$ pulled opinions and then updates its opinion to the majority opinion within the batch.
   
   The main result of the analysis is that, w.h.p., at the end of the first phase there is an opinion that is held by at least $n/2 + \Omega(n)$ nodes, and that if the initial opinions where unanimous then the initial opinion is the majority opinion after the first phase. Then, in the second phase, despite the communication errors, every node has a high probability of seeing the true phase-one majority as the empirical majority in the batch and so all nodes converge to the same valid opinion. 
   
   To analyze the first phase, we break it out into two sub-phases (this breakdown is only in the analysis, not in the protocol): in a first sub-phase of length $O(\epsilon^{-2} \log n)$, the protocol ``breaks symmetry'' w.h.p. and, no matter the initial vector, reaches a configuration in which one opinion is held by $n/2 + \Omega(\sqrt{n \log n})$ nodes. In the second sub-phase, also of length $O(\epsilon^{-2} \log n)$, a configuration of bias  $\Omega(\sqrt{n \log n})$ w.h.p. becomes a configuration of bias $\Omega(n)$.
   The analysis of this sub-phase for achieving Majority Consensus is similar to that in \cite{FHK17,fraigniaud_noisy_2018}.
   If the initial opinion vector is unanimous, then it is not necessary to break up the first phase into sub-phases, and one can directly see that a unanimous configuration maintains a bias $\Omega(n)$, w.h.p., for the duration of the first phase.

  A consequence of our analysis  is that, if the initial opinion vector has a bias $\Omega(\sqrt{n \log n})$, then the  protocol converges to the majority, w.h.p. So, we  get  a  Majority-Consensus protocol for this model under the above condition on the bias.

We also provide a Broadcast protocol that runs in $O(\epsilon^{-2} n\log n)$ steps in the noisy uniform PULL model, nearly matching the $\Omega(\epsilon^{-2} n)$ lower bound mentioned before. This protocol also runs in two phases. In the first phase, which lasts for order of $\epsilon^{-2} n \log n$ steps, the informed node responds to each PULL request with the message, and other nodes respond to each PULL request with zero. After this phase, each node makes a guess of the value of the message, and with high probability the number of nodes that make a correct guess is at least $n/2 + \Omega(\sqrt{n \log n})$. The second phase is a Majority Consensus protocol applied to the first-phase guesses, which, as discussed above, takes only $O(\epsilon^{-2} \log n)$ steps.

\subsubsection{Lower bounds in \texorpdfstring{noisy \pull models}{noisy PULL models}} 
We prove that  any  Consensus  protocol that has
error probability at most $\delta$   requires $\Omega\left(\epsilon^{-2} \log \delta^{-1} \right)$ rounds (Theorem \ref{thm:noisylower}).       
This shows that the complexity $O(\epsilon^{-2} \log n)$ of our protocol described above is tight for protocols that succeed w.h.p.
We remark that our result holds for any version (\general and uniform) of the noisy \pull model with noise parameter $\epsilon$, unbounded local memory,  even assuming unique node IDs.

In  \cite{FHK17}, an $\Omega\left(\epsilon^{-2} \log \delta^{-1} \right)$ round lower 
bound is proved for Majority Consensus in the uniform \push model, for a restricted class of protocols.
Their argument, roughly speaking, is that each node needs to receive a bit of information from the rest of the graph (namely, the majority value in the rest of the graph), and this bit needs to be correctly received with probability $1-\delta$, while using a binary symmetric channel with error parameter $\epsilon$. It is then a standard fact from information theory that the channel needs to be used
$\Omega(\epsilon^{-2} \log \delta^{-1})$ times.

It is not clear how to adapt this argument to the Consensus problem. Indeed, it is not true that every node receives a bit of information with high confidence from the rest of the graph (consider the protocol in which one node broadcasts its opinion), and it is not clear if there is a distribution of initial opinions
such that there is a node $v$ whose final opinion has mutual information close to 1 to the global initial opinion vector given the initial opinion of $v$ (the natural generalization of the
argument of \cite{FHK17}).

Instead, we show that there are two initial opinion vectors $\bx$ and $\by$,   a node $v$, and a bit $b$, such that the initial opinion of $v$ is the same in $\bx$ and $\by$, but the probability that $v$ outputs $b$ is $\leq \delta$ when the initial opinion vector is $\bx$ and $\geq \Omega(1)$ when the initial opinion vector is $\by$. Thus, the rest of the graph is sending $v$ a bit of information (whether the initial opinion vector is $\bx$ or $\by$) and the communication succeeds with probability $\geq 1-\delta$ when the bit has one value and with probability $\geq 1/3$ if the bit has the other value. Despite this asymmetry, if the communication takes place over a binary symmetric channel with error parameter $\epsilon$, a calculation using KL divergence shows that the channel has to be used $\Omega(\epsilon^{-2} \log \delta^{-1})$ times.

The $\Omega(\epsilon^{-2} n)$ lower bound of  \cite{BFKN18} for Broadcast in the uniform PULL model applies to protocols that have constant probability of correctly performing the broadcast operation. In Lemma \ref{lem:broadcastlower} we sketch a way of modifying their proof to derive an $\Omega(\epsilon^{-2} n\log n)$ 
for uniform PULL protocols for Broadcast that have high probability of success, matching the $O(\epsilon^{-2} n\log n)$
round complexity of our protocol mentioned above.

\subsection{Two separations that follow from our bounds}

% As final remark, we want to point out that previous results in \cite{FHK17,BFKN18} and our results establish  interesting complexity gaps between Consensus and Broadcast in presence/absence of noise for the \gossip models.  

We remark that our results establish two interesting separations.

The first, concerns the complexity gap between Consensus and Broadcast in the presence or absence of noise.
Informally, we prove   that, in the noiseless world, Broadcast and Consensus essentially have the same complexity in several natural  models (Corollary \ref{cor:noiselesslower}). 
On the other hand,   we show that there is a natural model where the presence of noise has reasonable motivations (\cite{BFKN18,FHK17}), namely the  noisy uniform \pull, for which the complexity of the two problems exhibits an exponential gap, since in this model Broadcast requires $\Omega(\epsilon^{-2}n)$ rounds \cite{BFKN18} while we prove that Consensus can be solved in $O(\epsilon^{-2}\log n)$ time (Theorem \ref{thm:upperbound}). 

%non ripeterei di nuovo
%Indeed, any protocol for Broadcast requires $\Omega(\frac{n}{\epsilon^2})$ %rounds \cite{BFKN18} while we prove that Consensus can be solved in %$O(\frac{1}{\epsilon^2}\log n)$ time.  

The second fact regards a separation between general \pull and \push models as far as   Consensus   is concerned in the noiseless world. Indeed, if we assume unique IDs, in the general \pull model, Consensus can be easily solved in constant time: every node can copy the opinion of a prescribed node by means of a single pull operation.
On the other hand, in the general \push model, our Broadcast-Consensus reduction 
shows that $\Omega(\log n)$ rounds are actually necessary for solving Consensus.

% Even though our study  is of theoretical nature and  work on models that neglect several aspects of natural processes,
% the current   set of results on this topic  might suggest   that   efficient and fast coordination observed in some  natural population systems, prone to communication noise,
% might be determined by consensus mechanisms which do not require  rumor-spreading processes started by  few leaders but, rather, from 
% a more fairly-distributed wake-up. We believe our result can provide useful suggestions to drive specific experiments  and tests on real 
% natural population systems, like the ones performed in \cite{BFKN18}.

% \begin{figure}
%     \centering
%     \includegraphics[scale=1
%     ]{trunk/table.pdf}
%     \caption{A comparison between Broadcast and Consensus in presence/absence of noise in the uniform \pull model.}
%     \label{fig:table}
% \end{figure}

\subsection{Other related work} \label{ssec:othrelatedw}
Consensus and Broadcast are fundamental algorithmic problems which have been the subject of a huge number of studies focusing on several distributed models and different computational aspects \cite{Dijkstra74,Dolev00,PSL80,R83}.
We here    briefly discuss those results which are more relevant w.r.t. to our contribution. 
     
%\subsection{\texorpdfstring{\gossip Models}{GOSSIP Models}}
%\label{ssec:gossiprelated}

%We provide a brief overview of the main related work in the %(\general) \gossip\  model \cite{DGHILSSST87,H15}, by %distinguishing the noiseless and noisy case.

\paragraph{Noiseless communication.}
 Classical results prove that on the uniform \push or \pull models, \emph{Rumor Spreading} (Broadcast) takes logarithmic time \cite{frieze_shortest-path_1985,KSSV00,pittel_spreading_1987}.  Then, a series of recent works 
has shown that simple uniform  \pull protocols   can quickly achieve   Consensus, Majority Consensus and Broadcast even  in the presence of a bounded number of   node crashes or Byzantine nodes \cite{PetraAL17,BCNPS15, BCNPST14, becchetti2016stabilizing,DDMM16,DGMSS11,GL17,KSSV00}.  
The logarithmic bound is known to be tight for  Broadcast  \cite{KSSV00},  while, as remarked earlier,  no non-trivial lower bounds are known for Consensus   in any variant of the \gossip model. 
Further  bounds are known for Broadcast and Majority Consensus   on graphs having good expansion properties (see for instance \cite{CLP09,chierichetti_almost_2010,chierichetti_rumour_2010,giakkoupis_tight_2011, GSa, giakkoupis_tight_2014}).
As for the \general\ \gossip model with special conditions on  node IDs, we mention the 
upper bound $O(\sqrt{\log n})$ obtained in \cite{AE17} which has been then improved to $\Theta(\log\log n)$ bound in \cite{HM14}.
  A further  issue is the minimum amount of \emph{randomness} necessary to solve Broadcast within a given time bound. In the \push model, this issue is investigated  in \cite{doerr_quasirandom_2008,doerr_quasi_random_2011}, where  upper bounds and tradeoffs are given.

\paragraph{Noisy communication.}
In Subsection \ref{ssec:noisy} we introduced and motivated the noisy communication model studied in \cite{BFKN18,FHK17,fraigniaud_noisy_2018} and adopted in this paper.
Another  model of noisy communication   for 
distributed systems is the one considered in \cite{ADHS17,CGH17}. Departing significantly from the model we adopt in this paper, here there is a (worst-case) adversary that can adaptively flip the  bits exchanged during the execution of any   protocol and the goal is to provide a robust version of the protocol   under the assumption that the adversary has a limited budget on the number of   bits it can change. Efficient solutions for such models  typically use silent rounds \cite{ADHS17} and error-correcting codes \cite{ADHS17,CGH17}. 
In \cite{E84} a different  task is studied in a model with  noisy interactions:   all $n$ nodes of a network hold a bit and they wish to transmit to a single receiver. This line of research culminated in the $\Omega(n \log \log n)$ lower bound on the number of messages shown in \cite{GKS08}, matching the upper bound shown in  \cite{G88}.

 \paragraph{Other communication models.}

In \cite{FP15},  Consensus   has been studied on a fault-free model. They provide bounds on the message complexity for deterministic protocols.
  As for Radio Networks, we have already discussed  the results for  static topologies \cite{ABLP91}.  We remark here that finding lower bounds for Consensus (and Leader Election) on a rather general model of \emph{dynamic} Radio Networks is an open question posed in \cite{KNR10}, where some lower bounds on the \emph{$k$-Token Dissemination} Problem (a variant of   Broadcast) have been derived. Another dynamic model of Radio Networks where lower bounds on Broadcast time have been derived can be found in \cite{CMPS07}. Even though we have still  not verified the applicability of our reduction result in these contexts, we believe this  might be  possible.
  Finally, we mention the works   \cite{CDGLNN08,SW89} that  consider faulty models (some with interference detectors), and provide complexity bounds on Consensus.

\iffalse

Emanuele
"The line of research initiated by El-Gamal [9], also studies a broadcast problem with noisy interactions. The regime however is rather different from ours: all n agents hold a bit they wish to transmit to a single 4receiver. This line of research culminated in the Ω(n log log n) lower bound on the number of messages shown in [13], matching the upper bound shown many years earlier in [12]."

\fi
  
\iffalse 
ClassicClassic research on Consensus focused on faulty models where   nodes and/or links are prone to crashes or Byzantine behaviours. This was also the   framework   the problem has been introduced on
\cite{Dijkstra74,Dolev00,PSL80,R83}.
\fi

\subsection{Roadmap of the Paper}
The rest of the paper is organized as follows.
In Section \ref{sec:prely}, 
  preliminary definitions are given which will be used all over the paper.
  Section \ref{sec:noiseless} deals with the noiseless case. In particular, we first describe the general reduction result of Broadcast to Consensus in   noiseless communication models and derives
  the main applications of this result to some specific models, and then, in Section \ref{sec:majoritylower}, we give the simple reduction of 
  (multi)-Broadcast to Majority Consensus showing that the latter requires
  logarithmic time in any noiseless uniform \gossip model.
  Section \ref{sec:noisylb} provides the lower bound on the noisy \pull model obtained by a reduction to an asymmetric Two-Party Protocol.
  In Section \ref{sec:noisyub}, we propose a simple majority protocol
  and show it solves Consensus and Majority Consensus in the 
  noisy uniform \pull model within $O(\epsilon^{-2}\log n)$ rounds. We also describe a protocol for Broadcast in the noisy uniform \pull running in $O(\epsilon^{-2} n \log n)$ rounds.
  Finally, some technical tools 
  are located in a separate appendix.

 \section{Preliminaries}
 \label{sec:prely}

\subsection{Distributed systems and communication models}
\label{sssec:distsyst}
Let $\sS$ be a distributed system formed by a set $V$ of $n$   nodes which mutually interact by exchanging messages over  a  connected  graph $G=(V,E)$ according to a fixed communication model $\sM$.
The definition of  $\sM$ includes all features of node communications including, for instance,  synchronicity or not and the presence of link faults. 

A configuration $\confg$ of  a distributed system $\sS$ is
the description of the states of all the nodes  of $\sS$ at a given time. 
If we execute a protocol $\sP$ for $\sS$, 
the random configuration the system lies in a generic  time is denoted as
$\Confg$. 

When, in the \gossip, \push or \pull models, at each round the communication is established with a random neighbor chosen independently and u.a.r., we call the communication model \textit{uniform}. 
In order to remark the difference with the uniform case, we  call  the communication model \emph{\general} when nodes are equipped with unique IDs which are known to all neighbors, and each node can choose the identity of the neighbor with which to communicate (possibly in a random way).

Finally, we distinguish two main communication scenarios.  
In the \emph{noiseless} models, every transmitted
message on a link of the graph is received safely, without any error. 

In the presence of \emph{communication noise}, instead,
each  bit of any transmitted  message is flipped independently at random with probability $1/2-\epsilon$, where $\epsilon \in (0,1/2]$ is the \emph{noisy} parameter. 
Then, in the sequel,  the version of each   model $\sM$, in which the presence of    communication noise  above is introduced, will be shortly denote as \emph{noisy} $\sM$.
Notice that, in order to capture the role of noise in systems where standard error correcting
codes techniques are not feasible, we consider models where each single point-to-point transmission consists of one bit only. In this way, we easily have that, in \gossip models, the bit-communication and the convergence time of a Protocol are strongly related.

\subsection{Consensus and Broadcast   in distributed systems}
\label{ssec:cons&broad}

Several versions of Consensus have been considered in the literature \cite{Dijkstra74,Dolev00,PSL80,R83}.
Since our interest  is  mainly focused  on  models having  strong constraints on communication (random, limited, and noisy),
we adopt 
some weaker, probabilistic variants of consensus, studied
in \cite{AAE07,becchetti2016stabilizing,DGMSS11}, that well captures  this  focus. 
 
Formally, we say 
a protocol guarantees  (binary) \emph{Consensus} if, starting
from any initial node opinion vector $\bx \in \{0,1\}^n$, the system reaches  w.h.p.    a   configuration where every node has the same opinion (\emph{Agreement}) and this opinion is \emph{valid}, i.e., it was supported by at least one node at the starting time. Moreover,
once the system  reaches this \emph{consensus} configuration, it  is required to stay there
for   any arbitrarily-large polynomial number of rounds, w.h.p. 
This \emph{Stability} property somewhat replaces the  \emph{Termination} property required by other,   classic notions of consensus introduced in stronger distributed models \cite{lynch_distributed_1996}.
%:
% we are here interested in scenarios where 
% nodes represent simple and anonymous computing units which are not necessarily able to detect %any global property.

In order to define Majority Consensus, we need to introduce the notion of bias of an (initial) opinion vector.
Given any vector $\bx \in \{0,1\}^n$, the \emph{bias} $\bias$ associated to $\bx$ is the difference between the number of nodes supporting opinion $1$ and the number of those supporting $0$ in $\bx$. The state of each node clearly depends on the specific protocol, however, we can always assume   it contains the current opinion of the node, so, we can also define the bias of a configuration $s(\confg)$. When the opinion vector (or the configuration) is clear from the context, we will just use the term $s$. The \emph{majority opinion} in a given vector (configuration) $\bx$ is the one having the largest number of nodes supporting it. With the term \emph{majority}, we will indicate the number of nodes supporting the majority opinion.
A   protocol guarantees (binary) \emph{Majority Consensus} if, starting
from any initial  opinion vector $\bx \in \{0,1\}^n$ with bias $s(\bx) >0$,
 the system reaches  w.h.p.\    a   configuration where every node has the same
opinion   and this opinion is the initial majority one. Moreover, we require the same stability property we define  for Consensus.

Both the notions of Consensus and Majority Consensus above can be further relaxed to those of $\delta$-\emph{Almost Consensus} and $\delta$-\emph{Almost Majority Consensus}, respectively. 
According to such weaker notions, we allow the system to converge to 
an almost-consensus regime where    $\delta n$      \emph{outliers} may have a different opinion from the rest of the nodes.
In this case, the fraction of outliers is a performance parameter of the protocol we will specify in the statements of our results.
Even in this weaker notion, we require the same  property of  Stability but we remark that the subset of outliers may change during the almost-consensus regime. 
We emphasize that all lower bounds we obtain in this paper holds
for such weaker versions of Almost Consensus, while the upper bound in Section
\ref{sec:noisyub} refers to (full) Consensus.

As discussed in the introduction, our work also deals with    
the (single-source) Broadcast task (a.k.a. \emph{Rumor Spreading}).
Given any source node $s\in V$ having an initial  message $\msg$, a \emph{Broadcast Protocol}
$\sP$  is a protocol that,  after a finite number of rounds, makes  every  node in $V$   receive a copy of  (and, thus,  be \emph{informed} about)    $\msg$, w.h.p.\footnote{The success probability of the protocol is here defined over both the random choices (if any) of the communication mechanism and the ones of $\sP$.}.
Similarly to Consensus, we also consider a weaker  version of Broadcast
%named $\delta$-\emph{Almost Broadcast}
where the  final number of informed nodes is required to be at most $(1-\delta)n$, w.h.p.

 \iffalse
\noindent
\textbf{Remark.}
A further crucial property of a  consensus protocol is 
\textit{self-stabilization} \cite{becchetti2016stabilizing,DGMSS11,PSL80}: Informally, if the system is ``perturbed'' by some external event and moved to an arbitrary configuration, then the protocol must bring the system back to a valid consensus  and, moreover,  once the system reaches consensus, it must remain in that configuration forever, unless a further  external event takes
place\footnote{Notice that, according to previous work \cite{becchetti2016stabilizing,DGMSS11}, 
we require self-stabilization to hold \emph{with high probability}.}.  Self-stabilizing  consensus processes  are   fundamental building-blocks that play an important role in coordination tasks and self-organizing behavior in population
systems~\cite{cardelli2012cell,C12,Doty14,NB11}. 
Another important version of  the consensus problem is   \emph{Majority Consensus} \cite{AAE07,BCNPST14,DGMSS11}.
 In this version, assuming the initial configuration has
some non-negligible \emph{bias} towards one of the possible opinions\footnote{According to previous literature,
 the bias of a binary configuration is the absolute difference
  between the number of nodes supporting $0$ and the number of those supporting 1.},
 a majority-consensus protocol must let the system converge and stabilize to 
  the monochromatic configuration where all nodes support the majority opinion. 
 \fi

\section{Noiseless Communication: Broadcast vs Consensus} \label{sec:noiseless}
In this section we provide our first main result (Theorem~\ref{thm:mainlowerI})
which establishes a strong connection between (Almost) Consensus and a
suitable, weaker version of (Almost) Broadcast in the  noiseless-communication
framework. We first describe the result in a rather general setting and then we
show its consequences, namely some lower bounds for the (Almost) Consensus
problem in specific communication models. 

Notice that in Section~\ref{sec:majoritylower}, we complement
Theorem~\ref{thm:mainlowerI} with an analogous lower bound for Almost-Majority
Consensus with sub-linear initial bias (see Theorem~\ref{lem:majlower}).  

Let $\sS$ be a distributed system formed by a set $V$ of $n$ nodes which
mutually interact over a support graph $G = (V,E)$ according to a fixed
communication model $\sM$. The crucial assumption we make in this section on
$\sM$ is the absence of communication noise (i.e. message corruption): whenever
a node $v$ transmits a message on one of its links, either this message is
received with no change or it is fully lost and, in the latter case, both
sender and receiver cannot get any information from the state of the
corresponding port (no fault detection).

Under the above noiseless framework, the next theorem essentially states
that (Almost) Consensus cannot be ``easier'' than (Almost) Broadcast. As we
similarly show in Section~\ref{sec:noisylb}, much of the technical difficulty
in reasoning on the valid-consensus problem arises from the high level of
freedom nodes have in agreeing on the final consensus value, since both values
are valid solution as long as not all nodes start with inputs that are already
identical. 

In order to state the reduction, we need to introduce a slightly-different
variant of Broadcast where, essentially, it is (only) required that \emph{some}
information from the source is spread on the network. Formally 

\begin{definition}\label{def:infection}
A protocol $\sP$ solves the $\gamma$-\emph{Infection} problem w.r.t. a source
node $s$ if it \emph{infects} at least $\gamma n$ nodes, where we define a node
\emph{infected} recursively as follows: initially only $s$ is infected; a node
$v$ becomes infected whenever it receives \emph{any} message from an infected
node.
\end{definition}

Notice that a protocol $\sP$ solving the $\gamma$-Infection problem w.r.t. a
source node $s$ can be easily turned into a protocol for broadcasting a message
\msg\ from $s$ to at least $\gamma n$ nodes. Indeed, we give the message \msg\
to the source node $s$, and we simulate $\sP$. Every time an infected node
sends a message, it appends \msg\ to it. Clearly, the size of each message in
$\sP'$ is increased by the size of \msg.

The next theorem is the main result of the section. Informally, it states that
any protocol for Consensus actually solves the Infection problem (when
initialized with a certain opinion vector) in a weak sense: the infection is
w.r.t. a source that depends on the consensus protocol in a (possibly)
uncontrolled manner; and (ii) the success probability of the infection is quite
low. Another intuitive way to look at the result is as follows: any consensus
protocol needs to solve the Infection problem from a certain source node when
it starts from a certain initial opinion vector.

\begin{theorem}
    \label{thm:mainlowerI}
    Let $\sP$ be a protocol reaching $\delta$-Almost Consensus with probability at
    least $1 - o(1/n)$. Then, a source node $s$ and an initial opinion vector $\mathbf{x}$ exist such that $\sP$, starting from $\mathbf{x}$, solves the
    $(1-2\delta)$-Infection problem w.r.t. $s$ with probability at least $(1 -
    o(1))/n$.
\end{theorem}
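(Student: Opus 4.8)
The plan is to reduce the claim to a one-coordinate ``sensitivity'' statement about the probability of reaching the all-$0$ agreement, and then to convert that sensitivity into infection via a coupling. For an initial vector $\mathbf{z}$, write $p_0(\mathbf{z})$ for the probability that $\sP$, started from $\mathbf{z}$, converges to (and stabilizes at) a $\delta$-almost agreement on opinion $0$, i.e.\ a configuration with at least $(1-\delta)n$ zeros. Validity together with the $1-o(1/n)$ success guarantee pin down the two endpoints of a hybrid: from $\mathbf 0$ the only valid agreement is on $0$, so $p_0(\mathbf 0)\ge 1-o(1/n)$; from $\mathbf 1$ a successful run agrees on $1$, which (for $\delta<1/2$) is incompatible with almost-agreement on $0$, so $p_0(\mathbf 1)\le o(1/n)$. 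Interpolating through the vectors $\bx^{(0)}=\mathbf 0,\bx^{(1)},\dots,\bx^{(n)}=\mathbf 1$, where $\bx^{(i)}$ has its first $i$ coordinates set to $1$, telescoping yields $\sum_{i}\big(p_0(\bx^{(i)})-p_0(\bx^{(i+1)})\big)\ge 1-o(1/n)$, so some consecutive pair $\bx:=\bx^{(i)}$, $\by:=\bx^{(i+1)}$ --- which differ only in the opinion of a single node $s:=i+1$ --- satisfies $p_0(\bx)-p_0(\by)\ge (1-o(1))/n$. This node $s$ will be the source and $\bx$ the initial vector in the conclusion.

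Next I would run $\sP$ from $\bx$ and from $\by$ on the same probability space, coupling all of the model's random choices (communication partners) together with all per-node protocol coins. The heart of the argument is the invariant that, under this coupling, every node that is \emph{not} infected by $s$ (in the sense of Definition~\ref{def:infection}) has exactly the same state in the two executions at every round. I would prove this by induction on the round: a non-infected node starts with the same opinion in $\bx$ and $\by$ (these vectors differ only at $s$); and if at some round it is reached only by messages originating from non-infected nodes, those senders are, by the inductive hypothesis, in identical states and use identical coins, hence activate the same links and send the same bits, so the receiver's updated state again coincides. The only way the two executions can diverge at a node is for that node to have received a message traceable back to $s$, i.e.\ to have been infected. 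Consequently, any node whose \emph{final} opinion differs between the two runs must be infected from $s$.

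Finally I would combine the sensitivity bound with this invariant. Since agreeing/not-agreeing on $0$ contributes to $p_0(\bx)-p_0(\by)$ only on runs where the two outcomes differ, we get $\Prob{\text{the two runs reach different almost-agreements}}\ge p_0(\bx)-p_0(\by)\ge (1-o(1))/n$. Intersecting with the event that both runs do reach almost-consensus (whose complement has probability $o(1/n)$), on such a run one execution ends with at least $(1-\delta)n$ zeros and the other with at least $(1-\delta)n$ ones; by inclusion--exclusion at least $(1-2\delta)n$ nodes hold different final opinions, and by the invariant all of them are infected from $s$. Hence $\Prob{\sP \text{ infects} \ge (1-2\delta)n \text{ nodes from } s \text{ started at } \bx}\ge (1-o(1))/n-o(1/n)=(1-o(1))/n$, which is exactly the statement. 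I expect the delicate step to be the coupling invariant: one must set up the coupling so that the communication structure itself (which link each node activates, and whether a message is sent at all) is a faithful function of the coupled states and coins, ensuring that information about $s$'s initial opinion can propagate only along infection paths. By contrast, the bookkeeping that turns the $\delta$-slack in the almost-consensus definition into the $(1-2\delta)$ guarantee, and that absorbs the $o(1/n)$ failure terms, is then routine.
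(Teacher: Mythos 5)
Your proposal is correct and follows essentially the same route as the paper: the same hybrid/telescoping argument over single-coordinate flips to locate a sensitive node $s$, followed by converting the $(1-o(1))/n$ probability gap into infection of $(1-2\delta)n$ nodes and absorbing the $o(1/n)$ failure probability via a union bound. The only difference is that you spell out the coupling of the two executions and the induction showing non-infected nodes stay identical, a step the paper's proof leaves implicit when it asserts that nodes changing their output must be infected by $k^*$.
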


\begin{proof}
Let $V = \{1,2, \dots,n\}$ be an arbitrary ordering of the vertices and, for
any $k = 0, \ldots, n$, let $\bx_k$ be the initial opinion vector in which the
first $k$ nodes start with $0$ and the other $n-k$ with $1$. Moreover, let
$Z_k$ be the indicator random variable taking value $1$ when $\sP$, starting
from $\bx_k$, reaches $\delta$-Almost Consensus on value $1$, otherwise $Z_k$
takes value $0$ (note that $Z_k = 0$ also when the protocol fails to reach $\delta$-Almost Consensus). 
    
Since the protocol converges to an almost consensus with probability at least
$1 - o(1/n)$, it must hold that 
\[ 
    \Ex{Z_0} \geqslant 1 - o\left(\frac{1}{n}\right) 
    \ \mbox{ and } \ 
    \Ex{Z_n} \leqslant o\left(\frac{1}{n}\right),
\]
and
\[
    \sum_{i = 1}^n \left(\Ex{Z_{i-1}} - \Ex{Z_i}\right) = \Ex{Z_0} - \Ex{Z_n}
    \geqslant 1 - o\left(\frac{1}{n}\right)\,.
\]
Hence, a node $k^*$ exists such that
\begin{equation}
    \label{eq:kappastar}
    \Ex{Z_{k^* - 1}} - \Ex{Z_{k^*}} 
    \geqslant \frac{1 - o(1)}{n} \,. 
\end{equation}
We now show that, when $\sP$ starts from opinion vector $\bx_{k^*}$, $\sP$ is
(also) solving the $(1-2\delta)$-Infection problem w.r.t. source node $k^*$.

First observe that $\Ex{Z_{k^* - 1}} - \Ex{Z_{k^*}} \leqslant \Prob{Z_{k^*-1}
= 1 \; \wedge \; Z_{k^*} = 0}$ and let us name $I_{k^*}$ the set of nodes
infected by node $k^*$, starting from opinion vector $\bx_{k^*}$. 
We will prove that, if $Z_{k^*-1} = 1$ and $Z_{k^*} = 0$ then either $|I_{k^*}| 
\geq (1-2\delta) n$ or $\sP$ fails to reach $\delta$-Almost Consensus starting
from $\bx_{k^*}$. More formally, if we name $\mathcal{F}$ the event
\[
\mathcal{F} = \left\{ \sP \mbox{ fails to reach $\delta$-Almost Consensus from
$\bx_{k^*}$}  \right\}
\]
we claim that  
\begin{equation}
\label{eq:implication}
    \{ Z_{k^*-1} = 1 \; \wedge \; Z_{k^*} = 0\}
    \, \Longrightarrow \;
    \{``|I_{k^*}| \geq (1-2\delta) n" \; \vee \; \mathcal{F} \}\,. 
\end{equation}
In order to obtain~(\ref{eq:implication}) we equivalently show that, if 
$Z_{k^*-1} = 1$, $Z_{k^*} = 0$, and $\sP$ does not fail starting from
$\bx_{k^*}$, it must hold that $|I_{k^*}| \geq (1-2\delta) n$. First observe
that, if $Z_{k^*-1} = 1$ then $\sP$ reaches almost consensus on $1$ starting
from $\bx_{k^*-1}$, therefore the number of nodes which output $1$ is at
least $(1-\delta)n$.  Moreover, if $Z_{k^*} = 0$, and $\sP$ does not fail
starting from $\bx_{k^*}$, then the number of nodes which output $1$ is at
most $\delta n$. Thus, moving the system from input $\bx_{k^*-1}$ to
input $\bx_{k^*}$, the number of nodes switching their opinion from $1$ to
$0$ must be at least $(1-2\delta)n$. Finally observe that, since $\bx_{k^*-1}$
and $\bx_{k^*}$ differ only at node $k^*$, the nodes that change their output
value must be infected by $k^*$ (according to Definition~\ref{def:infection}).
Hence, $|I_{k^*}| \geq (1-2\delta) n$, which implies that, when starting from
$\bx_{k^*}$, $\sP$ is also solving the $(1-2\delta)$-Infection problem w.r.t.
node $k^*$. 

To conclude the proof, it remains to bound (from below) the probability with
which this infection happens. From~(\ref{eq:kappastar}), (\ref{eq:implication})
and the union bound, it follows that
\begin{align*}
    \frac{1 - o(1)}{n} 
    & \leq \Ex{Z_{k^* - 1}} - \Ex{Z_{k^*}} 
    \leq \Prob{Z_{k^*-1} = 1 \wedge Z_{k^*} = 0} \\
    & \leq \Prob{``|I_{k^*}| \geq (1-2\delta) n" \vee \mathcal{F}}
    \leq \Prob{|I_{k^*}| \geq (1-2\delta) n} + o\left(\frac{1}{n}\right)\,.
\end{align*}
Thus,
\[
\Prob{|I_{k^*}| \geq (1-2\delta) n} \geq \frac{1 - o(1)}{n}\,.
\]
\end{proof}

\paragraph{Remark.} Observe that factor $2$ in the parameter $\gamma = (1 -
2\delta)$ in the statement of Theorem~\ref{thm:mainlowerI} is tight. Indeed,
consider a protocol in which each node outputs its input value: such protocol
is trivially a $(\delta = \frac{1}{2})$-Almost Consensus protocol, while the
number of infected node has size at most $1$.

\subsection{Specific lower bounds for Consensus} \label{ssec:appllowless}

Theorem~\ref{thm:mainlowerI} allows us to derive lower bounds for Consensus for
specific communication models and resources by using lower bounds for the
Infection problem. In fact, by simply restating Theorem~\ref{thm:mainlowerI},
we obtain the following.

\begin{corollary}
    \label{cor:noiselesslower}
    Let $\M$ be any fixed resource (e.g. time, work, bit-communication, etc.)
    defined on a distributed system $\sS$, and suppose that any protocol, which
    uses at most  $\brotau$ units of $\M$,  fails to solve the $(1-
    2\delta)$-Infection problem w.h.p. from any source node.  Then, any  protocol
    on this model reaching $\delta$-Almost Consensus w.h.p., must use at least
    $\brotau$ units of $\M$.
\end{corollary}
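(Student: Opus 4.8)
The plan is to prove the contrapositive of Theorem~\ref{thm:mainlowerI}, so essentially no new work is needed beyond carefully matching the probabilistic conventions. Suppose, for contradiction, that some protocol $\sP$ on the system $\sS$ reaches $\delta$-Almost Consensus w.h.p.\ while using at most $\brotau$ units of the resource $\M$. I first observe that ``reaching $\delta$-Almost Consensus w.h.p.'' means, by the paper's convention, that the success probability is at least $1 - n^{-\alpha}$ for some $\alpha > 1$; since $n^{-\alpha} = o(1/n)$ for $\alpha > 1$, this success probability is in particular at least $1 - o(1/n)$, which is exactly the hypothesis required by Theorem~\ref{thm:mainlowerI}.

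Applying Theorem~\ref{thm:mainlowerI} to $\sP$ then yields a source node $s$ and an initial opinion vector $\bx$ such that $\sP$, started from $\bx$, solves the $(1-2\delta)$-Infection problem w.r.t.\ $s$ with probability at least $(1-o(1))/n$. Crucially, this infection run is nothing but an execution of $\sP$ itself, so it consumes at most $\brotau$ units of $\M$, just like $\sP$.

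On the other hand, the hypothesis of the corollary states that every protocol using at most $\brotau$ units of $\M$ fails the $(1-2\delta)$-Infection problem w.h.p.\ from \emph{any} source. Reading this for the particular source $s$, the probability that $\sP$ solves the infection from $\bx$ is at most $n^{-\beta}$ for some $\beta > 1$, i.e.\ $o(1/n)$. This contradicts the lower bound $(1-o(1))/n$ obtained above, since $(1-o(1))/n$ exceeds $o(1/n)$ for all sufficiently large $n$. Hence no $\delta$-Almost Consensus protocol that succeeds w.h.p.\ can use at most $\brotau$ units of $\M$; equivalently, every such protocol must use more than $\brotau$ units, and in particular at least $\brotau$, as claimed.

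The argument is entirely a restatement, so the only point requiring care---and the one I would treat as the ``obstacle''---is the bookkeeping of the two distinct ``w.h.p.'' guarantees: the consensus guarantee and the infection-failure guarantee may carry different exponents $\alpha, \beta > 1$, and I must make sure both collapse to the asymptotic forms $1 - o(1/n)$ and $o(1/n)$ respectively, so that the gap between the $(1-o(1))/n$ lower bound from Theorem~\ref{thm:mainlowerI} and the $o(1/n)$ upper bound from the hypothesis genuinely produces the contradiction. No other step involves any computation.
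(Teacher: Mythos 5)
Your proof is correct and takes essentially the same route as the paper, which offers no separate argument and simply notes that the corollary is a restatement (contrapositive) of Theorem~\ref{thm:mainlowerI}. Your added bookkeeping of the two w.h.p.\ exponents is sound: since w.h.p.\ means probability at least $1-n^{-\alpha}$ with $\alpha>1$, the consensus guarantee meets the $1-o(1/n)$ hypothesis of the theorem, and the infection-failure guarantee caps the success probability at $o(1/n)$, contradicting the $(1-o(1))/n$ bound.
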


We now apply the above corollary in different settings. Unless differently
stated, all results in this subsection refers to the complete graph of $n$
nodes.

\paragraph{The \gossip model.}
The first lower bound is on the Consensus time for the uniform \gossip model (and hence for the uniform \push and the uniform \pull as well). We first 
state a simple technical result which will be handy also in the proof of Corollary \ref{cor:gossipmajority}. 
% This is a well-known result in the community, however, for the sake of completeness, its proof is given in Appendix~\ref{apx:omitted}. 
This is a well-known result in the community, however, for the sake of completeness, we give a self-contained proof.

\begin{lemma}
    \label{lem:gossipbroadcast}
    Consider  the uniform \gossip model and  fix 
    any constants $\alpha$ and $\gamma$ such that $0<\alpha,\gamma<1$. Then,  there is a sufficiently small constant $\beta_{\alpha,\gamma}>0$   such that, starting from any subset of infected nodes of size $O(n^\alpha)$,   the   $\gamma$-\emph{Infection} problem requires at least   $\beta\log n$   rounds, w.h.p.
\end{lemma}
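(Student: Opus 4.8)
The plan is to prove a high-probability upper bound on the growth rate of the infected set and deduce that, over $\beta\log n$ rounds, a set of initial size $O(n^\alpha)$ cannot reach size $\gamma n$. Let $I_t$ be the set of infected nodes after round $t$ and write $m_t = |I_t|$, so $m_0 = O(n^\alpha)$. In the uniform \gossip model on the complete graph each node $w$ activates an edge to an independent, uniform partner $c_w$, and communication is bidirectional; hence a node $w \notin I_t$ becomes infected in round $t+1$ exactly when either $w = c_u$ for some $u \in I_t$ (a \push\ infection) or $c_w \in I_t$ (a \pull\ infection). The \push\ infections number at most $m_t$ (one per infected caller), while the \pull\ infections are bounded by $X_t := |\{w : c_w \in I_t\}|$, which is stochastically dominated by a $\mathrm{Bin}(n, m_t/(n-1))$ random variable. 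This gives the deterministic recurrence
\[
    m_{t+1} \le 2 m_t + X_t, \qquad \Ex{X_t} \le \tfrac{n}{n-1}\, m_t \le 2 m_t,
\]
so in expectation the infected set grows by only a constant factor per round.

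The key step is to upgrade this to a tail statement: I would show constants $A$ and $c$ such that, for every round $t$ with $m_t \le \gamma n$,
\[
    \Prob{m_{t+1} \ge A\, m_t + c \log n} \le n^{-3}.
\]
The main obstacle is that a single multiplicative Chernoff bound is insufficient, since it degrades when $\Ex{X_t}$ is small (e.g.\ when the source is a single node). I would therefore split into two regimes. When $m_t \ge C\log n$ for a large enough constant $C = C(\gamma)$, one has $\Ex{X_t} \ge (1-\gamma) m_t \ge 9\log n$, and a multiplicative Chernoff bound yields $X_t \le 2\Ex{X_t} \le 4 m_t$ except with probability at most $n^{-3}$. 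When $m_t < C\log n$, the mean is small, and I would instead use the direct binomial-tail estimate
\[
    \Prob{X_t \ge c\log n} \le \binom{n}{c\log n}\Big(\tfrac{m_t}{n-1}\Big)^{c\log n} \le \Big(\tfrac{2e\, m_t}{c\log n}\Big)^{c\log n},
\]
which is at most $n^{-3}$ once $c$ is chosen large relative to $C$. Combining the two regimes with $m_{t+1}\le 2m_t + X_t$ gives the displayed inequality with, say, $A = 6$.

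Finally, I would union-bound this failure event over the $T := \beta\log n$ rounds, so that with probability at least $1 - T n^{-3} \ge 1 - n^{-2}$ the bound $m_{t+1} \le A m_t + c\log n$ holds at every round in which $m_t \le \gamma n$. On this event a routine induction (which simultaneously certifies $m_t < \gamma n$, so the conditional hypothesis is never violated) gives $m_t \le A^t m_0 + c\log n \sum_{i<t} A^i$, and in particular
\[
    m_T \le A^{\beta\log n}\Big(m_0 + \tfrac{c\log n}{A-1}\Big) = O\big(n^{\alpha + \beta \log_2 A}\big).
\]
Choosing $\beta = \beta_{\alpha,\gamma}$ small enough that $\alpha + \beta\log_2 A < 1$ makes the right-hand side $o(n)$, hence below $\gamma n$ for all sufficiently large $n$. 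Thus the $\gamma$-Infection problem does not complete within $\beta\log n$ rounds, w.h.p.
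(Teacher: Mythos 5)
Your proof is correct and follows the same overall strategy as the paper's: bound the one-round growth of the infected set, iterate over $\beta\log n$ rounds with a union bound, and choose $\beta$ small enough that the final size stays below $\gamma n$. The difference is in the one-round bound. The paper asserts, via ``a simple application of Chernoff bounds,'' that the infected set grows by at most a constant multiplicative factor $\eta$ per round w.h.p.; as you correctly observe, this is not literally true when $|I^{(t)}|$ is small --- a single infected node has a number of callers that is approximately Poisson$(1)$, so it can acquire $\Theta(\log n/\log\log n)$ callers with probability that is not $n^{-\Omega(1)}$, and no constant $\eta$ survives a union bound over rounds. Your recurrence $m_{t+1}\le A\,m_t+c\log n$, with the two-regime analysis (multiplicative Chernoff when $m_t\ge C\log n$, a direct binomial tail when $m_t<C\log n$), is the right repair: the additive $c\log n$ slack absorbs the heavy-tailed behaviour of small sets, and the geometric sum it contributes is harmless since it only multiplies the final bound by $O(\log n)$, which is still $o(n)$ once $\alpha+\beta\log_2 A<1$. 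So your write-up is not just a match for the paper's proof but a more careful version of it; the only cosmetic remarks are that you should state explicitly that $X_t$, conditioned on $I_t$, is a sum of independent indicators over the uninfected nodes (so the per-round tail bounds are legitimate conditional on the history), and that in the large regime the lower bound $\Ex{X_t}\ge(1-\gamma)m_t$ uses $m_t\le\gamma n$, which your induction maintains.
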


\begin{proof}
    The proof shows that, starting from any subset of $O(n^\alpha)$ infected nodes,  the set of infected nodes  grows by at most a constant factor at each round,w.h.p. 
    The latter fact easily implies that, if $\beta$ is sufficiently small, then, within $\beta \log n$ rounds, at most $\gamma n$ nodes are infected w.h.p. 
    
    In order to show that the set of infected nodes increases by at most a constant factor, let $V$ be the set of nodes and $I^{(t)}$ the set of
    infected nodes at time $t$. At each round the number of
    (bidirectional) communication-edges between $I^{(t)}$ and the uninfected nodes $V/I^{(t)}$ is $\frac 1n (n-|I^{(t)}|)|I^{(t)}|\leq  |I^{(t)}|$. By a simple application of
    Chernoff bounds, it follows that  the growth of number of infected
    nodes is bounded by a constant multiplicative factor $\eta$, w.h.p. 
    Since $|I^{(0)}|  \leq n^{\alpha}$ and w.h.p. $|I^{(t+1)}| \leq \eta
    |I^{(t)}|$, then $|I^{(t)}| \leq \eta^{t} \cdot n^{\alpha}$ and the
    latter  is smaller than $\delta n$ as long as $t \leq (1-\alpha)  \log_\eta (\delta n)$. 
    
    Notice that to get a concentration  result over all the process, we just observe that if every event in some family $\{A_i\}_i$ holds w.h.p., then, using the union bound,  the intersection of any polylogarithmic number of such events holds w.h.p.
    
\end{proof}

We can combine the above lemma (in the case where just one source node is   initially infected), 
with Corollary \ref{cor:noiselesslower},  and   get the following. 
\begin{corollary}
    \label{cor:gossiplower}
    Consider the uniform \gossip model and fix any constant $\delta$ such that $0\le \delta<0$. Then, any protocol reaching $\delta$-Almost Consensus w.h.p. requires  $\Omega(\log n)$ communication rounds. 
\end{corollary}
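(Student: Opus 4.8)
The plan is to obtain this corollary as an immediate instantiation of Corollary~\ref{cor:noiselesslower}, where the abstract resource $\M$ is taken to be the number of communication rounds, and where the required infection lower bound is supplied by Lemma~\ref{lem:gossipbroadcast}. Concretely, I would fix the budget $\brotau = \beta \log n$, with $\beta$ the constant guaranteed by Lemma~\ref{lem:gossipbroadcast}, and verify that any protocol running for at most $\brotau$ rounds fails to solve the $(1-2\delta)$-Infection problem w.h.p.\ from any source node; Corollary~\ref{cor:noiselesslower} then converts this into the desired Consensus lower bound.

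To establish the infection lower bound, I would apply Lemma~\ref{lem:gossipbroadcast} with a single initially infected node. A single source has size $O(1) = O(n^\alpha)$ for any $\alpha \in (0,1)$, so it meets the size hypothesis of the lemma. For the target infection fraction I would set $\gamma = 1 - 2\delta$. Reading the constraint on $\delta$ in its meaningful form $0 \le \delta < 1/2$, the quantity $\gamma = 1 - 2\delta$ is a constant lying in $(0,1]$: whenever $\delta > 0$ we have $\gamma \in (0,1)$ and Lemma~\ref{lem:gossipbroadcast} applies verbatim, giving that starting from one source the $(1-2\delta)$-Infection problem needs at least $\beta \log n$ rounds w.h.p. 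The endpoint $\delta = 0$ corresponds to $\gamma = 1$ (full infection); since infecting all $n$ nodes requires at least as many rounds as infecting $\gamma' n$ nodes for any constant $\gamma' < 1$, applying the lemma with such a $\gamma'$ still yields the same $\beta \log n$ bound.

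With the infection lower bound in place, both hypotheses of Corollary~\ref{cor:noiselesslower} are satisfied: any protocol that uses at most $\brotau = \beta \log n$ rounds fails the $(1-2\delta)$-Infection problem w.h.p.\ from every source node. Corollary~\ref{cor:noiselesslower} then directly gives that any protocol reaching $\delta$-Almost Consensus w.h.p.\ must use at least $\beta \log n = \Omega(\log n)$ rounds, which is exactly the claim.

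Since the argument is merely a composition of two results already proved in the excerpt, I do not expect a genuine technical obstacle. The only points requiring care are bookkeeping ones: confirming that a single source meets the $O(n^\alpha)$ size requirement of Lemma~\ref{lem:gossipbroadcast}, checking that $\gamma = 1-2\delta$ lies in the admissible range $(0,1)$, and handling the degenerate endpoint $\delta = 0$ where the infection target degenerates to full Broadcast.
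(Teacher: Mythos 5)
Your proposal is correct and follows exactly the paper's route: the paper obtains Corollary~\ref{cor:gossiplower} precisely by combining Lemma~\ref{lem:gossipbroadcast} (instantiated with a single initially infected source) with Corollary~\ref{cor:noiselesslower}, taking the resource $\M$ to be the number of rounds. Your extra bookkeeping --- reading the (typo'd) constraint on $\delta$ as $0 \le \delta < 1/2$ so that $\gamma = 1-2\delta \in (0,1]$, and handling the endpoint $\delta = 0$ by monotonicity of the infection target --- is sound and only makes explicit what the paper leaves implicit.
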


Next,  consider the \general\  \push model and fix any (Broadcast) protocol: then, starting from any source node,  the set of infected nodes  grows by a factor at most $2$ at each  round. 
Hence,  Corollary \ref{cor:noiselesslower} also implies an $\Omega(\log n)$  lower bound for the \general\ \push model. 
\begin{corollary}
    \label{cor:pushlower}
    Let $0\le \delta<0$ be any constant. Then any protocol reaching $\delta$-Almost Consensus w.h.p.\ on the \general\ \push model, requires $\Omega(\log n)$ communication rounds, even when the nodes have unique identifiers. 
\end{corollary}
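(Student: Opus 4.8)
The plan is to combine the generic reduction of Corollary~\ref{cor:noiselesslower} with an elementary doubling bound on how fast infection can spread in the \general\ \push\ model. Since Corollary~\ref{cor:noiselesslower} lets us transfer any Infection lower bound into a Consensus lower bound, it suffices to show that the $(1-2\delta)$-Infection problem cannot be solved in fewer than $\Omega(\log n)$ rounds from any single source, no matter how the nodes (adaptively) choose their push targets, and even when they have unique identifiers.

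First, I would establish the doubling property. In the \push\ variant each node transmits to exactly one chosen neighbor per round, so the only way an uninfected node $u$ can become infected at round $t+1$ is for some infected node to select $u$ as its push target. Because each of the $|I^{(t)}|$ infected nodes pushes to a single neighbor, the number of newly infected nodes is at most $|I^{(t)}|$, giving the deterministic bound $|I^{(t+1)}| \le 2|I^{(t)}|$. This bound holds with probability $1$ regardless of randomness, adaptivity, or the presence of IDs, since it is a purely combinatorial per-round constraint.

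Starting from a single infected source we thus obtain $|I^{(t)}| \le 2^t$ for every $t$, so infecting $\gamma n = (1-2\delta)n$ nodes requires $t \ge \log_2((1-2\delta)n) = \Omega(\log n)$. In particular, for any $t$ below this threshold the $(1-2\delta)$-Infection problem fails with certainty (hence w.h.p.) from every source. Taking $\M$ to be the number of communication rounds and $\brotau = \Omega(\log n)$ in Corollary~\ref{cor:noiselesslower} then yields that any protocol reaching $\delta$-Almost Consensus w.h.p.\ on the \general\ \push\ model must use $\Omega(\log n)$ rounds.

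The only point requiring care is the direction of information flow in the \push\ model: I must argue that an uninfected node pushing to an infected node does not itself become infected (messages travel only from caller to callee), so that new infections are created solely by infected nodes pushing outward. Once this is pinned down, the doubling bound and the reduction are immediate, and there is no genuine technical obstacle here — this is the lightest of the corollaries, the real substance having already been carried by Theorem~\ref{thm:mainlowerI} and Corollary~\ref{cor:noiselesslower}.
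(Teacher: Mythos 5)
Your proof is correct and follows exactly the paper's argument: the set of infected nodes can at most double per round in the \general\ \push\ model (a deterministic, combinatorial fact unaffected by identifiers or adaptivity), so $(1-2\delta)$-Infection needs $\Omega(\log n)$ rounds, and Corollary~\ref{cor:noiselesslower} transfers this to $\delta$-Almost Consensus. Your extra remark about the direction of information flow (caller to callee) is a reasonable clarification but does not change the argument.
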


\begin{remark}
    Corollary \ref{cor:pushlower} should be contrasted with the fact that in the \general\ \pull model, assuming unique identifiers,   (valid) Consensus   can be solved in a single round, by having all nodes adopt the input value of a specific node\footnote{On the other hand, if we assume that nodes do not initially share unique identifiers (for example, in the \pull Model with numbered ports), it is easy to see that the broadcast problem cannot be solved w.h.p. in $o(\log n)$ time, since the number of nodes from which a given node $v$ can receive any information from, increases by at most a factor $2$ at each round.}. 
\end{remark}

\paragraph{Population Protocols.}
Another interesting model where we can apply our reduction is the Population Protocol one with uniform/probabilistic scheduler as defined in \cite{AAE07,AR09}. 
Broadcast in this model has essentially the same complexity of that in  the asynchronous uniform \gossip model: in particolar,  a similar result to that in Lemma \ref{lem:gossipbroadcast} holds for Broadcast time  (see for instance \cite{GNW16}): 
  any  Population Protocol with uniform/probabilistic scheduler on the complete graph cannot infect more than $(1-\delta)n$ nodes  w.h.p. within  $ \beta n \log n$ number of  rounds (and hence $ \beta \log n$ parallel time) or $ \beta \log n$ number of messages per node. Hence, by using Corollary \ref{cor:noiselesslower}, we can state the following result.

\begin{corollary}
    \label{cor:populationlower}
    Let $0\le \delta<0$ be any constant. Then any   Population Protocol (with uniform/probabilistic scheduler) reaching $\delta$-Almost Consensus requires  $\Omega(n\log n)$ number of steps (and hence $\Omega(\log n)$ parallel time) and $\Omega(\log n)$ number of messages per node. 
\end{corollary}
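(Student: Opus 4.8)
The plan is to obtain the statement as a direct instance of Corollary~\ref{cor:noiselesslower}, once we have a matching lower bound for the Infection problem in the Population Protocol model. Since the relevant range of $\delta$ is the one for which $\gamma := 1-2\delta$ is a positive constant strictly smaller than $1$, it suffices to show that no Population Protocol with uniform/probabilistic scheduler can solve $\gamma$-Infection from \emph{any} source w.h.p.\ using fewer than $\beta n\log n$ steps (equivalently, $\beta\log n$ parallel time, or $\beta\log n$ messages per node), for a sufficiently small constant $\beta=\beta_\gamma>0$. Plugging $\brotau=\beta n\log n$ (resp.\ $\beta\log n$) into Corollary~\ref{cor:noiselesslower} then yields the three claimed bounds simultaneously, since $\delta$-Almost Consensus w.h.p.\ is exactly the hypothesis that makes the corollary transfer the Infection lower bound to Consensus.

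The only genuine work is therefore the Infection lower bound, which is the asynchronous analogue of Lemma~\ref{lem:gossipbroadcast} and is essentially contained in \cite{GNW16}. First I would bound the growth rate of the infected set $I$. In a single activation only the two interacting nodes can change state, so at most one new node is infected per step; this already forces $\Omega(n)$ steps to reach $\gamma n$ infected nodes, but to recover the extra logarithmic factor I would group the steps into consecutive windows of $n$ activations (one parallel round). Conditioned on $|I|=i$, the probability that a given activation selects an infected--uninfected pair is $\Theta\!\big(i(n-i)/n^2\big)\le \Theta(i/n)$, so over a window of $n$ activations the expected number of newly infected nodes is $O(i)$. A Chernoff bound then shows that, while $|I|$ is not yet a constant fraction of $V$, it grows by at most a constant factor $\eta$ per parallel round, w.h.p. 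Starting from a single infected source, $|I|$ therefore stays below $\gamma n$ for at least $\Omega(\log_\eta n)=\Omega(\log n)$ parallel rounds, i.e.\ $\Omega(n\log n)$ steps, w.h.p. As in the proof of Lemma~\ref{lem:gossipbroadcast}, a union bound over the (polylogarithmically many) parallel rounds makes the estimate hold throughout the whole process. The analogous bound on the number of messages per node follows from the same window estimate, since each fixed node is activated only $O(1)$ times per parallel round in expectation.

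With this Infection lower bound in hand, the corollary follows by invoking Corollary~\ref{cor:noiselesslower} three times, taking the resource $\M$ to be, respectively, the number of steps (with $\brotau=\beta n\log n$), the parallel time (with $\brotau=\beta\log n$), and the number of messages per node (with $\brotau=\beta\log n$).

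The main obstacle I anticipate is not the reduction, which is immediate, but making the Infection lower bound fully rigorous: one must control the multiplicative growth of $|I|$ uniformly across all parallel rounds and for \emph{every} choice of source, and argue that the three resource measures (steps, parallel time, messages per node) are simultaneously lower-bounded by the \emph{same} slow-infection phenomenon. The one place where the constant $\delta$ must be checked is in matching the Infection threshold $\gamma=1-2\delta$ demanded by Corollary~\ref{cor:noiselesslower} with the threshold in the cited result: this only requires that $\delta$ lie in the admissible range so that $\gamma$ is a constant bounded away from both $0$ and $1$, which is exactly where the Chernoff concentration and the per-round growth argument remain valid.
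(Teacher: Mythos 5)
Your proposal matches the paper's own argument: the paper likewise derives the corollary by combining Corollary~\ref{cor:noiselesslower} with an $\Omega(n\log n)$-step (equivalently $\Omega(\log n)$ parallel-time and messages-per-node) lower bound for Infection in the Population Protocol model, which it attributes to \cite{GNW16} as the asynchronous analogue of Lemma~\ref{lem:gossipbroadcast}. The only difference is that you sketch the window-based multiplicative-growth argument explicitly where the paper simply cites it, so the approach is essentially identical.
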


\paragraph{Radio Networks.}
In  the synchronous  \emph{Radio Network} model \cite{ABLP91,BGI92,CDGLNN08,SW89}, the presence of message collisions on the (unique) shared  radio frequency is modelled by the following communication paradigm:   a node can receive a message at a given round $t$ if and only if exactly one of its neighbors transmits at round $t-1$. We consider the model setting
with no \emph{collision detection}, i.e., the  nodes of the graph
are not able to get any information when a collision occurs.

     In \cite{ABLP91} the authors derive a lower bound $\Omega(\log^2n)$ on the radio-broadcast time in networks of constant diameter. In particular, their proof relies on a construction of a family of graphs of $n$ nodes having diameter 2 where every  protocol that runs for no more than $\beta\log^2n$ rounds (where $\beta>0$ is a sufficiently small constant) cannot infect at least one node  w.h.p. (in fact, with probability 1). We observe that the proof can be adapted in order to hold   for any choice of the source and  when every node knows the graph topology, so for any choice of the initial  configuration. This implies that their lower bound also applies on the time required by any  protocol to $\gamma$-infect all nodes (with $\gamma =1$) according to Definition \ref{def:infection}.
     Then,    from   Theorem \ref{thm:mainlowerI}  we get a lower bound on the Consensus time in Radio Networks.
     
     \begin{corollary}
    \label{cor:randralower}
    Consider the Radio Network model. There is a family of constant-diameter graphs, where any  (randomized) protocol reaching Consensus requires  $\Omega(\log^2 n)$  time, w.h.p.
\end{corollary}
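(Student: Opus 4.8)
The plan is to obtain Corollary~\ref{cor:randralower} as a direct instantiation of Corollary~\ref{cor:noiselesslower}, taking the resource $\M$ to be the number of rounds (time), setting $\delta = 0$ so that the relevant infection threshold is $\gamma = 1 - 2\delta = 1$ (i.e.\ the task of infecting \emph{every} node), and choosing $\brotau = \beta \log^2 n$ with $\beta$ the small constant supplied by the radio-broadcast lower bound of \cite{ABLP91}. The consensus bound then falls out with no further computation, provided the hypothesis of Corollary~\ref{cor:noiselesslower} can be checked in the Radio Network model.

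The first step is to import the $\Omega(\log^2 n)$ radio-broadcast lower bound of \cite{ABLP91} in the adapted form described just before the corollary: there is a family of constant-diameter (in fact diameter-$2$) graphs on $n$ nodes such that, for a sufficiently small constant $\beta>0$, every protocol running for at most $\beta\log^2 n$ rounds leaves at least one node uninfected, from any source and any initial configuration, even when nodes know the topology. Phrased through Definition~\ref{def:infection}, this says exactly that the $1$-Infection problem cannot be solved within $\beta \log^2 n$ rounds from any source node; indeed it fails with probability $1$, which certainly qualifies as failure w.h.p. This verifies the hypothesis of Corollary~\ref{cor:noiselesslower} verbatim for $\M = \text{time}$, $\brotau = \beta \log^2 n$, and $\delta = 0$. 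Applying the corollary yields that any protocol reaching $0$-Almost Consensus, that is, full Consensus, w.h.p.\ on this family must spend at least $\beta \log^2 n = \Omega(\log^2 n)$ rounds, which is precisely the claim. (Unwinding the corollary through Theorem~\ref{thm:mainlowerI}: a faster consensus protocol would, by Theorem~\ref{thm:mainlowerI}, solve $1$-Infection from some source with probability at least $(1-o(1))/n > 0$, contradicting that infection fails with probability $1$ inside the time budget.)

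The only delicate point is the one already flagged in the text preceding the corollary: the original \cite{ABLP91} argument is phrased for Broadcast from a \emph{designated} source, whereas Corollary~\ref{cor:noiselesslower} requires failure of $1$-Infection from \emph{every} source. The hard part is therefore not the reduction itself but the faithful re-reading of their diameter-$2$ construction, where one must check that the adversarial configuration forcing some node to remain uninformed can be chosen independently of which vertex plays the role of source, so that the $1$-Infection lower bound holds against all sources (and all topology-aware initial configurations) simultaneously. Granting this adaptation, as argued in the surrounding discussion, the combination with Corollary~\ref{cor:noiselesslower} closes the proof immediately.
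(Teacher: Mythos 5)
Your proposal matches the paper's own argument: the paper likewise derives the corollary by importing the $\Omega(\log^2 n)$ radio-broadcast lower bound of \cite{ABLP91}, adapted so that $1$-Infection fails (in fact with probability $1$) from every source and every topology-aware initial configuration within $\beta\log^2 n$ rounds, and then invoking the reduction of Theorem~\ref{thm:mainlowerI} (equivalently Corollary~\ref{cor:noiselesslower} with $\delta=0$). You also correctly identify the same delicate point the paper flags, namely that the \cite{ABLP91} construction must be re-read to hold uniformly over the choice of source.
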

   
  \iffalse 
     % LA PROVA DEL LOWER BOUND \generalO SU RADIO BROADCAST RICHIEDE UN CERTO %NODE LABELING  ED UNA     SPECIFICA DELLA SORGENTE SCELTI AVVERSARIALMENTE %RISPETTO AL PROTOCOLLO DI BROADCAST. NON MI E' CHIARO QUINDI COME APPLICARE
     %QUESTO LOWER BOUND. IL PROBLEMA E' ANCHE CHE NON SIAMO SU UN GRAFO COMPLETO.
     
     The bound above holds for any randomized protocol while, for deterministic protocols, a lower bound $\Omega(n\log n)$ on the Broadcast time has been obtained in \cite{CMS01} and its proof ensures the conditions required by our reduction result. In details, We thus get the following
     
\begin{corollary}
    \label{cor:detralower}
    Consider the Radio Network model. There is a family of   graphs, where any  deterministic  protocol reaching Consensus requires  $\Omega(n\log n)$  time.
\end{corollary}

\fi

\subsection{A lower bound for \texorpdfstring{$\delta$-}\emph{Almost Majority Consensus}}
\label{sec:majoritylower}

The   conditions required by Majority Consensus are much stronger than the   validity one and  make the relationship 
between this task and  the  $\gamma$-Infection problem
with multiple source nodes rather simple to derive.

\begin{lemma}\label{lem:majlower}
    Let $\M$ be any fixed resource defined on a distributed system
    $\sS$ and suppose there is no  Infection  protocol  that, starting from any subset of $n^{\alpha}$ nodes with $\alpha<1$, can  inform  at least $(1-\delta)n$ nodes by using at most $\brotau$ units of $\M$, w.h.p. 
    Then, any protocol $\sP$ on this model, reaching $\delta$-\emph{Almost Majority Consensus} w.h.p.,  must use at least $\brotau$ units of $\M$.   
\end{lemma}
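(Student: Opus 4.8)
The plan is to mirror the hybrid-free core of Theorem~\ref{thm:mainlowerI}, exploiting the fact that Majority Consensus forces convergence \emph{w.h.p.}\ rather than with the tiny $1/n$ gap that is all one can extract from plain Consensus; this is why a single pair of initial vectors, rather than a telescoping hybrid, suffices, and why the resulting infection succeeds w.h.p.\ (which is what makes the reduction ``rather simple'' here). Concretely, suppose for contradiction that $\sP$ reaches $\delta$-Almost Majority Consensus w.h.p.\ while using at most $\brotau$ units of $\M$. Fix an arbitrary source set $S\subseteq V$ with $|S|=n^{\alpha}$, and let $\bx^{+}$ (resp.\ $\bx^{-}$) be the opinion vector in which every node of $S$ holds $1$ (resp.\ $0$), while the remaining $n-|S|$ nodes are split as evenly as possible between the two opinions. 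Then $\bx^{+}$ and $\bx^{-}$ differ \emph{only} on $S$, and each has a clear majority of size $n/2+\Theta(n^{\alpha})$ (of opposite sign, bias $\pm n^{\alpha}$), so both are valid Majority-Consensus inputs. I would show that simply running $\sP$ from $\bx^{+}$ already solves the Infection problem from $S$ to $(1-2\delta)n$ nodes w.h.p., contradicting the hypothesis and thereby forcing $\sP$ to spend at least $\brotau$ units of $\M$.

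The heart of the argument is a coupling. I would run $\sP$ from $\bx^{+}$ and from $\bx^{-}$ on the \emph{same} source of randomness $\omega$ (both the internal coins of the protocol and the communication choices of the model). The key observation is that in a uniform model the identity of who contacts whom at each round is a function of $\omega$ alone, independent of node states; hence the sequence of activated communication links is identical in the two coupled runs, and therefore the infection set $I_S^{(t)}$ — which by Definition~\ref{def:infection} depends only on who communicates with whom, and not on message contents — evolves identically in both. I would then prove by induction on $t$ that every node outside $I_S^{(t)}$ is in exactly the same state in the two runs: at $t=0$ this holds because $\bx^{+}$ and $\bx^{-}$ agree off $S=I_S^{(0)}$; in the inductive step, every message received by an uninfected node comes from an uninfected node, whose state, and hence whose transmitted bit, coincides across the two runs. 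In particular, any node that is never infected outputs the same final opinion from $\bx^{+}$ and from $\bx^{-}$.

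It then remains to count. By the $\delta$-Almost Majority Consensus guarantee, w.h.p.\ at least $(1-\delta)n$ nodes output $1$ when started from $\bx^{+}$, whereas at most $\delta n$ nodes output $1$ when started from $\bx^{-}$. Taking a union bound over these two w.h.p.\ events, at least $(1-\delta)n-\delta n=(1-2\delta)n$ nodes produce different outputs in the two coupled runs, and by the previous paragraph such nodes cannot lie outside $I_S$. Hence $|I_S|\ge(1-2\delta)n$ w.h.p., exhibiting an Infection protocol from a set of $n^{\alpha}$ sources that informs at least $(1-2\delta)n$ nodes within $\brotau$ units of $\M$ — contradicting the assumption and establishing the stated resource lower bound. (The factor $2$ in the infection fraction is the exact analogue of the one appearing, and shown to be tight, in Theorem~\ref{thm:mainlowerI}.)

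The step I expect to be most delicate is the coupling in the \general\ model, where a node may choose which neighbor to contact as a function of its current state. There, infected nodes — whose states legitimately differ between the two runs — could activate different links, so the two infection sets need no longer coincide and the clean induction above breaks down. For the uniform \gossip, \push, and \pull\ models that drive our applications (e.g.\ Corollary~\ref{cor:gossipmajority}) this difficulty does not arise, since the communication schedule is state-oblivious; extending the argument to the \general\ model, if one wishes, would require either restricting to state-independent scheduling or carefully tracking the nodes that are infected in one run but not the other.
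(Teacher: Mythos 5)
Your proof is correct and takes essentially the same route as the paper's, which is only a two-sentence sketch of exactly this argument: the two inputs of~(\ref{eq:configMajLower}) differ only on the $b=n^{\alpha}$ ``source'' nodes, and a node that is never infected by them cannot distinguish the two runs; your coupling and the $(1-\delta)n+(1-\delta)n$ counting simply make rigorous what the paper asserts in one line. The only discrepancy is that your (correct) counting yields $|I_S|\ge(1-2\delta)n$ whereas the hypothesis of Lemma~\ref{lem:majlower} is phrased in terms of informing $(1-\delta)n$ nodes, so strictly speaking the contradiction does not close against the statement as written; this factor $2$ is inherent to any two-run distinguishing argument (the paper's sketch does not avoid it either, since a single uninfected node may well output the correct majority in one of the two runs) and is immaterial for the intended application via Lemma~\ref{lem:gossipbroadcast} and Corollary~\ref{cor:gossipmajority}, which hold for every constant infection fraction $\gamma<1$.
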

\begin{proof}
    W.l.o.g., let $n-b$ be an even number where $b$ is the initial bias.
    Consider an arbitrary labeling of the nodes $v_1, \dots, v_n$ and two
    initial input vectors  $\bx_0$ and $\bx_1$ such that
    \begin{equation}
         \bx_i = 
         \begin{cases}
             v_j=0 & \text{ for }j\in \left\{ 1,\dots,\frac{n-s}2 \right\},\\
             v_j=1 & \text{ for }j\in \left\{ \frac{n-s}2 + 1,\dots, n-s
         \right\},\\
             v_j=i & \text{ for }j\in \left\{ {n-s} + 1,\dots, n \right\}.\\
         \end{cases}
         \label{eq:configMajLower}
    \end{equation}
    In order for a node $v$ to converge to the correct majority opinion, it is
    necessary that it is able to 
    distinguish between configuration $\bx_0$ and $\bx_1$. 
    Since
    $\bx_0$ and $\bx_1$ are identical for all nodes $v_1, \dots, v_{n-s}$, it is
    then necessary for $v$ to be infected by  each of the  \emph{source} nodes   $v_{n-s+1}, \dots, v_{n}$ and the proof is completed.

 \end{proof}

\paragraph{A specific lower bound for Majority Consensus.}

The above lemma allows us to obtain a logarithmic lower bound on the convergence time
required by any  almost Majority-Consensus protocol on    uniform \gossip (and, hence, on uniform \pull and uniform \push).
Notice that the bound holds even for protocols
achieving the task with constant probability only.

\begin{corollary}
\label{cor:gossipmajority}
    Consider    $\delta$-\emph{Almost Majority Consensus} in    uniform \gossip    starting with    
      initial   bias $b\leq n^{\alpha}$, for any   positive constant
    $\alpha <1$. Then, any protocol that 
    solves the task above  with probability at least $2/3$ requires  $\Omega_{(1-\alpha)}(\log n)$ rounds. 
\end{corollary}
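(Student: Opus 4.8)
The plan is to follow the reduction scheme of Lemma~\ref{lem:majlower}, instantiated with the resource $\M$ equal to the number of communication rounds, feed it the Infection lower bound of Lemma~\ref{lem:gossipbroadcast}, and then sharpen the coupling so that the conclusion survives even for protocols that are correct only with constant probability. First I would fix the two hard inputs of Lemma~\ref{lem:majlower}: opinion vectors $\bx_0$ and $\bx_1$ that coincide on the first $n-b$ coordinates and differ only on a block of $b\le n^\alpha$ \emph{source} nodes, chosen so that $\bx_0$ has majority $0$ and $\bx_1$ has majority $1$, each with bias exactly $b$. The hypothesis $b\le n^\alpha$ is used precisely here: the source block has size $O(n^\alpha)$, which is the regime to which Lemma~\ref{lem:gossipbroadcast} applies.

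Second, I would make the indistinguishability argument of Lemma~\ref{lem:majlower} quantitative by a coupling. Execute $\sP$ on $\bx_0$ and on $\bx_1$ using the \emph{same} random call pattern and the same internal coins. Since in the uniform \gossip model the edge activated at each round is drawn independently of the node states, the set of nodes infected by the source block evolves identically in the two runs; by induction on the round, every node never infected by the block keeps an identical state, hence an identical output, in both runs. Therefore, letting $U$ be the set of never-infected nodes, every $v\in U$ is wrong in at least one of the two runs, and splitting $U$ by the common output of its nodes shows that succeeding in \emph{both} runs forces $|U|\le 2\delta n$, i.e.\ at least $(1-2\delta)n$ nodes become infected from the block.

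Third, I would derive the contradiction. Apply Lemma~\ref{lem:gossipbroadcast} with initial infected set equal to the $O(n^\alpha)$ source nodes and a constant $\gamma\in(0,1)$ with $\gamma\le 1-2\delta$: there is $\beta=\beta_{\alpha,\gamma}>0$ such that within $\beta\log n$ rounds fewer than $\gamma n$ nodes are infected with probability $1-n^{-c}$ for some $c>1$. Now suppose some protocol solved $\delta$-Almost Majority Consensus on bias-$b$ inputs within $\beta\log n$ rounds with probability at least $2/3$; then it fails on each of $\bx_0,\bx_1$ with probability at most $1/3$, so under the coupling the union bound gives success on \emph{both} with probability at least $1/3$. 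But success on both is contained in the event ``at least $\gamma n$ nodes infected within $\beta\log n$ rounds'', whose probability is at most $n^{-c}<1/3$ for large $n$ — a contradiction. Hence $\Omega_{(1-\alpha)}(\log n)$ rounds are necessary, the dependence on $(1-\alpha)$ being inherited from $\beta_{\alpha,\gamma}$ in Lemma~\ref{lem:gossipbroadcast}.

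The main obstacle I expect is exactly this last upgrade from ``w.h.p.'' to constant probability: read literally, Lemma~\ref{lem:majlower} only excludes w.h.p.-correct protocols, and one might fear that a $2/3$-correct protocol slips past the infection bottleneck. The resolution is that indistinguishability holds string-by-string under the coupling, so the two one-sided failure probabilities simply add up, leaving success on both with probability $\ge 1/3$; this constant is still far larger than the polynomially small failure probability $n^{-c}$ of the infection bound, which is what forces the contradiction. A secondary point that must be checked carefully, and that legitimizes the whole coupling, is that the infection process in uniform \gossip is genuinely state-independent, so that it is identical across the two runs and simultaneously governed by Lemma~\ref{lem:gossipbroadcast}.
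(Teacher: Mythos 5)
Your proposal is correct and follows the same route as the paper, whose proof of Corollary~\ref{cor:gossipmajority} is just the one-line combination of Lemma~\ref{lem:gossipbroadcast} with Lemma~\ref{lem:majlower} (taking $\M$ to be the number of rounds). Your second and third steps --- running the two executions on $\bx_0$ and $\bx_1$ under a common coupling of the (state-independent) call pattern and internal coins, concluding $|U|\le 2\delta n$ when both runs succeed, and then using a union bound to get simultaneous success with probability at least $1/3$ against the $n^{-c}$ infection bound --- supply precisely the details that the paper leaves implicit, in particular the justification for the claim that the bound survives when the success probability is only the constant $2/3$ rather than $1-o(1)$.
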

\begin{proof}
    The proof  follows from  Lemma \ref{lem:gossipbroadcast} and Lemma \ref{lem:majlower}, where $\M$ is the number of rounds in the  uniform \gossip model.
\end{proof}

%\subsection{Notes on a Lower Bound for Consensus}
\section{Lower Bounds in the Noisy Model} \label{sec:noisylb}

The main result of this section is the following  
lower bound for the $\delta$-Almost Consensus Problem in the noisy \general\ \pull model (and hence in noisy uniform \pull as well). 
%and then, in Subsection \ref{ssec:lbgeneral}, we discuss some %possible generalization of this result.  

\begin{theorem}
    \label{thm:noisylower}
    Let $\delta$ be any real such that $0< \delta < 1/8$ and  consider any protocol $\sP$ for the noisy \general\ \pull model with noise parameter $\epsilon$. If $\sP$ solves  $\delta$-Almost Consensus with probability at least $1 - \delta$, then it requires at least $t = \Omega(\epsilon^{-2} \log \delta^{-1})$ rounds\footnote{We notice the double role parameter $\delta$ has in this statement.}.
\end{theorem}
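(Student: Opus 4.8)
The plan is to reduce the statement to an asymmetric two-party estimation problem over a binary symmetric channel, isolating one node $v$ as one party and the remaining $n-1$ nodes as the other. Concretely, I would first establish the following information-theoretic tool: if Alice holds a bit selecting between two ``worlds'', sends bits to Bob over $t$ independent uses of a $\mathrm{BSC}(1/2-\epsilon)$, and Bob then outputs a bit such that $\Prob{\text{Bob outputs }b}\le\delta$ in one world and $\ge 1/3$ in the other, then $t=\Omega(\epsilon^{-2}\log\delta^{-1})$. The reduction is faithful because, in the \general\ \pull model, each round node $v$ pulls exactly one neighbor and hence receives exactly one noisy bit, while every bit $v$ sends to its callers only affects states that the other party simulates; thus Bob's whole view is $t$ noisy bits, Alice can simulate the rest of the network from $v$'s outgoing bits, and the condition $\bx_v=\by_v$ guarantees Bob starts in the same state in both worlds.

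For the tool, the core is a divergence estimate. Let $P_\bx,P_\by$ be the laws of $v$'s view (the received bits together with its private coins) under the two worlds. The worlds can differ only in the \emph{true} bit feeding each channel use, and a bit sent through a $\mathrm{BSC}(1/2-\epsilon)$ is distributed as $\mathrm{Ber}(1/2\pm\epsilon)$; hence each round contributes at most $D\!\left(\mathrm{Ber}(1/2+\epsilon)\,\|\,\mathrm{Ber}(1/2-\epsilon)\right)=2\epsilon\log\frac{1+2\epsilon}{1-2\epsilon}=O(\epsilon^2)$ to the Kullback--Leibler divergence. By the chain rule (and joint convexity, to absorb $v$'s possibly random choice of whom to pull) this gives $D(P_\by\,\|\,P_\bx)\le t\cdot O(\epsilon^2)$. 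Since $v$'s output is a function of its view, the data-processing inequality together with $\Prob{v\to b\mid\bx}\le\delta$ and $\Prob{v\to b\mid\by}\ge1/3$ yields $D(P_\by\,\|\,P_\bx)\ge D\!\left(\mathrm{Ber}(1/3)\,\|\,\mathrm{Ber}(\delta)\right)=\Omega(\log\delta^{-1})$. Combining the two bounds gives $t=\Omega(\epsilon^{-2}\log\delta^{-1})$.

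It remains to exhibit $\bx,\by,v,b$ with $\bx_v=\by_v$, $\Prob{v\to b\mid\bx}\le\delta$, and $\Prob{v\to b\mid\by}\ge1/3$. The naive single-coordinate hybrid of Theorem~\ref{thm:mainlowerI} is \emph{insufficient}: flipping one input changes the global consensus probability by only $\Theta(1/n)$, and a single node cannot inherit more than this, whereas the divergence argument crucially needs a \emph{constant} gap on the $\ge 1/3$ side. The resolution is to compare configurations that agree on $v$ but differ in many coordinates, exploiting validity at unanimous inputs. Let $\mathbf 0,\mathbf 1$ be the all-$0$ and all-$1$ vectors and, for a node $u$, let $\by^{(u)}$ be $\mathbf 1$ with the entry $u$ reset to $0$. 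By validity and the $(1-\delta)$ guarantee, from $\mathbf 0$ almost all nodes output $0$ and from $\mathbf 1$ almost all output $1$. I would split into two cases. If some $u$ satisfies $\Prob{u\to 1\mid \by^{(u)}}\ge 1/3$, take $\bx=\mathbf 0$, $\by=\by^{(u)}$, $v=u$, $b=1$ (both fix $v=0$). Otherwise every $\by^{(u)}$ reaches consensus on $0$ with probability $>1/2$; then, since consensus on $0$ forces at least $(1-\delta)n$ nodes to output $0$, an averaging argument produces a node $w\ne u$ with $\Prob{w\to0\mid \by^{(u)}}\ge 1/3$, and we take $\bx=\mathbf 1$, $\by=\by^{(u)}$, $v=w$, $b=0$ (both fix $w=1$).

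The hard part is exactly this last step: forcing a constant single-node gap without touching $v$'s own input. Validity pins down only the two unanimous vectors, which disagree on every node, so one must route through the ``lone-$0$'' vectors $\by^{(u)}$, which differ from a unanimous vector in a single coordinate \emph{other} than the tracked node and whose consensus value is genuinely bimodal. A secondary technical point, since we treat $\delta$-Almost Consensus rather than full Consensus, is that a specific node may be an outlier; I would handle this by a counting argument showing that all but an $O(\delta)$ fraction of nodes are ``reliable'' (they match the reached consensus value except with probability $O(\delta)$), so the tracked node can be chosen reliable. This merely rescales $\delta$ inside the logarithm and leaves the bound $\Omega(\epsilon^{-2}\log\delta^{-1})$ unchanged, with the hypothesis $\delta<1/8$ providing the needed slack.
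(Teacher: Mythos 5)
Your overall architecture matches the paper's: isolate a node $v^*$ as one party of an asymmetric two-party problem over a $\mathrm{BSC}(1/2-\epsilon)$, then combine the chain rule (per-round conditional divergence $O(\epsilon^2)$, since both conditional laws of the received bit lie in $[1/2-\epsilon,1/2+\epsilon]$) with the data-processing inequality and $D(\mathrm{Ber}(\Omega(1))\,\|\,\mathrm{Ber}(O(\delta)))=\Omega(\log\delta^{-1})$. That part of your proposal is correct and is essentially the paper's Lemma~\ref{lem:kl}; your observation that the general \pull model is what makes the simulation faithful (every bit reaching $v^*$ passes through the channel) is also the right one.

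The gap is in the step you yourself flag as the hard part: exhibiting $\bx,\by,v,b$ with $\bx_v=\by_v$ and a $(\le\delta,\ \ge\Omega(1))$ output gap. Your dichotomy over the lone-dissenter vectors $\by^{(u)}$ is not exhaustive. The negation of your Case~1 is only that $\Prob{u\to1\mid\by^{(u)}}<1/3$ for every $u$; it does \emph{not} imply that every (or any) $\by^{(u)}$ reaches consensus on $0$ with probability $>1/2$. A $\delta$-Almost Consensus protocol may send every $\by^{(u)}$ to consensus on $1$ while designating $u$ itself as one of the $\le\delta n$ outliers of the configuration $\by^{(u)}$ --- the outlier sets are allowed to depend on the input vector, so no counting over nodes rules this out. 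In that scenario Case~1 fails for every $u$, and Case~2's premise is false; comparing $\by^{(u)}$ against $\mathbf 1$ instead gives no gap either, since reliable nodes output $1$ in both worlds. Your proposed repair (``choose the tracked node reliable'') is unavailable precisely in Case~1, because the only node whose input agrees between $\mathbf 0$ and $\by^{(u)}$ is $u$, so you have no freedom to move to a reliable node. The paper avoids this by comparing the all-zeros vector $\vinput{n}$ against the half-and-half vector $\vinput{n/2}$: there the set of input-consistent nodes has size $n/2$ rather than $1$, and for $\delta<1/8$ the three sets --- input-consistent nodes, nodes reliable under $\vinput{n}$ (size $\ge\tfrac34 n$), and nodes outputting the $\vinput{n/2}$-consensus value with probability $\ge\tfrac14-\tfrac\delta4$ (size $\ge(1-2\delta)n$) --- must intersect, yielding the required $v^*$ by a purely cardinality-based argument. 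Replacing your lone-dissenter vectors with such a balanced vector repairs the proof.
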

\begin{proof}[Proof Outline.]
%of Theorem \ref{thm:noisylower}.]
    %
    W.l.o.g.\ we assume that during the execution of $\sP$, every node pulls another node at each round. Indeed, if not true, we can simply consider a protocol $\sP'$ where this property holds but the extra messages are ignored, obtaining an equivalent protocol.
    
    Suppose that we have a  protocol that solves the $\delta$-Almost Consensus with probability at least $\geq 1-\delta$. The definition of $\delta$-Almost Consensus implies that each node has an initial one-bit input and produces a one-bit output and
    
    \begin{itemize}
        \item If the initial opinion vector is all-zeroes, then with probability $\geq 1-\delta$ all nodes but at most $\delta n$ output zero.
        \item If the initial opinion vector is all-ones, then with probability $\geq 1-\delta$ all  nodes but at most $\delta n$ output one.
        \item For every initial opinion vector, with probability $\geq 1-\delta$ all nodes but at most $\delta n$ agree.
    \end{itemize}
    
    From the above constraints, we derive the existence of (at least)  one  node $v^*$ which must get  from the rest of the system ``enough information'' in order to decide its output.  More in details, in Lemma \ref{lem:reduction}, we show that any Almost-Consensus protocol implies a solution to a two-party communication problem over a noisy communication channel, where one of the two parties represents node $v^*$ and it needs to act differently according to the information owned by the other party, namely the rest of
    the graph.
    
    In  Lemma \ref{lem:kl}, we then show  that, in the two-party   problem above, $v^*$ has to receive at least $\Omega(\epsilon^{-2}\log \delta^{-1})$ bits (and thus, according to the noisy \pull, performs
    at least the same number of rounds) in order to recover  the information owned by the rest of the graph with a sufficiently large probability, and thus deciding its output. This implies the desired bound. 
\end{proof}
%
%If the protocol works in a communication model such that
%\begin{itemize}
%\item each node has probability $\geq 1-\delta$ of sending and receiving at most $t$ bits of messages,
%\item each bit of communication from one node to the other passes through a binary symmetric channel that flips the bit with probability $\frac 12 - \epsilon$,
%\end{itemize}
%we will prove that $t \geq \Omega(\epsilon^{-2} \log \delta^{-1})$.
%
\subsection{Reduction to  the Two-Party Protocol}

As outlined in the proof of Theorem \ref{thm:noisylower}, we start by  showing that, if we have a (valid) Almost Consensus protocol, we can convert it into a two-party communication protocol between party $A$ and party $B$ with certain properties. More formally, we give the following definition.
\begin{definition}
    \label{def:2party}
    A \twopart{t}{\delta}{\epsilon}
    is a two party noisy communication protocol between parties $A$ and $B$  such that
    \begin{itemize}
        \item[i)] $B$ starts with a bit $b$ and at the end $A$ outputs one bit,
        \item[ii)] $A$ receives $t$  messages, each one of one bit,
        \item[iii)] Each  bit of communication passes through a binary symmetric channel that flips the bit with probability $\frac 12 - \epsilon$,
         \item[iv)] If $b=0$ then $A$ outputs 0 with probability $\geq 1 - O(\delta)$, 
         \item[v)] If $b=1$ then $A$ outputs 0 with probability $\leq \frac 34 +  O(\delta)$.
    \end{itemize}
\end{definition}

We can now state the formal reduction result.
\begin{lemma}
    \label{lem:reduction}
    Let $\sP$ be a protocol that solves $\delta$-Almost Consensus
    problem in $t$ rounds with probability at least $1 - \delta$ in the noisy \general\ \pull model for some $0<\delta<1/8$.
    Then, there exists a \twopart{t}{\delta}{\epsilon}, where $\epsilon$ is the noisy parameter.
\end{lemma}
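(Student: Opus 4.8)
The plan is to let party $A$ simulate the single node $v^*$ and party $B$ simulate the remaining $n-1$ nodes, with the bit $b$ held by $B$ selecting between two initial opinion vectors $\bx$ (for $b=0$) and $\by$ (for $b=1$) that \emph{agree} on the initial opinion of $v^*$. Because $\bx$ and $\by$ coincide on $v^*$, party $A$ can run $v^*$'s program with a fixed initial opinion. I would endow the two parties with shared randomness (which only strengthens the lower bound, as in~\cite{BFKN18}), so that they agree in advance on every node's sequence of pull targets; in each round $B$ then knows which node $v^*$ pulls and transmits that node's current opinion over the channel, while $A$ sends $v^*$'s current opinion to $B$ once for each $B$-node that pulls $v^*$ in that round (these $A\to B$ transmissions are not counted). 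Assuming w.l.o.g.\ that every node pulls once per round, $A$ receives exactly one channel bit per round, hence $t$ bits in $t$ rounds, and every received bit is flipped independently with probability $\tfrac12-\epsilon$, matching (ii) and (iii); finally $A$ outputs $Y_{v^*}$, the output of $v^*$, possibly negated (see below). This simulation is routine; the real work is to exhibit $\bx$, $\by$ and $v^*$ realizing the asymmetry (iv)--(v).

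To find them I would use a hybrid argument. Let $\bw_j$ be the opinion vector whose first $j$ nodes start with $1$ and whose remaining $n-j$ nodes start with $0$, so $\bw_0=\mathbf 0$ and $\bw_n=\mathbf 1$. For a run started from $\bw_j$ let $D\in\{0,1,\bot\}$ record whether at least $(1-\delta)n$ nodes output $0$, at least $(1-\delta)n$ output $1$, or neither. The $\delta$-Almost Consensus guarantee gives $\PProb{\bw_j}{D=\bot}\le\delta$ for every $j$, while validity forces $\PProb{\mathbf 0}{D=1}=\PProb{\mathbf 1}{D=0}=0$. Hence $g_j:=\PProb{\bw_j}{D=0}$ satisfies $g_0\ge 1-\delta$ and $g_n=0$, and I let $j^*$ be the smallest index with $g_{j^*}\le \tfrac12$.

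Suppose first $j^*\le n/2$, and take $\bx=\mathbf 0$, $\by=\bw_{j^*}$; these agree (both $0$) on every node of $S:=\{v:v>j^*\}$, and $|S|=n-j^*\ge n/2$. For (iv), the expected number of nodes that output $1$ from $\mathbf 0$ is at most $2\delta n$ (by validity and the failure bound), so by Markov's inequality all but at most $n/20$ nodes of $S$ output $1$ from $\mathbf 0$ with probability at most $40\delta=O(\delta)$; thus a $\tfrac{9}{10}$-fraction of $S$ satisfies $\PProb{\bx}{Y_v=0}\ge 1-O(\delta)$. For (v), since $g_{j^*}\le\tfrac12$ we have $\PProb{\bw_{j^*}}{D=1}\ge\tfrac12-\delta$, and whenever $D=1$ at most $\delta n$ nodes output $0$; a short computation then gives $\tfrac1{|S|}\sum_{v\in S}\PProb{\bw_{j^*}}{Y_v=1}\ge \tfrac12-2\delta$, whence a reverse-Markov estimate yields a $(\tfrac13-O(\delta))$-fraction of $S$ with $\PProb{\by}{Y_v=1}\ge\tfrac14$. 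Since $\tfrac{9}{10}+\tfrac13>1$, these two subsets of $S$ meet, and any $v^*$ in the intersection satisfies $\PProb{\bx}{Y_{v^*}=0}\ge 1-O(\delta)$ and $\PProb{\by}{Y_{v^*}=0}\le\tfrac34$, i.e.\ (iv) and (v) with $A$ outputting $Y_{v^*}$ directly.

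If instead $j^*>n/2$, the set $\{v\le j^*-1\}$ has size $\ge n/2$ and I run the mirror-image argument with $\bx=\mathbf 1$, $\by=\bw_{j^*-1}$ (for which $g_{j^*-1}>\tfrac12$), producing a node $v^*$, started at $1$ in both vectors, with $\PProb{\mathbf 1}{Y_{v^*}=1}\ge 1-O(\delta)$ and $\PProb{\bw_{j^*-1}}{Y_{v^*}=1}\le\tfrac34$; letting $A$ output the negation of $Y_{v^*}$ then turns this into (iv)--(v). The main obstacle is precisely this existence argument: unlike the Majority-Consensus lower bound of~\cite{FHK17}, here it is \emph{false} that every node must receive a high-confidence bit (consider the protocol that simply broadcasts one node's opinion), so one cannot fix $v^*$ and $\by$ in advance. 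It is the tipping-point index $j^*$, combined with averaging over the larger half of the vertex set, that guarantees a single node whose output is simultaneously $O(\delta)$-confident in one world and bounded away from certainty in the other. The remaining quantitative claim—that realizing such an asymmetric pair of output distributions over a binary symmetric channel requires $t=\Omega(\epsilon^{-2}\log\delta^{-1})$ channel uses—is what I would defer to Lemma~\ref{lem:kl}.
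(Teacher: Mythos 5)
Your proposal follows the same architecture as the paper's proof: reduce to a two-party protocol in which $A$ simulates a single node $v^*$ and $B$ simulates the remaining $n-1$ nodes, with only $v^*$'s pulled bits crossing the binary symmetric channel, and locate $v^*$ by Markov/averaging arguments applied to two initial vectors that agree on $v^*$'s input. The one substantive difference is how the second vector is chosen: the paper fixes it to be the half-and-half vector $\vinput{n/2}$ and invokes the agreement guarantee there (so that, after possibly renaming opinions, $\Probi{\sP \rightarrow 1 \cond \vinput{n/2}} \geq (1-\delta)/2$), then intersects a set of $(1-2\delta)n$ nodes that output $1$ with constant probability from $\vinput{n/2}$, a set of $\frac34 n$ nodes that output $0$ with probability $1-5\delta$ from the all-zeros vector, and the $n/2$ nodes whose input is $0$ in both; your tipping-point index $j^*$ over the hybrid chain achieves the same end and is a legitimate, slightly more flexible alternative (it does not rely on exactly $n/2$ nodes agreeing across the two vectors). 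One arithmetic caveat: your intersection step ``$\frac{9}{10}+\frac13>1$'' silently drops the $O(\delta)$ loss in the reverse-Markov estimate (the fraction is $\frac13-\frac{8\delta}{3}$, not $\frac13$), and for $\delta\in[7/80,\,1/8)$ the two fractions as written need not sum to more than $1$. This is repaired by taking the reverse-Markov threshold to be $\frac14-C\delta$ instead of $\frac14$, which still yields property (v) of Definition~\ref{def:2party} since that only requires the output-$0$ probability to be at most $\frac34+O(\delta)$; so this is a constant-tuning issue rather than a genuine gap.
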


\begin{proof}
    As defined in the previous section, let $\vinput{j}$ be the initial opinion vector such that the first $j$ nodes initially support opinion $0$ and the others support opinion $1$.
    Let ``$\sP \rightarrow x$'' be the event ``$\sP$ converges to a consensus where all nodes, but at most $\delta n$, output opinion $x$'' where $x \in \{ 0,1\}$. By hypotheses of the lemma we have that
    
    \begin{itemize} 
        \item During the execution of $\sP$, each node $v$ exchanges at most $t$ one-bit messages;
        
        \item $\Prob{\sP \rightarrow 0 \vert \vinput{n}} \geq 1 - \delta$; 
        \item $\Prob{\sP \rightarrow 1 \vert \vinput{0} } \geq 1 - \delta$; 
        \item For any initial opinion vector $\vinput{j}$, the probability that $\sP$ reaches an almost consensus is at least $1- \delta$, namely $\Prob{\sP \rightarrow 0 \vee \sP \rightarrow 1 \vert \vinput{j}} \geq 1 - \delta$. 
    \end{itemize}
    
    %All above references to probabilities are with respect to the randomness of the channels used by the nodes to communicate.
    Thanks to the agreement property we know that $\Probi{\sP \rightarrow 0 \vee \sP \rightarrow 1 \cond \vinput{\frac{n}{2}}} \geq 1- \delta$. Since there are only two possible opinions, then 
    $\Probi{\sP \rightarrow 0 \cond \vinput{\frac{n}{2}}} \geq \frac{1- \delta}{2}$ or $\Probi{\sP \rightarrow 1 \cond \vinput{\frac{n}{2}}} \geq \frac{1- \delta}{2}$.
    W.l.o.g.\ we assume that it holds $\Probi{\sP \rightarrow 1 \cond \vinput{\frac{n}{2}}} \geq \frac{1- \delta}{2}$, indeed if not true, we can simply rename opinion $0$ with opinion $1$.
    Now we leverage on this property in order to show that, starting from $\vinput{\frac{n}{2}}$, a large fraction of  nodes have a constant probability to output opinion $1$. Let $u$ be any node, we define ``$u \rightarrow x$'' be the event ``$u$ outputs opinion $x$ in $\sP$'', where $x \in \{0,1\}$.
    
    \begin{fact}
        \label{fact:probabilityonenode}
        If $\Probi{\sP \rightarrow 1 \cond \vinput{\frac{n}{2}}} \geq \frac{1- \delta}{2}$ then
        a node subset $S$ with $\vert S \vert \geq (1- 2\delta)n$ exists such that
        for any $u \in S$ it holds  $\Probi{u \rightarrow 1 \vert \vinput{\frac{n}{2}} } \geq \frac{1}{4} - \frac{\delta}{4}$.
    \end{fact}
    \begin{proof}
        Let $V'$ be the set of nodes $v$ such that $\Probi{v \rightarrow 0 \cond \sP \rightarrow 1, \vinput{\frac{n}{2}}} \geq \frac{1}{2}$, and let $S=V\setminus V'$. Since the expectation of the  number $Z$ of nodes that output $0$ conditioned to the event ``$\sP \rightarrow 1$'' is at most $\delta n$, the size of $V'$ is at most $2 \delta n$. Indeed,
        \begin{align*}
            \delta n & \ge \Ex{Z \cond \sP \rightarrow 1 , \vinput{\frac{n}{2}}}=\sum_{v \in V} \Prob{v \rightarrow 0 \cond \sP \rightarrow 1, \vinput{\frac{n}{2}}}   
            \\ 
            & \ge \sum_{v \in V'} \Prob{v \rightarrow 0 \cond \sP \rightarrow 1, \vinput{\frac{n}{2}}} \ge \frac 12 |V'|. 
            \end{align*}
            \noindent This implies that $\vert S \vert \geq (1- 2\delta)n$. To conclude the proof, we observe that,    for any $u \in S $, we have
            \begin{align*}
            \Prob{u \rightarrow 1 \cond \vinput{\frac{n}{2}}} 
            &\geq \Prob{u \rightarrow 1 \wedge \sP \rightarrow 1 \cond \vinput{\frac{n}{2}}}\\
            &= \Prob{u \rightarrow 1 \cond \sP \rightarrow 1, \vinput{\frac{n}{2}}} \cdot \Prob{\sP \rightarrow 1 \cond \vinput{\frac{n}{2}}}\\
            &\geq \frac{1}{2} \cdot \left(\frac{1}{2} - \frac{\delta}{2} \right) = \frac{1}{4} - \frac{\delta}{4}.
        \end{align*}
    \end{proof}
    
    We now consider the initial opinion vector $\vinput{n}$, where all the nodes support   opinion $0$. Remind that $\Prob{\sP \rightarrow 0 \cond \vinput{n}} \geq 1 - \delta$. Using a similar argument of Fact~\ref{fact:probabilityonenode}, we  can prove the following
    
    \begin{fact}
        \label{fact:probabilityonenode2}
        If $\Prob{\sP \rightarrow 0 \cond \vinput{n}} \geq 1 - \delta$ then
        a node subset $H$ with $\vert H \vert \geq \frac{3}{4}n$ exists such that
        for any $u \in H$, $\Prob{u \rightarrow 0 \cond \vinput{n} } \geq 1 - 5\delta$.
    \end{fact}
    \begin{proof}
        Let $V'$ be the set of nodes $v$ such that $\Prob{v \rightarrow 1 \cond \sP \rightarrow 0, \vinput{n} } \geq 4 \delta$, and set $H=V\setminus V'$. Since the expected number of nodes that output opinion $1$ conditioned to the event "$\sP \rightarrow 0$" is at most $\delta n$, the size of $V'$ is at most $n/4$, and hence $|H| \ge \frac{3}{4} n$.
        
        To conclude the proof, observe that, for any $u \in H$,
        \begin{align*}
            \Prob{u \rightarrow 0 \vert \vinput{n}} 
            &\geq \Prob{u \rightarrow 0 \wedge \sP \rightarrow 0 \vert \vinput{n}}\\
            &= \Prob{u \rightarrow 0 \vert \sP \rightarrow 0, \vinput{n}} \cdot \Prob{\sP \rightarrow 0 \vert \vinput{n}}\\
            &\geq (1 - 4\delta) \cdot (1 - \delta) \geq 1 - 5\delta.
        \end{align*}
    \end{proof}
    
    By combining Facts \ref{fact:probabilityonenode} and   \ref{fact:probabilityonenode2}, we obtain the following 
    \begin{fact}
        \label{fact:thereisamagicnode}
        Let $0<\delta<1/8$. At least one node $v^*$ exists 
        such that: \\
        (i) its initial opinion is $0$ in both  $\vinput{\frac{n}{2}}$ and $\vinput{n}$, and \\ 
        (ii)  
        $\Prob{v^* \rightarrow 1 \vert \vinput{\frac{n}{2}} } \geq \frac{1}{4} - \frac{\delta}{4}$   and   
        $\Prob{v^* \rightarrow 1 \vert \vinput{n}} \leq 5\delta $. 
    \end{fact}
    \begin{proof}
        From Facts \ref{fact:probabilityonenode} and \ref{fact:probabilityonenode2}, if $\delta < \frac{1}{8}$ then   $\vert S \vert > \frac{3}{4}n$, and thus $\vert S \cap H \vert > \frac{1}{2}n$. Since the nodes   having    initial opinion $0$ in both   vectors $\vinput{\frac{n}{2}}$ and $\vinput{n}$ are exactly $\frac{n}{2}$,  it must exist    at least one node having  Properties (i) and (ii).    
    \end{proof}
    
    % Now we also consider the event $\zeta$ = "$v^*$ receives at most $t$ bits of messages". Since this event has probability at least $1 - \delta$ it holds
    % $\Prob{v^* \rightarrow 1, \zeta \vert \vinput{\frac{n}{2}} } \geq \frac{1}{4} - \frac{5}{4}\delta$ and $\Prob{v^* \rightarrow \text{zero}, \zeta \vert \vinput{n}} \geq 1 - \Prob{v^* \rightarrow 1 \vert \vinput{n}} - \Prob{\zeta \vert \vinput{n}} \geq 1-  6\delta$.\\
    
    We now realize a \twopart{t}{\delta}{\epsilon} between parties $A$ and $B$. If $B$ has input $0$, then $A$ simulates $v^*$ with initial opinion $0$ and $B$ simulates all other nodes as if they had all initial opinion $0$. If $B$ has input $1$, then $A$ simulates $v^*$ with initial opinion $0$ and $B$ simulates all other nodes as if they had as an initial opinion vector of $\frac{n}{2}$ ones and $\frac{n}{2}-1$ zeroes. In the simulation, $A$ and $B$ need to communicate (via the binary symmetric channel) only when $v^*$ sends or receives messages in $\sP$. At the end, $A$ will output $0$ or $1$, depending on the outcome of the Almost Consensus protocol $\sP$.
    % Finally, if $A$ receives more than $t$ bits of messages during the execution, the protocol fails.\\ 
    %
    
    Note that at each round of simulation of $\sP$, $v^*$ receives exactly\footnote{At the beginning of the proof we assumed w.l.o.g.\ that at each round each node pulls another node.} $1$ bit and no other information is available to it. 
    %
    %%Ho commentato tutto il discorso sull'identità del nodo dal quale si pulla: secondo me crea troppa confusione. Nel modello \pull classico non si assume che ci sono ID, inoltre con "identità" bisognerebbe appunto chiarire se si intende che il nodo che pulla sa su che porta sta pullando, o due nodi che pullano sullo stesso nodo vedono lo stesso id... insomma meglio spostare in un remark successivo alla prova
    % Indeed, in the \pull model, the only metadata\footnote{Inspired by the use of the term in software engeneering, we call \emph{metadata} any type of data about a message other than the content of the message itself.} attached to the message is the identity of the sender (i.e., the identity of  the pulled node). Moreover, notice that in the (deterministic) \pull model the sender is decided by $v^{*}$, while the uniform \pull model the sender is decided by an external source of randomness. In both cases, the identity of the sender can not be used to convey information to $v^*$ from the rest of the graph. 
    %
    Hence, in the resulting Two-Party Protocol, the only information obtained by $A$ is the bit received in that round from $B$ (corresponding to the rest of the graph).    
    
    Thanks to Fact \ref{fact:thereisamagicnode},
    for any $\delta < \frac{1}{8}$, if $B$ has input $0$, with probability at least $1 - 5\delta$, $A$ will output $0$. On the other hand, if $B$ has input $1$, with probability at least $\frac{1}{4} - \frac{\delta}{4}$, $A$ will output $1$. 
\end{proof}

\subsection{Lower bound for the Two-Party Protocol}
\label{ssec:lbTwoParty}

\begin{lemma}
    \label{lem:kl}
    Any  \twopart{t}{\delta}{\epsilon} requires a number of rounds
    $t$ such that $t = \Omega(\epsilon^{-2} \log \delta^{-1})$.
\end{lemma}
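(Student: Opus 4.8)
The plan is to pit an information-theoretic lower bound, coming from the distinguishing requirements (iv)--(v) of Definition~\ref{def:2party}, against an information-theoretic upper bound, coming from the fact that every bit $A$ receives is corrupted by a binary symmetric channel; both will be phrased via Kullback--Leibler divergence. The factor $\log \delta^{-1}$ should emerge precisely because (iv)--(v) are \emph{asymmetric} (tiny error on one side, only a constant gap on the other), so I must use KL divergence rather than total variation, which would lose the logarithm and yield only $t = \Omega(\epsilon^{-2})$. Write $P_0$ and $P_1$ for the distributions of $A$'s entire view, i.e.\ its private randomness together with the sequence $R_1,\dots,R_t$ of received bits, when $B$'s input is $b=0$ and $b=1$ respectively, and let $d(\cdot\,\|\,\cdot)$ denote binary KL divergence. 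Throughout I may assume $\epsilon\le 1/4$, since for larger $\epsilon$ the claimed bound reduces to $\Omega(\log\delta^{-1})$, which the argument below also yields.

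\textbf{Step 1 (lower bound on $\mathrm{KL}(P_1\|P_0)$).} Set $p_b:=\Prob{A\text{ outputs }1\cond b}$. Property (iv) gives $p_0=O(\delta)$ and property (v) gives $p_1\ge \tfrac14-O(\delta)=\Omega(1)$. Since $A$'s output is a (possibly randomized, hence deterministic after fixing $A$'s randomness) function of its view, the data-processing inequality gives $\mathrm{KL}(P_1\|P_0)\ge d(p_1\|p_0)$. Using $d(p\|q)\ge p\log(p/q)-O(1)$ and plugging in $p_1=\Omega(1)$, $p_0=O(\delta)$ yields $\mathrm{KL}(P_1\|P_0)\ge \Omega(1)\cdot\log\!\big(\Omega(1)/O(\delta)\big)-O(1)=\Omega(\log\delta^{-1})$.

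\textbf{Step 2 (upper bound on $\mathrm{KL}(P_1\|P_0)$).} Apply the chain rule along $A$'s randomness and then $R_1,\dots,R_t$. The term for $A$'s randomness vanishes, as it is independent of $b$, and the round-$i$ term is $\Ex{d\big(r_1(V_{i-1})\,\|\,r_0(V_{i-1})\big)}$, where $V_{i-1}$ is $A$'s view before round $i$ (drawn under $P_1$) and $r_b$ is the probability that $A$ receives a $1$ in round $i$, conditioned on $V_{i-1}$, under input $b$. Whatever bit $B$ transmits in round $i$, it passes through the channel that flips it with probability $\tfrac12-\epsilon$, so $r_0,r_1\in[\tfrac12-\epsilon,\tfrac12+\epsilon]$ and $|r_1-r_0|\le 2\epsilon$. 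Using $d(p\|q)\le (p-q)^2/\big(q(1-q)\big)$ with $q(1-q)\ge \tfrac14-\epsilon^2\ge \tfrac18$ gives $d(r_1\|r_0)=O(\epsilon^2)$ per round, hence $\mathrm{KL}(P_1\|P_0)=O(t\epsilon^2)$. Combining the two steps, $\Omega(\log\delta^{-1})\le O(t\epsilon^2)$, i.e.\ $t=\Omega(\epsilon^{-2}\log\delta^{-1})$.

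The delicate point, and the part I expect to require the most care, is the interactivity of the protocol: $B$'s transmitted bit may depend on $b$, on $B$'s private randomness, and on whatever messages $A$ sent in earlier rounds. The resolution I would emphasize is that the per-round estimate must be carried out \emph{conditionally on $A$'s full view} $V_{i-1}$, so that the chain rule closes regardless of how $B$ adapts; all of $B$'s dependence on $b$ is forced through the binary symmetric channel, whose output bias necessarily lies in $[\tfrac12-\epsilon,\tfrac12+\epsilon]$, and this is exactly what caps each round's KL contribution at $O(\epsilon^2)$ independently of the adaptive strategy. The second thing to verify carefully is the asymmetric binary-KL estimate of Step~1, since it is the sole source of the $\log\delta^{-1}$ factor; one must confirm that the small negative $(1-p_1)\log\frac{1-p_1}{1-p_0}$ term does not erode the dominant $p_1\log(p_1/p_0)$ contribution, which the bound $d(p\|q)\ge p\log(p/q)-O(1)$ settles.
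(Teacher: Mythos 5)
Your proposal is correct and follows essentially the same route as the paper's proof: a data-processing lower bound of $\Omega(\log\delta^{-1})$ on the KL divergence between $A$'s views under the two inputs, pitted against a chain-rule upper bound of $O(t\epsilon^2)$ obtained by conditioning on $A$'s prior view so that each received bit's conditional bias lies in $[\tfrac12-\epsilon,\tfrac12+\epsilon]$. The only cosmetic differences are that you bound the per-round term via $d(p\|q)\le (p-q)^2/(q(1-q))$ where the paper uses $2\bigl|\log\frac{1/2+\epsilon}{1/2-\epsilon}\bigr|$ with a Taylor expansion, and you make the handling of $A$'s private randomness and of the residual $(1-p_1)\log\frac{1-p_1}{1-p_0}$ term slightly more explicit.
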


\begin{proof}
    In     any    interaction between $A$ and $B$ of $t$ rounds,  we name   {\em view}   the sequence $\bw$   of all $1$-bit messages received by $A$ during the interaction. Notice that this sequence determines the sequence of messages sent by $A$ and the final output of $A$.
    Let $\bX$ be the random variable that represents the (random) view when $B$'s input is 1 and   let  $\bY$ be the random variable that represents  the (random) view  when $B$'s input is 0. 
    
    %\andy{$a$, $X$ e $Y$ sono vettori di bit di lunghezza $t$: in %accordo con il resto del paper, li dovremmo mettere in bold.}
     
    Recall that the Kullback–Leibler divergence between $\bX$ and $\bY$ is defined as
    
    \[ 
        D(\bX || \bY) := \sum_{\bw\in \{0,1\}^t} \Prob{\bX=\bw}  \log \frac{\Prob{\bX=\bw } }{\Prob{\bY= \bw} },
    \]
    
     We will prove the following two facts which easily  imply the claim of the lemma:
    
    \begin{equation} 
        \label{kl.lower}
        D(\bX || \bY) \geq \Omega \left( \log \frac 1 \delta \right),
    \end{equation}
    
    \begin{equation} 
        \label{kl.upper}
        D(\bX || \bY) \leq O  \left(  t {\epsilon^2} \right ).
    \end{equation}
    
    To prove (\ref{kl.lower}) we use the {\em data-processing inequality} which (in particular) states that for every (possibly random) function $f(\cdot)$ we have
    \begin{equation}
        \label{eq:dataproc}
        D(\bX || \bY)  \geq D(f(\bX) || f(\bY)).
    \end{equation}
    
    For a view $\bw$, define $f(\bw)$ to be the output of $A$ for that view. Then $f(\bX)$ and $f(\bY)$ are 0/1 random variables. If
    we call $p:= \Prob{f(\bX) = 1}$ and $q:= \Prob{f(\bY)=1}$ we get
    \[ 
        D(f(\bX) || f(\bY))  =  p \log \frac pq + (1-p) \log \frac {1-p}{1-q},
    \]
    and recalling that $p \geq \Omega(1)$ and $q \leq O(\delta)$ we get
    \begin{equation}
        \label{eq:lowerOnKL}
        D(f(\bX) || f(\bY) ) \geq \Omega \left( \log \frac 1 \delta \right),
    \end{equation}
    which proves (\ref{kl.lower}).
    
    To prove (\ref{kl.upper}) we use  the {\em chain rule}. If we have two pairs of  jointly distributed random variables $(X,X')$ and $(Y,Y')$,
    then the conditional KL divergence is defined as
    \[ 
        D( X'|X \ || \ Y'|Y ) = \E_{x\sim X, y\sim Y} D( (X' | X=x)  \ || \ (Y' | Y=y) ).
    \]
    Let $\oplus$ be the operation that denotes the concatenation between two sequence of random variables, the chain rule is
    \[ D (X \oplus X' || YY') = D(X||Y) + D( (X'|X)  \ || \ (Y'|Y) ) \]
    
    Since any view has length $t$, we  write
    $\bX= X_1 \oplus \ldots \oplus X_t$ and $\bY=Y_1 \oplus \ldots \oplus Y_t$. Then we can write the KL divergence of $\bX$ and $\bY$ as
    \[ 
        D(\bX||\bY) = D(X_1 || Y_1 ) + D((X_2 | X_1) \ || \ (Y_2 | Y_1) ) + \cdots + D((X_t | X_1\cdots X_{t-1} ) \ || \ (Y_t | Y_1\cdots Y_{t-1} ) ).
    \]
    
    For each $i$ we can easily compute $D((X_i | X_1\cdots X_{i-1} ) \ || \ (Y_i | Y_1\cdots Y_{i-1} ) )$, indeed both ``$ X_i | X_1\ldots X_{i-1} = w_1\ldots w_{i-1} $''
    and
    ``$ Y_i | Y_1\ldots Y_{i-1} = w'_1\ldots w'_{i-1} $''
    are binary random variables such that, for any $w = \{0,1\}$, it holds
    
    \begin{equation}\label{eq:kld}
            \frac{1}{2} - \epsilon \leq \Prob{X_i = w | X_1\ldots X_{i-1}} , \Prob{Y_i = w | Y_1\ldots Y_{i-1}}\leq \frac{1}{2} + \epsilon.
    \end{equation} 
    Thus we have
    \begin{align}
        &D((X_i | X_1\cdots X_{i-1} ) \ || \ (Y_i | Y_1\cdots Y_{i-1} ) )\nonumber \\
        &= \Prob{ X_i = 0 | X_1\cdots X_{i-1}} \log \frac{\Prob{X_i = 0 | X_1\cdots X_{i-1}}}{\Prob{Y_i = 0 | Y_1\cdots Y_{i-1}}} \\
        &\qquad+ \Prob{ X_i = 1 | X_1\cdots X_{i-1}} \log \frac{\Prob{X_i = 1 | X_1\cdots X_{i-1}}}{\Prob{Y_i = 1 | Y_1\cdots Y_{i-1}}}\nonumber\\
        &< \left| \log \frac{\Prob{X_i = 0 | X_1\cdots X_{i-1}}}{\Prob{Y_i = 0 | Y_1\cdots Y_{i-1}}} \right| + \left| \log \frac{\Prob{X_i = 1 | X_1\cdots X_{i-1}}}{\Prob{Y_i = 1 | Y_1\cdots Y_{i-1}}} \right|\nonumber \\
        &\leq 2 \left| \log \frac{\frac{1}{2} + \epsilon}{\frac{1}{2} - \epsilon} \right|
        = 2 \left| \log \frac{1 + 2\epsilon}{1 - 2\epsilon} \right|\\
        &= 2 \left| \log(1 + 2\epsilon) - \log( 1 - 2\epsilon) \right|\\
        &\stackrel{(a)}{=} 2 \left| \epsilon^2 + o(\epsilon^2) \right| = O(\epsilon^2),\nonumber
    \end{align}
    where in (a) we use the Taylor approximation. 
    Thus, we can conclude that
    \[ 
        D(\bX||\bY) = \sum_{i=1}^t D((X_i | X_1\cdots X_{i-1} ) \ || \ (Y_i | Y_1\cdots Y_{i-1} ) ) \leq O(t\epsilon^2).
    \]
\end{proof}

%\subsection{Generality of Theorem \ref{thm:noisylower}}\label{ssec:lbgeneral}

%In this section we briefly discuss some limitations and extension of Theorem \ref{thm:noisylower}. 

% IT SEEMS WE PROPOSE TO REMOVE THE FOLLWING SUBS.

\subsection{Absence of reliable components in communication mechanism}\label{ssec:lbgeneral}

As discussed in the Introduction, Theorem \ref{thm:noisylower} should be contrasted with the result on the noisy uniform \push model \cite{FHK17,fraigniaud_noisy_2018}, in which at each round a node may send a bit to a random neighbor and, upon being received, the bit may be flipped by the communication noise with probability $\frac 12 - \epsilon$. 

In \cite{FHK17} it is assumed that the protocol satisfies a \emph{symmetry} hypothesis, i.e. the choice of nodes on whether to communicate or not, cannot depend on the value of the bit that the node wish to communicate. Without such assumption, the action of communicating \textit{anything} can be employed to reliably solve the valid consensus problem (and many others).

More precisely, we can have nodes sending messages at even rounds\footnote{Note that this expedient relies on nodes sharing a synchronous binary clock. In \cite{FHK17}, it is shown how such clock can be easily obtained if, for example, nodes are initially inactive and become active upon receiving the first message.} only if they wish to communicate value $0$, and at odd rounds only if they wish to communicate value $1$. 
%Thus, by the mere fact that a node receives any message at an even/odd round, it can reliably infer the value of the original message without being affected by noise.

Our lower bound is indeed not applicable to the noisy uniform \push model, because it is not possible to reduce Consensus to the Two-Party Protocol in this model. Precisely, Definition~\ref{def:2party} requires that each bit of information passes through the noisy channel (property \textit{iii)}). 
In the noisy general \pull model, this is verified at the end of the proof of Lemma \ref{lem:reduction}. 
However, this is not the case in the noisy uniform \push model since, besides the (noisy) content of the message, the message received by $v^{*}$ communicates to it the fact that another node \emph{has chosen to communicate something} at the present round.

Hence, the above comparison with the noisy \push model essentially suggests that the lower bound in Theorem \ref{thm:noisylower} sensibly relies on the fact that no component of the communication model is immune to the noise action.

\subsection{A lower bound for Broadcast}
As discussed in Section~\ref{ssec:noisy}, \cite{BFKN18} gives an $\Omega(\epsilon^{-2} n)$ lower bound for Broadcast in the noisy uniform \pull model. 
Using a similar argument to that used in the proof of our Lemma \ref{lem:kl} in Section \ref{ssec:lbTwoParty}, in what follows we give a sketch of  how  the lower bound above can be strengthen to  $\Omega( \epsilon^{-2} n \log n)$ that holds for any protocol solving the binary Broadcast  problem  w.h.p. 

\begin{lemma}
    \label{lem:broadcastlower}
    Any protocol that solves  Broadcast     in the noisy uniform \pull model w.h.p.  requires 
    $\Omega( \epsilon^{-2} n \log n)$ rounds. 
\end{lemma}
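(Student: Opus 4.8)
The plan is to reuse the information-theoretic accounting behind the $\Omega(\epsilon^{-2}n)$ bound of \cite{BFKN18}, strengthening its per-node information requirement from a constant to $\Omega(\log n)$ by means of the same data-processing argument already used in Lemma~\ref{lem:kl}. Fix the source $s$ and compare the two executions of the protocol in which the broadcast message is $\msg=0$ and $\msg=1$, coupled so that all the internal randomness (the pull choices and the channel noise) is shared and only the source value differs. For a non-source node $v$, let $\bX_v$ and $\bY_v$ be the random views (the full sequence of bits received by $v$) under $\msg=1$ and $\msg=0$ respectively. Since $v$'s output is a (possibly randomized) function $f$ of its view and, by the w.h.p.\ requirement and a union bound, it equals $\msg$ with probability $\geq 1-n^{-\alpha}$ in both executions, the data-processing inequality together with the Bernoulli computation in~(\ref{eq:lowerOnKL}) (applied with $\delta = n^{-\alpha}$) gives
\[
    D(\bX_v \,||\, \bY_v) \;\geq\; D\!\left(f(\bX_v)\,||\,f(\bY_v)\right) \;\geq\; \Omega(\log n).
\]
This is the only place where the high-probability hypothesis enters, and the target is exactly a factor $\log n$ larger than the $\Omega(1)$ distinguishing advantage that suffices for broadcast with merely constant success probability.

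First I would recall the mechanism that makes the \cite{BFKN18} bound linear in $n$: in the uniform \pull model only the source initially carries information about $\msg$, the source is contacted only $\Theta(1)$ times per round in expectation, and each received bit is passed through the binary symmetric channel and hence, exactly as in the chain-rule estimate~(\ref{eq:kld})--(\ref{kl.upper}) of Lemma~\ref{lem:kl}, contributes at most $O(\epsilon^2)$ to the divergence of the puller's view. Consequently the source-originated signal is diluted across the $n$ nodes: source-derived useful bits reach a given node at rate $O(1/n)$ and carry only $O(\epsilon^2)$ of distinguishing advantage each, so accumulating even a constant divergence at every node already forces $\Omega(\epsilon^{-2}n)$ rounds.

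The key step is then to substitute the strengthened target $\Omega(\log n)$ for the constant advantage in this accounting. Because the number of rounds charged by the argument of \cite{BFKN18} grows linearly with the divergence that each node is required to accumulate, raising the requirement from $\Omega(1)$ to $\Omega(\log n)$ multiplies the bound by $\log n$, yielding $\Omega(\epsilon^{-2}n\log n)$ rounds, which matches the $O(\epsilon^{-2}n\log n)$ complexity of our Broadcast protocol.

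The hard part will be verifying that the relaying of partial information through the network cannot provide a shortcut: once a few nodes hold sub-threshold information about $\msg$ they can themselves be pulled and thereby spread it, and one must confirm that this amplification — bottlenecked by the noisy channel and by the $O(1/n)$ contact rate with any fixed node — still scales \emph{linearly} with the $\Omega(\log n)$ target rather than only logarithmically (a naive total-divergence potential, summed over nodes, is too weak here precisely because it permits an exponential-in-time growth that would collapse the $n$ factor). Controlling this is done inside the potential/counting argument of \cite{BFKN18}, whose dependence on the per-node distinguishing advantage one has to re-examine and certify to be linear; once that is in hand the remainder of the derivation is identical to theirs.
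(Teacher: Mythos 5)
Your key new ingredient is exactly the one the paper uses: keep the divergence upper bound of \cite{BFKN18} and replace the $\Omega(1)$ distinguishing requirement by $\Omega(\log\delta^{-1})$ via the data-processing/Bernoulli-KL computation of Lemma~\ref{lem:kl}, with $\delta=n^{-\alpha}$ coming from the w.h.p.\ hypothesis. That step is sound as you state it. Where you diverge from the paper is in how you hook this into the \cite{BFKN18} accounting: you set up a \emph{per-node} divergence requirement $D(\bX_v\,\|\,\bY_v)=\Omega(\log n)$ for every $v$ and then try to charge rounds against a per-node information influx of rate $O(\epsilon^2/n)$, which forces you to confront relaying --- nodes that have absorbed partial information about $\msg$ can themselves be pulled --- and you explicitly leave that amplification issue unresolved (``one has to re-examine and certify'' the linearity of the potential argument). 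As written, that is a genuine gap: the naive per-node influx bound simply does not hold once non-source nodes start displaying message-dependent bits, and nothing in your sketch rules out super-linear accumulation.

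The gap is avoidable, and the paper's route shows how: work with the \emph{global} transcript distributions $D_0$ and $D_1$ (the joint law of all messages received by all nodes over the $t$ rounds). The bound of \cite{BFKN18} is $D(D_0\,\|\,D_1)=O(t\epsilon^2/n)$ for this joint distribution, and because it is proved by a chain rule over the entire transcript it already internalizes all relaying --- every non-source node's displayed bit is a function of its own past view, which sits in the conditioning. A single node's view, and hence its output, is a function of the global transcript, so by data processing $\Omega(\log\delta^{-1})\le D(D_0\,\|\,D_1)\le O(t\epsilon^2/n)$, giving $t=\Omega(\epsilon^{-2}n\log n)$ directly. No summation over nodes, no multiplication by $n$ a second time, and no separate argument about amplification is needed; requiring even one node to succeed with probability $1-n^{-\alpha}$ suffices. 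I would rewrite your middle two paragraphs accordingly and drop the final one.
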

\begin{proof}[Sketch of Proof.]
In \cite{BFKN18}, the authors prove the following: let $\sP$ be a $t$-round protocol for Broadcast in 1-bit noisy uniform \pull   with error parameter $\epsilon$ starting from an arbitrary source.  Let $D_0$ and $D_1$ be the distributions of the sequence of  all received messages by all nodes through all the  $t$ rounds, assuming that the source message is $0$ or $1$, respectively. Then the KL divergence between $D_0$ and $D_1$ is $O(t \epsilon^2 / n)$. If the protocol succeeds with constant probability, then the statistical distance between $D_0$ and $D_1$ has to be $   \Omega(1)$ and the KL divergence between $D_0$ and $D_1$ also has to be $\Omega(1)$, leading to the $t = \Omega(\epsilon^{-2}n)$ lower bound of \cite{BFKN18}. 

The key observation here is that if the success probability is set to be no smaller than $1-\delta$, for some $0<\delta<1$, then, by using a similar argument to that in Lemma \ref{lem:kl}, we derive that the KL divergence $D_0$ and $D_1$ must be $\Omega(\log \delta^{-1})$, and so the lower bound is $\Omega(\epsilon^{-2} n \log \delta^{-1})$.

\end{proof}

\section{Upper Bounds in the Noisy Model}\label{sec:noisyub}
In Theorem~\ref{thm:noisylower}, we obtained a lower bound
$\Omega(\epsilon^{-2} \log n)$ on the number of rounds required by any protocol
for Almost Consensus and Almost Majority Consensus that works w.h.p., in the
noisy \general\ \pull model. In the next section we show that, in this model,
that lower bound is tight for both tasks. As discussed in the introduction,
combined with the lower bound for Broadcast in~\cite{BFKN18} this result
demonstrates a strong complexity gap between Consensus and Broadcast in the
noisy uniform \pull.

\subsection{Upper bound for Consensus in noisy \pull}\label{ssec:ubNoisyCons}

\begin{theorem}\label{thm:upperbound}
In the noisy uniform \pull model, with noisy parameter $\epsilon$, a protocol
exists that achieves Consensus within $O(\epsilon^{-2} \log n)$ rounds and
communication, w.h.p. The protocol requires $\Theta(\log\log n + \log
\epsilon^{-2})$ local memory. 

Moreover, if the protocol starts from any initial opinion vector with bias
$\bias = \Omega(\sqrt{n\log n})$, then it guarantees Majority Consensus, w.h.p.
\end{theorem}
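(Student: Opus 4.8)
The plan is to construct the two-phase protocol sketched in the introduction and analyze each phase separately. The protocol has each node repeatedly collect a batch of $\Theta(\epsilon^{-2})$ pulled opinions, take the majority of the batch, and update its opinion accordingly; this batching step is repeated $\Theta(\log n)$ times (Phase~1), and then followed by a single large batch of $\Theta(\epsilon^{-2}\log n)$ pulls whose majority becomes the final output (Phase~2). The local memory claim follows immediately, since a node only stores an opinion bit, a batch counter up to $O(\log n)$, and a tally up to $O(\epsilon^{-2}\log n)$, each requiring $O(\log\log n + \log\epsilon^{-2})$ bits.

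First I would set up the one-step drift analysis. When a node pulls a random neighbor, the probability it receives opinion $1$ (after the channel flips with probability $\tfrac12-\epsilon$) is an affine function of the current fraction of $1$-opinions, with the bias toward the true majority scaled by a factor $2\epsilon$. Taking the majority over a batch of $\Theta(\epsilon^{-2})$ such samples amplifies a small current bias $s$ into a new bias whose expectation is a constant-factor multiple of $s$, as long as $s$ is not yet $\Omega(n)$; this is the standard noisy-majority calculation from~\cite{FHK17,fraigniaud_noisy_2018}. I would then analyze Phase~1 in two sub-phases \emph{only in the analysis}: a symmetry-breaking sub-phase of $O(\epsilon^{-2}\log n)$ rounds that, from an arbitrary (possibly balanced) configuration, reaches bias $\Omega(\sqrt{n\log n})$ w.h.p.\ by tracking the $\Theta(\sqrt n)$ fluctuations that accumulate from a zero-bias start, and a boosting sub-phase that grows bias $\Omega(\sqrt{n\log n})$ into $\Omega(n)$ via the multiplicative drift above, using Chernoff/Azuma concentration to control the per-round deviation from the drift.

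Next I would handle the unanimous and majority cases, which are what give the \emph{validity} and \emph{Majority Consensus} guarantees. If the initial vector is unanimous (say all-$1$), I would show that a configuration with bias $\Omega(n)$ remains at bias $\Omega(n)$ on the correct side throughout Phase~1 w.h.p., so the majority after Phase~1 equals the initial opinion — this establishes validity. For the Majority Consensus claim, the same boosting sub-phase analysis applies verbatim starting from any initial bias $\Omega(\sqrt{n\log n})$, so the protocol drives that initial majority to bias $\Omega(n)$ without ever needing symmetry-breaking; I would simply invoke the boosting sub-phase with the initial configuration in place of the post-symmetry-breaking one. Finally, Phase~2: given that Phase~1 ends with one opinion held by $n/2+\Omega(n)$ nodes, each node's batch of $\Theta(\epsilon^{-2}\log n)$ pulls sees the correct empirical majority with probability $1-n^{-\Omega(1)}$ (Chernoff, since the per-sample bias is $\Omega(\epsilon)$ and the batch has $\Theta(\epsilon^{-2}\log n)$ samples), so a union bound over all $n$ nodes yields unanimous correct agreement w.h.p. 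The Stability requirement then follows by noting that once the configuration is unanimous (or $n/2+\Omega(n)$ on one side), re-running the batched-majority step keeps it there for any polynomial number of rounds w.h.p.

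\textbf{The main obstacle} will be the symmetry-breaking sub-phase. Unlike the boosting and Phase~2 analyses, which are essentially concentration arguments around a favorable drift, symmetry-breaking must show that from a perfectly balanced start the process escapes the neighborhood of $s=0$ and reaches bias $\Omega(\sqrt{n\log n})$ rather than lingering or oscillating. The difficulty is that near $s=0$ the drift is negligible and the dynamics is governed by the $\Theta(\sqrt n)$-scale random fluctuations of the batched majorities; I would need a careful martingale/anti-concentration argument showing these fluctuations push the bias past the $\sqrt{n\log n}$ threshold within $O(\epsilon^{-2}\log n)$ rounds w.h.p., after which the multiplicative drift takes over. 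Controlling the variance of the per-round bias change and ensuring the threshold is crossed (and not re-crossed back toward zero) before the drift dominates is the technically delicate part of the whole proof.
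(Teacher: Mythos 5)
Your proposal matches the paper's proof essentially step for step: the same two-phase batched-majority protocol, the same decomposition of Phase~1 into a symmetry-breaking and a boosting sub-phase (in the analysis only), the same Chernoff-plus-union-bound treatment of the final large batch, and the same memory accounting. The ``main obstacle'' you identify is resolved in the paper exactly as you anticipate --- a one-step Berry--Esseen anti-concentration bound giving constant probability of reaching bias $\Theta(\sqrt n)$, combined with concentration of the multiplicative drift above that scale, packaged via a black-box symmetry-breaking lemma (Lemma~4.5 of~\cite{clementi_tight_2017}).
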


\noindent
The protocol we refer to in the above theorem works in two consecutive phases.
Each phase is a simple application of the well-known $k$-\emph{Majority
Dynamics}~\cite{becchetti2016stabilizing,BCNPST14}:

\begin{quote}
\textsc{$k$-Majority}. \emph{At every round, each node samples $k$
neighbours\footnote{In the binary case when $k$ is odd, the $k$-Majority is
stochastically equivalent to the $k+1$-Majority where ties are broken u.a.r.\
(see Lemma~17 in~\cite{fraigniaud_noisy_2018}). For this reason, in this
section we assume that $k$ is odd.} independently and u.a.r. (with
replacement). Then, the node updates its opinion according to the majority
opinion in the sample.}
\end{quote}

\noindent Notice that $k$-Majority, as stated above, assumes a \emph{uniform
$k$-\pull model} where, at each round, every node can pull one message from
each of the $k$ neighbors chosen independently and uniformly at random
\emph{with replacement}\footnote{The assumption that neighbors are chosen
independently and uniformly at random with replacement is consistent with
previous work~\cite{becchetti_plurality_2015, ghaffari_polylogarithmic_2016}.}.
However, it is easy to verify that this parallel process can be implemented on
the uniform $1$-\pull\ model using additional $\Theta(k)$ local memory and with
a slowdown factor $k$ for its convergence time. In the rest of this section, we
will thus consider the following two-phase protocol on the uniform $k$-\pull
model.  

\begin{quote}
\textsc{Majority Protocol}. Let $\alpha$ be a  sufficiently large positive
constant\footnote{The value of $\alpha$ will be fixed later in the analysis.}.
Every node performs $\alpha \log n$ rounds of $k$-Majority with $k = \Theta(1 /
\epsilon^2)$, followed by one round of the $k$-Majority with $k =
\Theta(\epsilon^{-2} \log n)$.
\end{quote}

\noindent The proof of Theorem~\ref{thm:upperbound} will proceed according to
the following scheme: we will show that

\begin{itemize}
\item If $k = \Omega(1 / \epsilon^2)$ then
\begin{itemize}
\item starting from any opinion vector, within $O(\log n)$ rounds of
$k$-Majority the process reaches an opinion vector where the bias is $\bias =
\Omega(\sqrt{n} \log n)$, w.h.p.  (Lemma~\ref{thm:breaking}),
\item starting from an opinion vector with bias $\bias = \Omega(\sqrt{n \log
n})$ then, within $O(\log n)$ rounds of $k$-Majority the process reaches an
opinion vector where the bias is $\bias = \Theta(n)$ and the majority opinion
is preserved (Lemma~\ref{thm:toalinearbias}),
\end{itemize}
\item If $k = \Omega(\epsilon^{-2} \log n)$ and the opinion vector has bias
$\bias = \Theta(n)$, then in one round of $k$-Majority the process reaches
consensus on the majority opinion, w.h.p. (Lemma~\ref{thm:convergence}).
\end{itemize}

\subsection{Proof of Theorem~\ref{thm:upperbound}} 
Let $\mathbf{C}^{(t)}$ be the random variable indicating the opinion vector 
at round $t$ of the majority protocol. Let us name $R^{(t)}$ the number of nodes
supporting opinion $0$ in such opinion vector and let us define the bias at
round $t$ as $\bias^{(t)} = R^{(t)} - \left( n - R^{(t)}\right) = 2 R^{(t)} -
n$. In the rest of the section we assume w.l.o.g. that the bias is positive.

Since every time a node $u$ pulls the opinion of a node $v$, node $u$ correctly
gets the opinion of node $v$ with probability $\frac{1}{2} + \epsilon$ and it
gets the other opinion with probability $\frac{1}{2} - \epsilon$, we can 
write the probability $p_{r}$ that a node observes opinion $0$ after a pull (i.e.
after the node has sampled a neighbor and the noise has possibly flipped its
opinion), as a function of the bias 
\[
p_{r} = \left(\frac{1}{2} + \frac{\bias^{(t)}}{2n}\right)\left(\frac{1}{2} 
+ \epsilon \right) + \left(\frac{1}{2} - \frac{\bias^{(t)}}{2n}\right) 
\left(\frac{1}{2} - \epsilon \right) 
= \frac{1}{2} + \frac{\epsilon \bias^{(t)}}{n}.
\]
%
% Analogously we define $p_{\blue} = 1-p_r = \frac{1}{2} - \frac{\epsilon
% \bias^{(t)}}{n}$ as the probability to get the opinion $1$. These equations
% clearly show how the noise affects the Majority dynamics: At each round it
% behaves as in a noiseless model, but with a smaller bias. For this reason we
% compensate by choosing a sample $k$ as a function of the noise parameter.
% %
% Since we assumed that $k$ is odd, the probability that a node supports the
% opinion $0$ after a round of the $k$-Majority dynamics is thus
% \[
% \Prob{u \text{ supports 0}} 
% = \sum_{i = \left\lceil \frac{\ell }{2}\right\rceil}^k \binom{k}{i} p_r^i p_\blue^{k-i}\,.
% \]

From now on with ``$u$ supports 0''we mean that after a round of the $k$-Majority 
the node $u$ has opinion $0$, namely the most frequent opinion in the random sample is $0$. 
From Lemma~2 in~\cite{FHK17} it follows that, if each node samples 
$\Omega(1/\epsilon^2)$ neighbors, then the bias $\bias$ grows exponentially, in 
expectation, until it reaches linear size. 

\begin{lemma}[Lemma 2 in \cite{FHK17}]\label{lem:bin_maj_ampl}
Let $k = c/\epsilon^2$ be an odd integer for a sufficiently large $c$, and
$\mathbf{c}$ be any opinion vector with bias $\bias$, then it holds that
\[
\Prob{u \emph{ supports 0} \,\vert\, \mathbf{C}^{(t)} 
= \mathbf{c}} \geq \frac{1}{2} + \min\left\{4\bias \,,\, \frac{1}{100}\right\}.
\]
\end{lemma}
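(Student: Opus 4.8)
The plan is to reduce the statement to an elementary monotonicity-plus-derivative estimate for a binomial tail. Write $p$ for the probability that a single noisy pull returns opinion $0$; by the computation of $p_r$ immediately preceding the lemma we have $p=\tfrac12+\epsilon\bias$ (with $\bias$ the normalized bias appearing in the statement). Since $k$ is odd, a node supports $0$ exactly when a strict majority of its $k$ independent samples return $0$, so the quantity to bound is $g(\bias):=\Prob{\mathrm{Bin}(k,p)\ge (k+1)/2}$, viewed as a function of $\bias$ through $p$. I would first record two structural facts: (i) $g$ is nondecreasing in $\bias$, because increasing $p$ can only increase a binomial upper tail; and (ii) $g(0)=\tfrac12$ exactly, by symmetry of $\mathrm{Bin}(k,1/2)$ for odd $k$.

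The heart of the argument is a uniform lower bound on $g'(\bias)$ on the interval $[0,\tfrac1{400}]$. Using the standard telescoping identity $\tfrac{d}{dp}\Prob{\mathrm{Bin}(k,p)\ge m}=k\binom{k-1}{m-1}p^{m-1}(1-p)^{k-m}$ with $m=(k+1)/2$ together with $\tfrac{dp}{d\bias}=\epsilon$, I obtain
\[
g'(\bias)=\epsilon\, k\binom{k-1}{(k-1)/2}\,2^{-(k-1)}\,(1-4\epsilon^2\bias^2)^{(k-1)/2}.
\]
I would then bound the central binomial coefficient by $\binom{k-1}{(k-1)/2}2^{-(k-1)}\ge 1/\sqrt{2k}$, so that $k\binom{k-1}{(k-1)/2}2^{-(k-1)}\ge\sqrt{k/2}$ and hence $\epsilon\sqrt{k/2}=\sqrt{c/2}$ since $k=c/\epsilon^2$. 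For $\bias\le\tfrac1{400}$ and $\epsilon\le\tfrac12$ the remaining factor satisfies $(1-4\epsilon^2\bias^2)^{(k-1)/2}\ge e^{-4c\bias^2}\ge e^{-c/40000}$, a constant arbitrarily close to $1$. Thus $g'(\bias)\ge\sqrt{c/2}\,e^{-c/40000}$, and choosing the fixed constant $c$ large enough (e.g.\ $c$ a few hundred) makes this at least $4$, uniformly in $\epsilon$.

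Given $g'\ge 4$ on $[0,\tfrac1{400}]$, the conclusion follows by integrating and then invoking monotonicity. For $\bias\le\tfrac1{400}$ we have $g(\bias)=\tfrac12+\int_0^\bias g'(t)\,dt\ge\tfrac12+4\bias$, which equals $\tfrac12+\min\{4\bias,\tfrac1{100}\}$ in this range. For $\bias>\tfrac1{400}$, monotonicity of $g$ gives $g(\bias)\ge g(\tfrac1{400})\ge\tfrac12+\tfrac1{100}=\tfrac12+\min\{4\bias,\tfrac1{100}\}$. Combining the two ranges yields the claimed inequality.

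I expect the main obstacle to be making the lower bound on $g'$ genuinely \emph{uniform} in $\epsilon$: the two $\epsilon$-dependent pieces, namely $\epsilon\sqrt k=\sqrt c$ and the exponential factor $e^{-4c\bias^2}$, must both be controlled by the single fixed constant $c$, and the rounding of $c/\epsilon^2$ to an odd integer has to be absorbed without spoiling these estimates (after rounding $\epsilon\sqrt k=\sqrt c\,(1+o(1))$, which is harmless for $c$ large). A secondary point to handle with care is justifying the exact derivative identity for the binomial tail and the clean bound $\binom{k-1}{(k-1)/2}\ge 2^{k-1}/\sqrt{2k}$; both are standard, but they should be stated precisely since the whole argument hinges on the constant $\sqrt{c/2}$ beating $4$.
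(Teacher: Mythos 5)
The paper does not actually prove this lemma: it is imported verbatim as Lemma~2 of \cite{FHK17}, so there is no in-paper argument to compare yours against. Your proposal supplies a correct, self-contained and elementary proof, and the route is sound: conditioning on the configuration makes the $k$ pulls i.i.d.\ Bernoulli with success probability $p=\tfrac12+\epsilon\bias$ (matching the $p_r$ computed in Section~\ref{ssec:ubNoisyCons}, with $\bias$ read as the normalized bias, which is indeed how the lemma is invoked in Lemma~\ref{thm:breaking}); the telescoping identity $\tfrac{d}{dp}\Prob{\mathrm{Bin}(k,p)\ge m}=k\binom{k-1}{m-1}p^{m-1}(1-p)^{k-m}$ is correct, as is the bound $\binom{k-1}{(k-1)/2}2^{-(k-1)}\ge 1/\sqrt{2k}$, and the split at $\bias=\tfrac1{400}$ together with monotonicity of the tail in $p$ cleanly recovers the $\min\{4\bias,\tfrac1{100}\}$ form. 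What your approach buys is transparency about where the constant $4$ comes from: it is exactly the statement that $\epsilon\sqrt{k}=\sqrt{c}$ beats a fixed threshold, i.e.\ the $\Theta(\epsilon^{-2})$ sample size is precisely what makes one round of $k$-majority amplify the bias by a constant factor.

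One point you should tighten. Your uniform derivative bound $g'(\bias)\ge\sqrt{c/2}\,e^{-c/40000}$ is not monotone in $c$: it exceeds $4$ for $c$ between a few tens and roughly $10^5$, but decays to $0$ as $c\to\infty$, so as written the argument proves the lemma only for $c$ in a bounded window, whereas the statement (``for a sufficiently large $c$'') should hold for all $c$ beyond some threshold. This is easy to repair: for $p\ge\tfrac12$ the probability that the majority of $k$ i.i.d.\ samples equals the more likely value is nondecreasing in odd $k$ (the classical Condorcet monotonicity), so once the bound is established for one admissible $c_0$ it propagates to all larger odd $k=c/\epsilon^2$; alternatively, restrict the crude estimate $(1-4\epsilon^2\bias^2)^{(k-1)/2}\ge e^{-4c\bias^2}$ to the subinterval $\bias\le 1/\sqrt{c}$ and handle larger $\bias$ by a direct tail bound. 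With that one-line patch the proof is complete.
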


The above lemma is useful to give high probability results on the behaviour of
the protocol when the bias is large enough to guarantee concentration around
the expectation. It thus remains to handle the cases in which the bias is so
small that its expected multiplicative growth is smaller than its standard
deviation. By leveraging on Lemma~4.5 in~\cite{clementi_tight_2017} (see
Appendix~\ref{lemma4.5}), the next lemma shows how the variance of the process
and the multiplicative drift of the bias break symmetry, when the initial
opinion vector has small or no bias. 

\begin{lemma}\label{thm:breaking}
Let $m$ be any positive constant. Starting from an opinion vector with bias $0
\leqslant \bias^{(t)} < m \sqrt{n} \log n$, the system reaches an opinion
vector with bias $\bias^{(t)} \geq m \sqrt{n} \log n$ within $O(\log n)$ rounds
of the $k$-Majority dynamics, w.h.p.
\end{lemma}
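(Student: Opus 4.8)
The plan is to control the one-round evolution of the bias and then feed the resulting drift and variance estimates into the general symmetry-breaking lemma (Lemma~4.5 in~\cite{clementi_tight_2017}, see Appendix~\ref{lemma4.5}). First I would exploit the fact that on the complete graph the process has an exceptionally clean one-round description: every node samples its $k$ neighbours independently and u.a.r.\ (with replacement) and the noise acts independently on each pull, so, conditioned on $\mathbf{C}^{(t)}$ having bias $s$, the events ``$u$ supports $0$'' are independent across $u$ and all occur with the same probability $q_s$. Hence $R^{(t+1)} \sim \mathrm{Bin}(n,q_s)$ and $\bias^{(t+1)} = 2R^{(t+1)} - n$. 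Applying Lemma~\ref{lem:bin_maj_ampl} with $k = \Theta(1/\epsilon^2)$, and noting that every $s < m\sqrt{n}\log n$ keeps us strictly below the cap $\tfrac1{100}$, I would record the two estimates
\begin{equation}
\label{eq:driftvar}
\Ex{\bias^{(t+1)} \cond \bias^{(t)} = s} \;\ge\; (1+\gamma)\,s
\qquad\text{and}\qquad
\Var{\bias^{(t+1)} \cond \bias^{(t)} = s} \;=\; \Theta(n),
\end{equation}
for a constant $\gamma>0$ supplied by Lemma~\ref{lem:bin_maj_ampl} (whose statement already presupposes that the constant in $k = c/\epsilon^2$ is large enough). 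Moreover, being an affine image of a binomial, $\bias^{(t+1)}$ concentrates around its mean with sub-Gaussian tails of width $\Theta(\sqrt{n})$, which I would obtain from a Chernoff bound.

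The two estimates in~\eqref{eq:driftvar} isolate the two qualitatively different regimes that the proof must reconcile. When the bias is already moderately large, say $\sqrt{n}\log n \lesssim s < \Theta(n)$, the multiplicative drift $(1+\gamma)s$ dominates the $\Theta(\sqrt{n})$ fluctuations, so a direct Chernoff argument gives $\bias^{(t+1)} \ge (1+\gamma/2)\,\bias^{(t)}$ w.h.p., and iterating this for $O(\log n)$ rounds pushes the bias to the target. The delicate regime is the small-bias one, $0 \le s \lesssim \sqrt{n}$, where the expected growth $\gamma s$ is smaller than the standard deviation $\Theta(\sqrt{n})$ and no concentration is available. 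Here I would instead exploit the \emph{anti}-concentration of the binomial: because $\Var{\bias^{(t+1)}\cond \bias^{(t)}=s}=\Theta(n)$ uniformly in $s$, a central limit (Berry--Esseen) estimate yields a constant $p_0>0$ with $\Prob{|\bias^{(t+1)}| \ge a\sqrt{n} \cond \mathbf{C}^{(t)}} \ge p_0$ from \emph{any} configuration, so the randomness of the dynamics itself manufactures an $\Omega(\sqrt{n})$ imbalance. Letting $\tau$ be the first round at which $|\bias^{(t)}| \ge a\sqrt{n}$, a stopping-time argument bounds $\Prob{\tau > T} \le (1-p_0)^{T}$, which is $n^{-\Omega(1)}$ once $T = \Theta(\log n)$ with a large enough constant; I would also use that the drift is sign-preserving (it amplifies whichever opinion is in excess), so that once $|\bias|$ is of order $\sqrt{n}$ the sign stabilises and the bias does not drift back through $0$.

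Rather than stitching these two regimes together by hand, the plan is to verify that~\eqref{eq:driftvar} together with the sub-Gaussian concentration are exactly the hypotheses of Lemma~4.5 of~\cite{clementi_tight_2017}, and to invoke that lemma to conclude that the process reaches bias $\ge m\sqrt{n}\log n$ within $O(\log n)$ rounds, w.h.p. The main obstacle, and the reason such a black-box lemma is worth its weight, is the intermediate window $\sqrt{n} \lesssim s \lesssim \sqrt{n}\log n$: there neither pure concentration (the noise is too large relative to $s$) nor pure anti-concentration (we now need a high-probability, not constant-probability, statement) suffices, and one must argue that the multiplicative drift reliably wins over the additive $\Theta(\sqrt{n})$ noise across the $O(\log n)$ successive rounds without the bias collapsing. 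Verifying the hypotheses of Lemma~4.5 in the form~\eqref{eq:driftvar} — in particular that the constant $\gamma$ provided by Lemma~\ref{lem:bin_maj_ampl} meets that lemma's drift requirement — is the remaining routine work.
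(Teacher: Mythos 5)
Your proposal is correct and follows essentially the same route as the paper: establish that the conditional variance of the bias is $\Theta(n)$ (hence, via Berry--Esseen, a constant probability of reaching an $\Omega(\sqrt{n})$ bias from any small-bias configuration) and that Lemma~\ref{lem:bin_maj_ampl} plus an additive Chernoff bound give multiplicative growth of the bias with failure probability $e^{-\Omega(\bias^2/n)}$, then invoke Lemma~4.5 of~\cite{clementi_tight_2017} as a black box. The extra stopping-time discussion you sketch is subsumed by that lemma, exactly as in the paper's proof.
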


\begin{proof}
Let us name $\Omega$ the space of all opinion vectors. In order to apply 
Lemma~4.5 in~\cite{clementi_tight_2017} we need to show that the following two
properties hold:

\begin{itemize}
\item For any positive constant $h$, a positive constant $c_1 < 1$ exists
such that, for any opinion vector $\mathbf{c} \in \Omega$ with $\bias^{(t)} < m
\sqrt{n} \log n$, it holds that
\[
\Prob{\bias^{(t+1)} < h\sqrt{n} \cond \mathbf{C}^{(t)}  = \mathbf{c}} < c_1;
\]
\item Two positive constants $\epsilon'$ and $c_2$ exist such that, for
every opinion vector $\mathbf{c} \in \Omega$ with $\bias^{(t)} < m \sqrt{n} 
\log n$, it holds that
\[
\Prob{\bias^{(t+1)} < (1+\epsilon')\bias \cond \mathbf{C}^{(t)} = \mathbf{c}} 
< e^{-c_2 \bias^2/n}.
\]
\end{itemize}

Let $\mathbf{c} \in \Omega$ such that $\bias^{(t)} < m
\sqrt{n} \log n$. As for the first property, since $\bias^{(t)} = 2 R^{(t)} - n$ then
$\Var{\bias^{(t+1)}} = \Theta(\Var{R^{(t+1)}}) = \Theta(n (\frac{1}{2} +
2\frac{\bias^{(t)}}{n})(\frac{1}{2} -  2\frac{\bias^{(t)}}{n}))$. Notice that
in this symmetry-breaking phase $\frac{\bias^{(t)}}{n} = o(1)$, thus
$\Var{\bias^{(t+1)} \cond \mathbf{C}^{(t)} = \mathbf{c} } = \Theta(n)$. Hence, a simple application of the
Berry-Esseen Theorem (Theorem~\ref{thm:berreyesseen} in
Appendix~\ref{berry-esseen}) shows that, for a sufficiently large $n$, there is
constant probability that the bias becomes $\Theta(\sqrt{n})$.

As for the second property, for any $\mathbf{c} \in \Omega$ with $\bias^{(t)} < m \sqrt{n} 
\log n$, by Lemma~\ref{lem:bin_maj_ampl} we have that
\[
\Ex{R^{(t+1)} \,\vert\, \mathbf{C}^{(t)} 
= \mathbf{c} } 
= \min\left( n \left( \frac{1}{2} +  4\frac{\bias^{(t)}}{n} \right) , n
\left( \frac{1}{2} +  \frac{1}{100} \right) \right) 
= n \left( \frac{1}{2} + 4\frac{\bias^{(t)}}{n} \right)\,.
\]
By applying the additive form of the Chernoff Bound (see 
Appendix~\ref{apx:cbaf}) with $\lambda = 2\bias^{(t)}$ we get
\[
\Prob{R^{(t+1)} < n \left(\frac{1}{2} +  4\frac{\bias}{n}\right) - 2\bias \cond \mathbf{C}^{(t)} = \mathbf{c} } 
< e^{-8 \bias^2 / n}.
\]
Thus, with probability at least $1 - e^{-8 \bias^2 / n}$ it holds 
\[
\bias^{(t+1)} = 2 R^{(t+1)} - n
\geq  2n\left(\frac{1}{2} +  4\frac{\bias}{n}\right) - 2\bias - n
= 2 \bias\,.
\]
\end{proof}

Once the bias is large enough, by iteratively using
Lemma~\ref{lem:bin_maj_ampl} we can also show that the bias reaches $\Theta(n)$
within $O(\log n)$ rounds and that the majority opinion is preserved, w.h.p.

\begin{lemma}\label{thm:toalinearbias}
Let $m$ be any positive constant. Starting from an opinion vector with bias
$\bias^{(t)} \geq m \sqrt{n \log n}$, the system reaches an opinion vector with
bias $\bias^{(t)} = \Theta(n)$ within $O(\log n)$ rounds of the $k$-Majority
dynamics, w.h.p., and the sign of the bias is preserved, w.h.p.
\end{lemma}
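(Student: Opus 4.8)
The plan is to prove the lemma by a multiplicative-drift argument that iterates, over $O(\log n)$ consecutive rounds, exactly the kind of per-round estimate already isolated in the proof of Lemma~\ref{thm:breaking}. The engine is the amplification bound of Lemma~\ref{lem:bin_maj_ampl}: as long as the bias has not yet saturated, i.e. $4\bias^{(t)}/n \leq 1/100$, each node ends a round supporting the majority opinion $0$ with probability $\tfrac12 + 4\bias^{(t)}/n$, so that $\Ex{R^{(t+1)} \mid \mathbf{C}^{(t)}=\mathbf{c}} = \tfrac n2 + 4\bias^{(t)}$ and hence $\Ex{\bias^{(t+1)} \mid \mathbf{C}^{(t)}=\mathbf{c}} = 8\bias^{(t)}$. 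Thus the bias carries a constant multiplicative drift strictly larger than $1$ throughout the whole range $m\sqrt{n\log n} \leq \bias^{(t)} \leq n/400$.

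First I would establish a single-round growth claim: conditioned on any configuration $\mathbf{c}$ with $m\sqrt{n\log n} \leq \bias^{(t)} \leq n/400$, one has $\bias^{(t+1)} \geq 2\bias^{(t)}$ with probability at least $1 - e^{-c(\bias^{(t)})^2/n}$ for a constant $c>0$. This is obtained precisely as in Lemma~\ref{thm:breaking}: $R^{(t+1)}$ is a sum of $n$ independent indicators, so applying the additive Chernoff bound (Appendix~\ref{apx:cbaf}) with deviation $\lambda = \Theta(\bias^{(t)})$ absorbs the downward fluctuation while leaving a net multiplicative gain. The key quantitative point is that $\bias^{(t)} \geq m\sqrt{n\log n}$ forces $(\bias^{(t)})^2/n \geq m^2\log n$, so that this per-round failure probability is at most $n^{-\Omega(1)}$ and, since the bias only grows, shrinks geometrically from one round to the next.

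Next I would chain these rounds. While the bias stays in $[m\sqrt{n\log n}, n/400]$ it at least doubles each round, so after at most $\log_2\!\big(\tfrac{n/400}{m\sqrt{n\log n}}\big)=O(\log n)$ rounds it exceeds $n/400$; a union bound over these rounds, using that each failure probability is at most $n^{-\Omega(1)}$ and geometrically decreasing, shows the growth phase succeeds w.h.p. As soon as the bias crosses $n/400$ the minimum in Lemma~\ref{lem:bin_maj_ampl} saturates at $1/100$, giving $\Ex{\bias^{(t+1)} \mid \mathbf{C}^{(t)}=\mathbf{c}}=n/50$; one further additive Chernoff bound then yields $\bias^{(t+1)}=\Theta(n)$, and the same estimate keeps the bias at $\Theta(n)$ for the remaining rounds. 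Sign preservation then comes for free: on the good event the bias never falls below its initial value $m\sqrt{n\log n}>0$, so the majority opinion---the sign of $\bias^{(t)}$---is preserved throughout, w.h.p.

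The main obstacle I anticipate is the uniform bookkeeping of concentration across all $O(\log n)$ rounds: the argument must rule out a large downward deviation in every single round, and this hinges on the invariant $\bias^{(t)}\geq m\sqrt{n\log n}$, which keeps every per-round failure probability $e^{-\Omega((\bias^{(t)})^2/n)}$ small enough to survive the union bound. For small $m$ this requires the Chernoff exponent to beat the logarithmic number of rounds, which is guaranteed either by taking the constant in $k=\Theta(1/\epsilon^2)$ large enough, or by noting that the incoming bias produced by Lemma~\ref{thm:breaking} is in fact $\Theta(\sqrt{n}\,\log n)$, making each failure probability superpolynomially small. A secondary, purely technical point is cleanly splicing the multiplicative-growth regime $\bias\leq n/400$ to the saturated regime, where the drift becomes additive of order $n$ rather than multiplicative, but where the bias nonetheless remains $\Theta(n)$ and positive.
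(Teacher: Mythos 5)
Your proposal is correct and follows essentially the same route as the paper: iterate the drift bound of Lemma~\ref{lem:bin_maj_ampl} with an additive Chernoff bound to get per-round doubling with failure probability $e^{-\Omega(\bias^2/n)}\leq n^{-\Omega(1)}$, then union-bound over the $O(\log n)$ rounds until the bias reaches $n/400$. Your extra care about small $m$ (and the observation that Lemma~\ref{thm:breaking} actually delivers bias $\Theta(\sqrt{n}\log n)$, making each failure probability superpolynomially small) is a legitimate refinement of a point the paper's own proof passes over.
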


\begin{proof}
As long as $4\bias \leq \frac{n}{100}$, similarly to the second point of
Lemma~\ref{thm:breaking}, from Lemma~\ref{lem:bin_maj_ampl} it follows that 
\[
\Prob{\bias^{(t+1)} < 2 \bias \cond \mathbf{C}^{(t)}
= \mathbf{c}} < e^{-8\bias^2/n}.
\]
Using the fact that $\bias^{(t)} \geq m \sqrt{n \log n}$ we obtain that
\[
\Prob{\bias^{(t+1)} < 2 \bias \cond \mathbf{C}^{(t)}
= \mathbf{c}} < e^{8 m \log n} < \frac{1}{n^{8 m}} 
= \frac{1}{n^{\Theta(1)}},
\]
Thus, the bias grows of a multiplicative factor at each round, w.h.p., and this
implies, by using the Union Bound, that it reaches a value
$\bias^{(t)} \geq \frac{n}{400}$ within $O(\log n)$ rounds, w.h.p.
\end{proof}

For many values of $\epsilon$, the use of the $k$-Majority dynamics with $k =
\Theta(1/\epsilon^2)$ is not sufficient to reach an opinion vector where all, or
at least many, nodes support the same opinion. Indeed, if $\epsilon$ is a
constant smaller than $\frac{1}{2}$ and we consider an opinion vector $\mathbf{c}_0$ where all
the nodes already support opinion $0$, it is easy to see that after one round a
constant fraction of the nodes will change opinion, w.h.p.: let $X_u$ be the 
random variable counting the number of times a node $u$ pulls opinion $1$ in one
round starting from such opinion vector. It holds that 
\[
\Ex{X_u \cond \mathbf{C}^{(t)}
= \mathbf{c}_0} = \frac{c}{\epsilon^2}\left(\frac{1}{2} - \epsilon \right) 
= \frac{c}{2\epsilon^2} - \Theta\left(\frac{1}{\epsilon}\right) 
\]
The variance of $X_u$ is $\frac{c}{\epsilon^2}\left(\frac{1}{2} - \epsilon
\right)\left(\frac{1}{2} + \epsilon \right) = \Theta\left(1 /
\epsilon^2\right)$ and thus the standard deviation is $\Theta\left(1/
\epsilon\right)$. It follows from the Berry-Esseen Theorem (see
Appendix~\ref{berry-esseen}) that $X_u$ has constant probability to deviate
from his expectation of a quantity $\Theta\left(1/ \epsilon\right)$, that is
sufficient to increase $X_u$ above $\frac{c}{2\epsilon^2}$, making opinion $1$
the majority in the sample. Hence, each node has constant probability to change
opinion and by a simple application of a Chernoff Bound and the Union Bound it
holds that at least a constant fraction of nodes will change opinion with
probability $1 - e^{-\Theta(n)}$.

The above example shows that, in order to reach consensus, we need to increase
the size of the sample of each node. In the next lemma we show, as an
application of a Chernoff Bound and the Union Bound, that considering a
sufficiently large sample, in one step the process reaches an opinion vector
where only one opinion is present.

\begin{lemma}\label{thm:convergence}
For any positive constant $c_3$, a suitable constant $c_4$ exists such
that, starting from an opinion vector with bias $\bias^{(t)} \geq \frac{c_3
n}{2}$, in one round of the $k$-Majority dynamics with $k = c_4 \epsilon^{-2}
\log n$, the system reaches an opinion vector with $\bias = n$,
w.h.p.
\end{lemma}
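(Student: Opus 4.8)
The plan is to reduce the global statement to a per-node statement: I will show that \emph{each individual node} adopts the majority opinion with probability $1 - n^{-\Omega(1)}$ after the single round, and then conclude by a union bound over the $n$ nodes. First I would use that the bias is assumed positive, so opinion $0$ is the majority, and combine the hypothesis $\bias^{(t)} \geq c_3 n / 2$ with the expression $p_{r} = \frac{1}{2} + \frac{\epsilon \bias^{(t)}}{n}$ derived earlier in this section. This gives that a single pull returns opinion $0$ with probability $p_{r} \geq \frac{1}{2} + \frac{\epsilon c_3}{2}$.

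Next I would fix an arbitrary node $u$ and let $X_u$ denote the number of its $k$ pulls that return opinion $0$. Conditioned on the current opinion vector $\mathbf{C}^{(t)}$, the $k$ pulls are independent, since in the uniform $k$-\pull model the sampled neighbors are drawn independently and u.a.r.\ with replacement and the binary-symmetric noise corrupts each pulled bit independently. Hence $X_u$ is a sum of $k$ i.i.d.\ Bernoulli variables of mean $p_{r}$, so $\Ex{X_u} \geq \frac{k}{2} + \frac{k \epsilon c_3}{2}$. Since $k$ is odd, node $u$ ends up supporting $0$ exactly when $X_u > k/2$, and therefore the probability that $u$ \emph{fails} to support $0$ is at most $\Prob{X_u \leq k/2} \leq \Prob{X_u \leq \Ex{X_u} - \frac{k \epsilon c_3}{2}}$.

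I would then apply the additive form of the Chernoff Bound (Appendix~\ref{apx:cbaf}) with deviation $\lambda = \frac{k \epsilon c_3}{2}$, obtaining a per-node failure probability of at most $\exp(-2\lambda^2 / k) = \exp\!\left(-\frac{c_3^2}{2}\, k \epsilon^2\right)$. Substituting $k = c_4 \epsilon^{-2} \log n$ cancels the $\epsilon^2$ and yields a failure probability of at most $n^{-c_3^2 c_4 / 2}$ for each fixed node. A union bound over all $n$ nodes then bounds the probability that \emph{some} node does not support $0$ by $n^{1 - c_3^2 c_4 / 2}$. Choosing $c_4$ to be a sufficiently large constant depending on $c_3$ makes this exponent strictly less than $-1$, so w.h.p.\ every node supports $0$; that is, $R^{(t+1)} = n$ and hence $\bias^{(t+1)} = 2R^{(t+1)} - n = n$, which is exactly the claim.

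The argument is a single Chernoff-plus-union-bound estimate, so I do not expect a deep obstacle. The only point requiring care is bookkeeping the constant in the exponent: it is the constant $\frac{c_3^2}{2}$ multiplying $k\epsilon^2$ that governs the per-node failure probability, and it must be amplified by the choice of $c_4$ to dominate the factor $n$ coming from the union bound. One should also make explicit that the $k$ samples are genuinely independent given the configuration, as noted above, so that the Bernoulli concentration applies verbatim.
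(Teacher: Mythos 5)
Your proof is correct and follows essentially the same route as the paper's: condition on the configuration, bound the probability that a single pull returns the majority opinion by $\frac{1}{2}+\Theta(\epsilon)$, apply a Chernoff bound to the $k$ independent samples of a fixed node to get a per-node failure probability of $n^{-\Theta(c_3^2 c_4)}$, and finish with a union bound over the $n$ nodes. The only cosmetic difference is that you use the additive form of the Chernoff bound where the paper uses the multiplicative form; both yield the same exponent up to constants.
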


\begin{proof}
Let $\mathbf{c}$ a opinion vector such that $\bias^{(t)} \geq \frac{c_3
n}{2}$. The proof is a straightforward application of the multiplicative form of the
Chernoff Bound (see Appendix~\ref{apx:cbmf}) and of the Union Bound. Indeed,
let $X_u$ be the random variable counting the number of times a node $u$ pulls
opinion $1$ in one round starting from an opinion vector with bias $\bias^{(t)}
\geq \frac{c_3 n}{2}$. It holds that
\begin{align*}
\Ex{X_u \cond \mathbf{C}^{(t)}
= \mathbf{c} } & = \frac{c_4 \log n}{\epsilon^2} 
\left(
\left( \frac{1}{2} + c_3 \right) 
\left( \frac{1}{2} + \epsilon \right) 
+ \left( \frac{1}{2} - c_3 \right) 
\left( \frac{1}{2} - \epsilon \right)
\right)\\ 
& = \frac{c_4 \log n}{\epsilon^2} \left(\frac{1}{2} + 2 \epsilon c_3 \right)
\end{align*}
Notice that, if $X_u \geq \frac{c_4 \log n}{\epsilon^2} \left(\frac{1}{2} +
\epsilon c_3 \right)$, then the majority opinion is $1$. The probability of the
complementary event is
\begin{align*}
\Prob{X_u < \frac{c_4 \log n}{\epsilon^2} \left(\frac{1}{2} + \epsilon c_3 \right) \cond \mathbf{C}^{(t)}
= \mathbf{c}}
&= \Prob{X_u < \frac{c_4 \log n}{\epsilon^2} \left(\frac{1}{2} + 2 \epsilon c_3 \right) - \frac{c_3 c_4 \log n}{\epsilon} \cond \mathbf{C}^{(t)}
= \mathbf{c}}\\
&= \Prob{X_u < \Ex{X_u}\left(1 - \frac{c_3 \epsilon}{\frac{1}{2} + 2\epsilon c_3}\right) \cond \mathbf{C}^{(t)}
= \mathbf{c}} \\
&< \exp \left(- \frac{c_4 \log n}{\epsilon^2} \left(\frac{1}{2} + 2\epsilon c_3 \right) \left(\frac{c_3 \epsilon}{\frac{1}{2} + 2\epsilon c_3} \right)^2 \right)\\
&= \exp \left(- \frac{c_3^2 c_4 \log n}{\frac{1}{2} + 2\epsilon c_3} \right)\\
&\leq \exp \left(- \frac{c_3^2 c_4 \log n}{\frac{1}{2} + 2} \right)
= \frac{1}{n^{2}}
\end{align*}
where in the last equality we chose $c_4 = \frac{4}{5} c_3^2$. The thesis then
follows by applying the Union Bound over the $n$ nodes.
\end{proof}

\subsection{A tight bound for Broadcast in noisy uniform \pull}\label{ssec:ubNoisyBroad}
We show an $\bigO(\epsilon^{-2} n \log n)$ upper bound for Broadcast in the noisy uniform \pull model.
This is obtained via a  simple protocol \textsc{NoisyBroadcast}  we describe below.
Observe that the protocol assumes that nodes know the value of the noise parameter $\epsilon$. 

\smallskip
\textbf{Protocol \textsc{NoisyBroadcast}.} 
\begin{itemize}
    \item In the first phase, each non-source node displays $0$ (obviously, the source  displays its input value), and  performs a pull operation for $\Theta(\epsilon^{-2} n \log n) $ rounds; it then chooses to support value $1$ iff the fraction of received messages equal to 1 is at least $\frac 12 - \epsilon(1-\frac{1}{2n})$, zero otherwise. 
    
    \item In the second phase, nodes run the Majority Consensus protocol of Theorem \ref{thm:upperbound}, starting with the value obtained at the end of the first phase. 
\end{itemize}
 
We prove the following theorem. Notice that this upper bound is tight, since it matches the lower bound in Lemma~\ref{lem:broadcastlower}.

\begin{theorem}
    \label{thm:broadcastupper}
    Protocol \textsc{NoisyBroadcast} solves the Broadcast problem in the noisy uniform \pull model in $\bigO (\epsilon^{-2} n \log n)$ rounds, w.h.p. 
\end{theorem}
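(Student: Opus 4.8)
The plan is to analyze the two phases separately and then glue them with a union bound. The only thing Phase~2 needs from Phase~1 is a configuration whose bias is $\Omega(\sqrt{n\log n})$ \emph{and} whose majority opinion is the message $\msg$; once this holds, the Majority Consensus guarantee of Theorem~\ref{thm:upperbound} (internally, Lemma~\ref{thm:toalinearbias} followed by convergence) drives all nodes to $\msg$ in a further $O(\epsilon^{-2}\log n)$ rounds, w.h.p. Since Phase~1 runs for $\Theta(\epsilon^{-2}n\log n)$ rounds and Phase~2 for $O(\epsilon^{-2}\log n)$, the total round (and communication) count is $O(\epsilon^{-2}n\log n)$, matching Lemma~\ref{lem:broadcastlower}. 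Hence the whole argument reduces to the following Phase~1 claim: w.h.p.\ the opinion vector at the end of Phase~1 has bias $\Omega(\sqrt{n\log n})$, with majority opinion equal to $\msg$.

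First I would analyze a single non-source node $u$ during Phase~1. Each of its $T=\Theta(\epsilon^{-2}n\log n)$ pulls hits the source with probability $\Theta(1/n)$, and the pulled displayed bit is then flipped with probability $\tfrac12-\epsilon$. A direct computation gives that the probability $u$ receives a $1$ in a single pull equals $p_1=\tfrac12-\epsilon+\Theta(\epsilon/n)$ when $\msg=1$ and $p_0=\tfrac12-\epsilon$ when $\msg=0$; the protocol's threshold $\tfrac12-\epsilon(1-\tfrac1{2n})$ lies strictly between $p_0$ and $p_1$, with both gaps of order $\epsilon/n$. The empirical fraction of received $1$'s has standard deviation $\Theta(1/\sqrt{T})=\Theta(\epsilon/\sqrt{n\log n})$, so the signal-to-deviation ratio is $\Theta(\sqrt{(\log n)/n})$. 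Feeding this into the Berry--Esseen Theorem (Theorem~\ref{thm:berreyesseen}) and using $\Phi(x)-\tfrac12=\Theta(x)$ for small $x$, I would conclude that $u$ ends Phase~1 supporting $\msg$ with probability $\tfrac12+\Theta(\sqrt{(\log n)/n})$, where the hidden constant grows with the constant $c$ in $T=c\,\epsilon^{-2}n\log n$.

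Next I would pass from one node to the whole population. Given $\msg$, every node's displayed bit in Phase~1 is deterministic, and the pull destinations and the per-transmission noise are mutually independent across nodes; hence the final opinions of the $n-1$ non-source nodes are independent Bernoulli trials, so the number $W$ of nodes supporting $\msg$ at the end of Phase~1 satisfies $\Ex{W}\geq \tfrac n2+\Omega(\sqrt{n\log n})$. A Chernoff/Hoeffding bound then shows $W\geq \tfrac n2+\Omega(\sqrt{n\log n})$, i.e.\ bias $\Omega(\sqrt{n\log n})$ towards $\msg$, w.h.p.; the source, still displaying $\msg$, only helps. This establishes the Phase~1 claim and, combined with Phase~2 as above, finishes the proof.

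The main obstacle is the single-node estimate, because the per-pull signal separating the two cases is only $\Theta(\epsilon/n)$, far below the $\Theta(1)$ per-bit noise. I would therefore need a genuinely two-sided normal approximation: the Berry--Esseen error term, of order $1/\sqrt{T}=\Theta(\epsilon/\sqrt{n\log n})$, must be checked to be smaller than the signal $\Theta(\sqrt{(\log n)/n})$ by a factor $\Omega(\log n)$, so that the leading Gaussian term indeed dominates. The second delicate point is a balancing of constants: $c$ must be chosen large enough that the expected bias $\Omega(\sqrt{n\log n})$ both exceeds the constant-times-$\sqrt{n\log n}$ threshold required by Theorem~\ref{thm:upperbound} and survives the $\Theta(\sqrt{n\log n})$ fluctuations incurred in the cross-node Chernoff step, in \emph{both} the $\msg=0$ and $\msg=1$ cases (the former being the tighter one, with the smaller gap $\Theta(\epsilon/n)$ to the threshold).
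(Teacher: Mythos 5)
Your proposal is correct and follows essentially the same route as the paper's proof: a single-node Berry--Esseen estimate showing each non-source node guesses $\msg$ with probability $\tfrac12+\Theta(\sqrt{(\log n)/n})$, a cross-node Chernoff bound yielding bias $\Omega(\sqrt{n\log n})$ towards $\msg$ w.h.p., and then the Majority Consensus guarantee of Theorem~\ref{thm:upperbound} for the second phase. Your explicit check that the Berry--Esseen error $O(1/\sqrt{T})=O(\epsilon/\sqrt{n\log n})$ is dominated by the signal $\Theta(\sqrt{(\log n)/n})$ is a point the paper glosses over, but it is a refinement of the same argument rather than a different one.
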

\begin{proof}
    We prove that at the end of the first phase, the fraction of nodes which have obtained a value equal to the source's input is greater than those that failed by at least $\sqrt{ n \log n }$ nodes.
    The latter fact satisfies the hypothesis of Theorem \ref{thm:upperbound} for solving Majority consensus in $\bigO (\epsilon^{-2} \log n)$, which constitutes the second phase. 
    
    Recall that, from the definition of \textsc{NoisyBroadcast}, during the first phase all (non-source) nodes displays value 0. 
    Hence, if the source's input value is also 0, each pulled message is equal to 1 with probability $\frac 12 - \epsilon$. 
    Otherwise, if the source's input value is $1$, each pulled message is equal to 1 with probability\footnote{We remark that, in order to simplify calculations, we assume that each node may pull itself with probability $\frac 1n$.} $\frac 12 - \epsilon(1-\frac 1n) + \frac {\epsilon}{n}$. 
    
    Let us call the two aforementioned configurations $\conf 0$ and $\conf 1$, respectively. 
    
    By the Berry-Esseen Theorem (Theorem \ref{thm:berreyesseen}), the distribution of the $n\log n$ pulled messages in $\conf 0$ and $\conf 1$ is given (up to a $\bigO (\frac 1{\sqrt{n}})$ error in total variation), by the following two normal distribution: the first, with expectation $(\frac 12 - \epsilon) \frac {n \log n}{\epsilon^2} $ and variance less than $\frac 1{2\epsilon} \sqrt{ n \log n}$; the second, with expectation $(\frac 12 - \epsilon(1-\frac 1n) + \frac {\epsilon}{n}) \frac {n \log n}{\epsilon^2} $ and variance less than $\frac 1{2\epsilon}\sqrt{ n \log n}$. 
    
    In order to choose value $1$ at the end of the phase, the rule of the protocol requires that, in $\conf 0$, the number of received one exceeds the expectation $(\frac 12 - \epsilon) \frac {n \log n}{\epsilon^2} $ by an additive factor $ \frac{\log n}{2 \epsilon}$. 
    A direct calculation shows that the value of the density $f(z) = \frac{1}{\sqrt{2\pi \sigma^2}} e^{\frac{(z-\mu)^2}{2\sigma^2}}$ of a normal distribution $N(\mu, \sigma)$ varies by at most a constant multiplicative factor within a standard deviation $\sigma$ from its expected value $\mu$. 
    In other words
    % \begin{equation*}
    %     \begin{cases}
    %       \int_{x}^{\infty}\frac{1}{\sqrt{2\pi \sigma^2}} e^{\frac{(z-\mu)^2}{2\sigma^2}} dz = \frac 12 - \Omega(\frac{x-\mu}{\sigma}) 
    %         & \text{ if }x \in (\mu, \mu + \sigma),\\
    %       \int_{x}^{\infty}\frac{1}{\sqrt{2\pi \sigma^2}} e^{\frac{(z-\mu)^2}{2\sigma^2}} dz = \frac 12 + \Omega(\frac{\mu-x}{\sigma}) 
    %         & \text{ if }x \in (\mu - \sigma, \mu).
    %     \end{cases}
    % \end{equation*}
    
    \[
    \int_{x}^{\infty}\frac{1}{\sqrt{2\pi \sigma^2}} e^{\frac{(z-\mu)^2}{2\sigma^2}} dz = \frac 12 - \Omega(\frac{x-\mu}{\sigma}) 
              \text{ if }x \in (\mu, \mu + \sigma),
    \]
    
    \[
    \int_{x}^{\infty}\frac{1}{\sqrt{2\pi \sigma^2}} e^{\frac{(z-\mu)^2}{2\sigma^2}} dz = \frac 12 + \Omega(\frac{\mu-x}{\sigma}) 
            \text{ if }x \in (\mu - \sigma, \mu).
    \]
    Hence, the probability that, pulling from $\conf 0$, a node chooses $1$ at the end of the first phase, is $\frac 12 - \Omega(\sqrt{\frac{\log n}{n}})$ and the probability that, pulling from $\conf 1$, a node chooses $1$ at the end of the first phase, is $\frac 12 + \Omega(\sqrt{\frac{\log n}{n}})$. 
    
    The proof finally follows from a standard application of the Chernoff bounds (Appendix \ref{apx:cbmf}), to show concentration of probability around the expected value $\frac n2 - \Omega(\sqrt{n{\log n}})$ of nodes supporting $1$ when starting from $\conf 0$, and the expected value $\frac n2 + \Omega(\sqrt{n{\log n}})$ of nodes supporting $1$ when starting from $\conf 1$. 
\end{proof}

\bibliographystyle{plain}
\bibliography{cn}

\newpage
\appendix
\begin{center}
\LARGE{\textbf{Appendix}}
\end{center}
\section{Technical Tools}

\subsection{Symmetry breaking lemma}
\label{lemma4.5}
\begin{lemma}[Lemma 4.5 in \cite{clementi_tight_2017}]
\label{lemma:symmetrygeneric}
    Let $\{X_{t}\}_{t\in \mathbb{N}}$ be a Markov Chain with finite state space $\Omega$ and let 
    $f:\Omega\mapsto[0,n]$ be a function that maps states to integer values. 
    Let $c_3$ be any positive constant
    and let $m = c_3\sqrt{n}\log n$ be a target value.
    Assume the following properties hold:
    \begin{enumerate}
    \item For any positive constant $h$, a positive constant $c_1 < 1$ exists such
    that, for any $x \in \Omega$ with $f(x) < m$, it holds that
    \[
        \Prob{f(X_{t+1}) < h\sqrt{n} \cond X_{t} = x} < c_1\,,
    \]
    
    \item Two positive constants $\epsilon, c_2$ exist such that, for any $x \in
    \Omega$ with $f(x) < m$, it holds that
    \[
        \Prob{f(X_{t+1}) < (1+\epsilon)f(X_{t})\cond X_{t} = x} < e^{-c_2f(x)^2/n}\,.
    \]
    \end{enumerate}
    Then the process reaches a state $x$ such that $f(x) \geq m$ within 
    $\bigO(\log n)$ rounds, w.h.p.
\end{lemma}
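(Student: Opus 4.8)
The plan is to split the run into two conceptual phases matching the two hypotheses: a \emph{symmetry-breaking} phase that uses Property~1 to push $f$ up to order $\sqrt{n}$, and an \emph{amplification} phase that uses Property~2 to blow $f$ up from $\Theta(\sqrt{n})$ to the target $m = c_3\sqrt{n}\log n$. The quantitative feature I would exploit throughout is that the failure probability $e^{-c_2 f^2/n}$ in Property~2 decays doubly-exponentially as $f$ grows, so that once $f$ is comfortably above $\sqrt{n}$ the process almost surely keeps growing and essentially never falls back.

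First I would fix a large constant $h$ (chosen below) and invoke Property~1 with this $h$: whenever the current state $x$ satisfies $f(x) < m$, the next state has $f \geq h\sqrt{n}$ with probability at least $1 - c_1 > 0$, and this holds \emph{independently of the history} by the Markov property. Hence the first time $\tau_1$ at which $f \geq h\sqrt{n}$ satisfies $\Prob{\tau_1 > t} \leq c_1^{\,t}$, so $\tau_1 = O(\log n)$ w.h.p. This already consumes the $O(\log n)$ budget; the remaining work is to show that the amplification phase, once launched, is both fast and reliable.

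For the amplification phase I would argue that, starting from $f \geq h\sqrt{n}$, the potential multiplies by $(1+\epsilon)$ each round with only a small chance of failure. Conditioning on $f$ reaching the scale $2^{j}h\sqrt{n}$, Property~2 bounds the per-round failure probability by $e^{-c_2 (1+\epsilon)^{2j} h^2}$, and I would sum this over the $O(\log\log n)$ doublings needed to climb from $h\sqrt{n}$ to $c_3\sqrt{n}\log n$. Since the exponents grow geometrically in $j$, the sum is dominated by its first term and is $O(e^{-c_2 h^2})$; choosing $h$ large makes a single uninterrupted climb succeed with probability bounded away from $0$. Moreover, as soon as $f$ crosses the scale $\Theta(\sqrt{n\log n})$ the per-round failure probability is already $n^{-\Omega(1)}$, so from that point a union bound over the $O(\log\log n)$ remaining rounds shows that $f$ reaches $m$ with no further setback, w.h.p.

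The main obstacle is that Property~2 only lower-bounds the probability of \emph{growth}; it gives no control on how far $f$ can drop on a failure, so a single setback may in principle return $f$ to $0$ and force the symmetry-breaking phase to restart. I would handle this with a renewal argument: an ``attempt'' consists of one escape step (Property~1) followed by the climb (Property~2), each attempt reaches $m$ with probability bounded below by a constant, and a failed attempt simply lands in a state with $f < m$, where Property~1 applies afresh. Bounding the number of attempts by the geometric tail and charging the escape steps and the short climbs against the overall horizon then yields the claimed $O(\log n)$-round bound with high probability; the delicate point to get right is precisely that the risky ``launch'' region $[\,h\sqrt{n},\,\Theta(\sqrt{n\log n})\,]$ is traversed reliably, which is exactly what the doubly-exponential decay of the setback probability buys us.
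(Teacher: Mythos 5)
You should first note that the paper itself does not prove this lemma: it is imported verbatim as a technical tool (``Lemma 4.5 in \cite{clementi_tight_2017}'') and used as a black box in the proof of Lemma~\ref{thm:breaking}, so there is no in-paper proof to compare against. Judged on its own merits, your architecture is the right one and matches the standard argument for statements of this type: a constant-probability jump to level $h\sqrt{n}$ via Property~1, a multiplicative climb whose per-stage failure probability $\exp(-c_2 h^2 (1+\epsilon)^{2j})$ decays doubly exponentially via Property~2, a union bound once $f = \Omega(\sqrt{n\log n})$ makes the per-round failure probability polynomially small, and a restart (renewal) structure to handle the fact that Property~2 gives no control on how far $f$ drops after a setback. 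All of these steps are sound as you state them.

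The genuine gap is in the round-accounting of your renewal argument. Each attempt succeeds only with probability bounded below by a \emph{constant}, so to drive the failure probability down to $n^{-\alpha}$ with $\alpha>1$ (the paper's definition of w.h.p.) you must allow $\Theta(\log n)$ attempts; since a single attempt can last $\Theta(\log\log n)$ rounds (the climb from $h\sqrt{n}$ to $c_3\sqrt{n}\log n$ takes $\log_{1+\epsilon}\log n$ growth rounds), the naive bound gives a horizon of $O(\log n \log\log n)$, and conversely within a $C\log n$ horizon you can only guarantee $\Omega(\log n/\log\log n)$ attempts, yielding failure probability $c^{\log n/\log\log n} = n^{-O(1/\log\log n)}$, which is \emph{not} w.h.p.\ in the paper's sense. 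What rescues the argument---and what your write-up asserts but does not prove---is that \emph{failed} attempts are short: an attempt that fails in its $j$-th climb round must first survive $j-1$ growth rounds, so by Property~2 it fails at that stage with probability at most $\exp\left(-c_2 h^2 (1+\epsilon)^{2(j-1)}\right)$; hence the length of a failed attempt has constant expectation and a doubly-exponential tail, and a standard moment-generating-function/Chernoff bound shows that $\Theta(\log n)$ attempts complete within $O(\log n)$ total rounds w.h.p., with only the one successful attempt paying the $\Theta(\log\log n)$ cost. Your phrase ``charging the escape steps and the short climbs against the overall horizon'' gestures at exactly this, but without the tail bound on failed-attempt lengths the claimed $O(\log n)$ bound does not follow. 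Two minor points: your scales are written as $2^{j}h\sqrt{n}$ while the exponent uses $(1+\epsilon)^{2j}$---harmless, but align them---and the threshold $K\sqrt{n\log n}$ delimiting the ``safe'' region should be fixed with $c_2K^2 > 2$, say, so that the final union bound meets the $n^{-\alpha}$, $\alpha>1$, requirement.
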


\subsection{Berry-Esseen Theorem}
\label{berry-esseen}

\begin{theorem}[Berry-Esseen]
    \label{thm:berreyesseen}
    Let $X_1,\ldots,X_n$ be independent and identically distributed random variables with mean $\mu=0$, variance $\sigma^2 > 0$, and third absolute moment $\rho < \infty$.
    Let $Y_n = \frac{1}{n}\sum_{i=1}^{n}X_i$; 
    let $F_n$ be the cumulative distribution function of $\frac{Y_n\sqrt{n}}{\sigma}$;
    let $\Phi$ the cumulative distribution function of the standard normal distribution.
    Then, there exists a positive constant $C < 0.4748$ such that, for all $x$ and for all $n$, 
    \[
        \vert F_n(x) - \Phi(x) \vert \leq \frac{C \rho}{\sigma^3 \sqrt{n}}.
    \]
\end{theorem}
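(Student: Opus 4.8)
The plan is to prove this by the Fourier-analytic (characteristic function) method, the classical route to Berry--Esseen bounds. Write $S_n=\sum_{i=1}^n X_i$ and observe that the random variable whose CDF is $F_n$ is the standardized sum $Z_n = S_n/(\sigma\sqrt n) = Y_n\sqrt n/\sigma$. Let $\varphi(t)=\Ex{e^{itX_1}}$ be the common characteristic function of the $X_i$. By independence, the characteristic function of $Z_n$ is $\phi_n(t)=\varphi(t/(\sigma\sqrt n))^n$, while the characteristic function of the standard normal (whose CDF is $\Phi$) is $e^{-t^2/2}$. The whole argument is to show that these two functions are close on a sufficiently long interval and then to transfer that closeness back to the distribution functions themselves.

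The transfer step is Esseen's smoothing inequality: for every $T>0$,
\[
\sup_x |F_n(x)-\Phi(x)| \le \frac{1}{\pi}\int_{-T}^{T}\left|\frac{\phi_n(t)-e^{-t^2/2}}{t}\right|\,dt + \frac{24}{\pi T}\sup_x \Phi'(x),
\]
where $\sup_x\Phi'(x)=1/\sqrt{2\pi}$. I would either include a proof of this inequality (it follows by convolving $F_n-\Phi$ with a smooth kernel of compactly supported spectrum and applying Fourier inversion) or cite it. It then suffices to bound the integrand on $|t|\le T$, with $T$ chosen of order $\sigma^3\sqrt n/\rho$; the boundary term is then automatically $O(\rho/(\sigma^3\sqrt n))$.

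The core estimate, and the main obstacle, is the bound on $|\phi_n(t)-e^{-t^2/2}|$. First Taylor-expand $\varphi$ near $0$: using $\Ex{X_1}=0$, $\Ex{X_1^2}=\sigma^2$, and $\Ex{|X_1|^3}=\rho$, one has the standard remainder bound $|\varphi(s)-(1-\sigma^2 s^2/2)|\le \rho|s|^3/6$. Substituting $s=t/(\sigma\sqrt n)$ gives $\varphi(t/(\sigma\sqrt n))=1-t^2/(2n)+R$ with $|R|\le \rho|t|^3/(6\sigma^3 n^{3/2})$. I would then compare the $n$-th powers via the elementary complex inequality $|a^n-b^n|\le n|a-b|\max(|a|,|b|)^{n-1}$ with $a=\varphi(t/(\sigma\sqrt n))$ and $b=e^{-t^2/(2n)}$. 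On the range $|t|\le T$ one controls $|a|\le e^{-t^2/(3n)}$ and $|b|=e^{-t^2/(2n)}$, so $\max(|a|,|b|)^{n-1}\le e^{-t^2/4}$ for $n$ large, while $|a-b|\le c\,\rho|t|^3/(\sigma^3 n^{3/2})$, yielding $|\phi_n(t)-e^{-t^2/2}|\le c'\,\frac{\rho|t|^3}{\sigma^3\sqrt n}\,e^{-t^2/4}$. Making every constant here fully explicit and uniform in $n$ is the delicate part.

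Finally I would substitute this into the smoothing inequality. The integral contributes
\[
\frac{1}{\pi}\int_{-T}^{T} c'\,\frac{\rho\, t^2}{\sigma^3\sqrt n}\,e^{-t^2/4}\,dt \le \frac{c'\rho}{\pi\sigma^3\sqrt n}\int_{-\infty}^{\infty} t^2 e^{-t^2/4}\,dt = C_1\,\frac{\rho}{\sigma^3\sqrt n},
\]
a finite constant times $\rho/(\sigma^3\sqrt n)$, since $\int t^2 e^{-t^2/4}\,dt$ converges. Choosing $T=c''\sigma^3\sqrt n/\rho$ makes the boundary term $\frac{24}{\pi T\sqrt{2\pi}}$ equal to $C_2\,\rho/(\sigma^3\sqrt n)$, so altogether $\sup_x|F_n(x)-\Phi(x)|\le C\,\rho/(\sigma^3\sqrt n)$. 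This already gives the asserted $O(\rho/(\sigma^3\sqrt n))$ rate with an explicit constant; obtaining the sharp numerical value $C<0.4748$ requires replacing the crude power and remainder estimates above with their optimized versions, as in the refined analyses of Esseen and his successors, which I would invoke rather than re-derive.
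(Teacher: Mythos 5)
This statement appears in the paper's appendix as a quoted classical technical tool and is given \emph{no proof} there---the explicit constant $C<0.4748$ is Shevtsova's refinement of the Berry--Esseen constant, which the paper simply imports---so there is no internal argument to compare yours against. Your sketch is the standard Fourier-analytic proof: Esseen's smoothing inequality, the third-moment Taylor bound $|\varphi(s)-(1-\sigma^2 s^2/2)|\le \rho|s|^3/6$, the elementary estimate $|a^n-b^n|\le n\,|a-b|\max(|a|,|b|)^{n-1}$ on $|t|\le T$ with $T \asymp \sigma^3\sqrt{n}/\rho$, and integration of the resulting bound $c'\rho t^2 \sigma^{-3} n^{-1/2} e^{-t^2/4}$. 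This is correct in outline and does yield $\sup_x|F_n(x)-\Phi(x)|\le C\rho/(\sigma^3\sqrt{n})$ for some absolute constant $C$. Two points need tightening for the statement as written. First, your bounds $\max(|a|,|b|)^{n-1}\le e^{-t^2/4}$ ``for $n$ large'' and $|a|\le e^{-t^2/(3n)}$ (which requires $\rho|t|/(\sigma^3\sqrt{n})$ small, i.e.\ a small enough $c''$ in the choice of $T$) leave out small $n$, while the theorem asserts the bound for \emph{all} $n$; this is repaired by noting that $\rho/\sigma^3\ge 1$ by Lyapunov's inequality, so for $n\le n_0$ the claimed inequality is trivial once $C\ge\sqrt{n_0}$, and one simply enlarges the final constant. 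Second---and you correctly flag this---no amount of care in these crude estimates produces the sharp numerical value $0.4748$; that constant comes from the heavily optimized analyses in the line of Esseen, van Beek, and Shevtsova, and invoking them rather than re-deriving them is the right call, which is in effect exactly what the paper itself does by stating the theorem without proof.
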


\subsection{Chernoff Bound multiplicative form} \label{apx:cbmf}

Let $X_1, \dots, X_n$ be independent 0-1 random variables. Let $X = \sum_{i=1}^n X_i$ and $\mu \leq \EEx{}{X} \leq \mu'$. Then, for any $0<\delta<1$ the following Chernoff bounds hold:
\begin{equation}\Prob{X \geq (1+ \delta)\mu} \leq e^{-\mu \delta^2 /3}.\label{eq:cbmfgeq}
\end{equation}
\begin{equation}\Prob{X \leq (1- \delta)\mu'} \leq e^{-\mu' \delta^2 /2}.\label{eq:cbmfleq}
\end{equation}
\subsection{Chernoff Bound additive form} \label{apx:cbaf}

Let $X_1, \dots, X_n$ be independent 0-1 random variables. Let $X = \sum_{i=1}^n X_i$ and $\mu = \EEx{}{X}$. Then the following Chernoff bounds hold:

\noindent for any $0<\lambda<n-\mu$,
\begin{equation}\Prob{X \leq \mu - \lambda} \leq e^{-2\lambda^2/n},\label{eq:cbafleq}
\end{equation}
for any $0<\lambda<\mu$,
\begin{equation}\Prob{X \geq \mu + \lambda} \leq e^{-2\lambda^2/n}.\label{eq:cbafgeq}
\end{equation}

\end{document}